\definecolor{cA}{RGB}{190,210,255}
\definecolor{cB}{RGB}{200,110,100}
\definecolor{cAB}{RGB}{190,120,190}
\tikzset{ %
	V/.style={circle,draw=black,inner sep=0pt,minimum size=8pt,fill=gray!20},
	A/.style={V,fill=cA},
	B/.style={V,fill=cB},
	ab/.style={fill=cAB,shading=true,left color=cA,right color=cB},
	AB/.style={V,ab},
	T/.style={V,rectangle},
	ABT/.style={T,ab},
	As/.style={draw=cA!50,fill=cA!20,fill opacity=0.5},
	Bs/.style={draw=cB!50,fill=cB!20,fill opacity=0.5},
	zeroborder/.style={draw=gray},
	non/.style={draw=gray,loosely dotted},
	term/.style={draw=gray,decorate,decoration={zigzag}},
}
\newcommand{\Azero}{%
\draw[draw=none,fill=cA] (-2,0) arc (180:360:2 and 0.4) -- (-2,0);%
}
\newcommand{\Bzero}{%
\draw[draw=none,fill=cB] (-2,0) arc (180:0:2 and 0.4) -- (-2,0);%
}
\newcommand{\figspace}{\vspace*{-20pt}} %tikz pictures are clear-cut, they don't need large margin below
\newtheorem{theorem}{Theorem}[section]
\newtheorem{lemma}[theorem]{Lemma}
\newtheorem{myclaim}[theorem]{Claim}
\theoremstyle{definition}
\newtheorem{definition}[theorem]{Definition}
\newtheorem{reduction}{Reduction}
\newtheorem{assumption}{Assumption}
\newtheorem{branching}{Branching step}
\newtheorem{reductionstep}[branching]{Reduction step}
\def\cqedsymbol{\ifmmode$\lrcorner$\else{\unskip\nobreak\hfil
\penalty50\hskip1em\null\nobreak\hfil$\lrcorner$
\parfillskip=0pt\finalhyphendemerits=0\endgraf}\fi}
\newcommand{\executeiffilenewer}[3]{%
\ifnum\pdfstrcmp{\pdffilemoddate{#1}}%
{\pdffilemoddate{#2}}>0%
{\immediate\write18{#3}}\fi%
} 
\newcommand{%
\executeiffilenewer{figures/.svg}{figures/.pdf}%
{inkscape -z -D --file=figures/.svg %
--export-pdf=figures/.pdf --export-latex}%
{\input{figures/.pdf_tex}}}[1]{%
\executeiffilenewer{figures/#1.svg}{figures/#1.pdf}%
{inkscape -z -D --file=figures/#1.svg %
--export-pdf=figures/#1.pdf --export-latex}%
{\input{figures/#1.pdf_tex}}}%
\newcommand{\oct}{\textsc{OCT}\xspace}
\newcommand{\oddcyc}{\textsc{Odd Cycle Transversal}\xspace}
\newcommand{\edgebip}{\textsc{Edge Bipartization}\xspace}
\newcommand{\edgebipcomp}{\textsc{Edge Bipartization Compression}\xspace}
\newcommand{\termsep}{\textsc{Terminal Separation}\xspace}
\newcommand{\vcalp}{\textsc{VC}-above-\textsc{LP}\xspace}
\newcommand{\Oh}{\mathcal{O}}
\newcommand{\Ohstar}{\Oh^\star}
\newcommand{\terms}{\mathcal{T}}
\newcommand{\Acsp}{\mathbf{A}}
\newcommand{\Bcsp}{\mathbf{B}}
\newcommand{\Az}{A^\circ}
\newcommand{\Bz}{B^\circ}
\newcommand{\Aopt}{A^\ast}
\newcommand{\Bopt}{B^\ast}
\newcommand{\wsp}{\alpha}
\newcommand{\inst}{\mathcal{I}}
\newcommand{\defproblemG}[3]{
  \vspace{2mm}
\noindent\fbox{
  \begin{minipage}{0.96\textwidth}
  #1 \\
  {\bf{Input:}} #2  \\
  {\bf{Goal:}} #3
  \end{minipage}
  }
  \vspace{2mm}
}
\title{Edge Bipartization faster than $2^k$}
\author{%
Marcin Pilipczuk\thanks{University of Warsaw, Poland, \texttt{malcin@mimuw.edu.pl}.
Partially supported by the Centre for Discrete Mathematics and its Applications (DIMAP) at the University of Warwick and by Warwick-QMUL Alliance in Advances in Discrete Mathematics and its Applications.}
\and
Micha\l{} Pilipczuk\thanks{University of Warsaw, Poland, \texttt{michal.pilipczuk@mimuw.edu.pl}. Supported by the Polish National Science Centre grant DEC-2013/11/D/ST6/03073 and Foundation for Polish Science via the START stipend program. During the work on these results, Micha\l{} Pilipczuk has been holding a post-doc position of Warsaw Centre of Mathematics and Computer Science.}
\and
Marcin Wrochna\thanks{University of Warsaw, Poland, \texttt{m.wrochna@mimuw.edu.pl}. Supported by the Polish National Science Centre grant DEC-2013/11/D/ST6/03073.}}
\date{}
\begin{document}
\pagenumbering{gobble}
\thispagestyle{empty}

\date{}

\maketitle

\begin{abstract}
In the {\sc{Edge Bipartization}} problem one is given an undirected graph $G$ and an integer~$k$, and the question is whether $k$ edges can be deleted from $G$ so that it becomes bipartite. In~2006, Guo et al.~\cite{eb-2k} proposed an algorithm solving this problem in time $\Oh(2^k\cdot m^2)$; today, this algorithm is a textbook example of an application of the iterative compression technique. Despite extensive progress in the understanding of the parameterized complexity of graph separation problems in the recent years, no significant improvement upon this result has been yet reported. 

We present an algorithm for {\sc{Edge Bipartization}} that works in time $\Oh(1.977^k\cdot nm)$, which is the first algorithm with the running time dependence on the parameter better than $2^k$. To this end, we combine the general iterative compression strategy of Guo et al.~\cite{eb-2k}, the technique proposed by Wahlstr\"om~\cite{magnus} of using a polynomial-time solvable relaxation in the form of a Valued Constraint Satisfaction Problem to guide a bounded-depth branching algorithm, and an involved Measure\&Conquer analysis of the recursion tree.
\end{abstract}

\clearpage

\pagenumbering{arabic}

\section{Introduction}
The \edgebip problem asks, for a given graph $G$ and integer $k$, whether one can turn $G$ into a bipartite graph using at most $k$ edge deletions. Together with its close relative \oddcyc (\oct), where one deletes vertices instead of edges, \edgebip was one of the first problems shown to admit a fixed-parameter (FPT) algorithm using the technique of {\em{iterative compression}}. In a breakthrough paper~\cite{reed-ic} that introduces this methodology, Reed et al. showed how to solve \oct in time $\Oh(3^k\cdot kmn)$\footnote{Even though Reed et al.~\cite{reed-ic} state their running time as $\Oh(4^k\cdot kmn)$, it is not hard to adjust the analysis to show that the algorithm in fact works in time $\Oh(3^k\cdot kmn)$; see e.g.~\cite{huffner,LokshtanovSS09}.}. In fact, this was the first FPT algorithm for \oct. Following this, Guo et al.~\cite{eb-2k} applied iterative compression to show fixed-parameter tractability of several closely related problems. Among other results, they designed an algorithm for \edgebip with running time $\Oh(2^k\cdot m^2)$. Today, both the algorithms of Reed et al. and of Guo et al. are textbook examples of the iterative compression technique~\cite{our-book,downey-fellows}.

Iterative compression is in fact a simple idea that boils down to an algorithmic usage of induction. In case of \edgebip, we introduce edges of $G$ one by one, and during this process we would like to maintain a solution $F$ to the problem, i.e., $F\subseteq E(G)$ is such that $|F|\leq k$ and $G-F$ is bipartite. When the next edge $e$ is introduced to the graph, we observe that $F\cup \{e\}$ is a solution of size at most $k+1$, that is, at most one too large. Then the task reduces to solving \edgebipcomp: given a graph $G$ and a solution that exceeds the budget by at most one, we are asked to find a solution that fits into the budget. 

Surprisingly, this simple idea leads to great algorithmic gains, as it reduces the matter to a cut problem. Guo et al.~\cite{eb-2k} showed that a simple manipulation of the instance reduces \edgebipcomp to the following problem that we call \termsep: We are given an undirected graph $G$ with a set $\terms$ of $k+1$ disjoint pairs of terminals, where each terminal is of degree $1$ in $G$. The question is whether one can color one terminal of every pair white and the second black in such a way that the minimum edge cut between white and black terminals is at most $k$. Thus, the algorithm of Guo et al.~\cite{eb-2k} boils down to trying all the $2^{k+1}$ colorings of terminals and solving a minimum edge cut problem. For $\oct$, we similarly have a too large solution $X\subseteq V(G)$ of size $k+1$, and we are looking for a partition of $X$ into $(L,R,Z)$, where the size of the minimum vertex cut between $L$ and $R$ in $G-Z$ is at most $k-|Z|$. Thus it suffices to solve $3^{k+1}$ instances of a flow problem.

The search for FPT algorithms for cut problems has been one of the leading directions in parameterized complexity in the recent years. Among these, \oddcyc and \edgebip play a central role; see for instance~\cite{eb-2k,huffner,kebab,eb-approx,KolayP0S16,oct-kernel,LP-guided-dl,reed-ic} and references therein. Of particular importance is the work of Kratsch and Wahlstr\"om~\cite{oct-kernel}, who gave the first (randomized) polynomial kernelization algorithms for \oddcyc and \edgebip. The main idea is to encode the cut problems that arise when applying iterative compression into a matroid with a representation that takes small space. The result of Kratsch and Wahlstr\"om sparked a line of further work on applying matroid methods in parameterized complexity.

%\todo[inline]{Marcin P.: tu bym wstawil jakis akapit o SETHu i pracy z CCC i pracy o Known Algorihtms Bounded Treewidth.}

Another thriving area in parameterized complexity is the {\em{optimality program}}, probably best defined by Marx in~\cite{marx-opt}. The goal of this program is to systematically investigate the optimum complexity of algorithms for parameterized problem by proving possibly tight lower and upper bounds. For the lower bounds methodology, the standard complexity assumptions used are the {\em{Exponential Time Hypothesis}} ({\em{ETH}}) and the {\em{Strong Exponential Time Hypothesis}} ({\em{SETH}}). In the recent years, the optimality program has achieved a number of successes. For instance, under the assumption of SETH, we now know the precise bases of exponents for many classical problems parameterized by treewidth~\cite{ham-tw-opt,cut-and-count,tw-opt}. To explain the complexity of fundamental parameterized problems for which natural algorithms are based on dynamic programming on subsets, Cygan et al.~\cite{secoco} introduced a new hypothesis resembling SETH, called the {\em{Set Cover Conjecture}} ({\em{SeCoCo}}). See~\cite{bulletin,marx-opt} for more examples.

For our techniques, the most important is the line of work of Guillemot~\cite{miszczu}, Cygan et al.~\cite{above-LP}, Lokshtanov et al.~\cite{LP-guided-dl}, and Wahlstr\"om~\cite{magnus} that developed a technique for designing parameterized algorithm for cut problems called {\em{LP-guided branching}}. The idea is to use the optimum solution to the linear programming relaxation of the considered problem in order to measure progress. Namely, during the construction of a candidate solution by means of a backtracking process, the algorithm achieves progress not only when the budget for the size of the solution decreases (as is usual in branching algorithms), but also when the lower bound on the optimum solution increases. Using this concept, Cygan et al.~\cite{above-LP} showed a $2^k n^{\Oh(1)}$-time algorithm for {\sc{Node Multiway Cut}}. Lokshtanov et al.~\cite{LP-guided-dl} further refined this technique and applied it to improve the running times of algorithms for several important cut problems. In particular, they obtained a $2.3146^k n^{\Oh(1)}$-time algorithm for \oddcyc, which was the first improvement upon the classic $\Oh(3^k\cdot kmn)$-time algorithm of Reed et al.~\cite{reed-ic}. From the point of view of the optimality program, this showed that the base $3$ of the exponent was not the final answer for \oddcyc.

In works~\cite{above-LP,LP-guided-dl} it was essential that the considered linear programming relaxation is half-integral, which restricts the applicability of the technique. Recently, Wahlstr\"om~\cite{magnus} proposed to use stronger relaxations in the form of certain polynomial-time solvable Valued Constraint Satisfaction Problems (VCSPs). Using this idea, he was able to show efficient FPT algorithms for the node and edge deletion variants of {\sc{Unique Label Cover}}, for which natural LP relaxations are not half-integral.

Despite substantial progress on the node deletion variant, for \edgebip there has been no improvement since the classic algorithm of Guo et al.~\cite{eb-2k} that runs in time $\Oh(2^k\cdot m^2)$. 
The main technical contribution of Lokshtanov et al.~\cite{LP-guided-dl} is a $2.3146^k n^{\Oh(1)}$-time algorithm for \textsc{Vertex Cover}
parameterized by the excess above the value of the LP relaxation (\vcalp); the algorithm for \oct is a corollary of this result due to folklore reductions
from \oct to \vcalp via the \textsc{Almost 2-SAT} problem. Thus the algorithm for \oct in fact relies on the LP relaxation for the \textsc{Vertex Cover} problem,
which has very strong combinatorial properties; in particular, it is half-integral. 
No such simple and at the same time strong relaxation is available for \edgebip.
The natural question stemming from the optimality program, whether the $2^k$ term for \edgebip could be improved, was asked repeatedly in the parameterized complexity community, for example by Daniel Lokshtanov at WorKer'13~\cite{open-update},
repeated later at~\cite{open-schl}.

%, similarly as pruning the space of all true/false assignments in the {\sc{CNF-SAT}} problem. The hardness of the latter is the base of the {\em{Strong Exponential Time Hypothesis}}, which is now a growingly more widely adopted hypothesis about the non-existence of algorithms for {\sc{CNF-SAT}} substantially breaking the $\Ohstar(2^n)$ barrier. In this light, it is a natural question whether the $\Ohstar(2^k)$ barrier for \edgebip could be broken. This question was asked repeatedly in the parameterized complexity community, see e.g.~\cite{open-schl,open-update}.

\paragraph*{Our results and techniques.}

In this paper we answer this question in affirmative by proving the following theorem.

\begin{theorem}\label{thm:main}
\edgebip{} can be solved in time $\Oh(1.977^k \cdot nm)$.
\end{theorem}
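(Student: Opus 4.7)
The plan is to follow the high-level iterative compression scheme of Guo et al.~\cite{eb-2k}, reducing \edgebip to \edgebipcomp and then to \termsep: given a graph $G$ with $k+1$ degree-$1$ terminal pairs, color one terminal of each pair white and the other black so that the minimum white/black edge cut has size at most $k$. The brute-force enumeration of the $2^{k+1}$ colorings combined with a max-flow computation yields the classical $\Oh(2^k\cdot m^2)$ bound, and the whole point is to replace this enumeration with a smarter, bounded-depth branching guided by a polynomial-time-solvable relaxation, in the spirit of Wahlstr\"om~\cite{magnus}.

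The central construction will be a VCSP relaxation $\wsp(\inst)$ of a \termsep{} instance $\inst$: each terminal is assigned a Boolean label, each terminal pair is constrained to receive opposite labels (a $\neq$-type hard constraint), and each edge contributes an absolute-value-like soft cost paying when its endpoints disagree. Allowing a richer valuation domain (e.g., a suitable bisubmodular or $L^\natural$-convex extension on $\{0,\tfrac12,1\}$) one obtains a tractable VCSP whose optimum $\wsp(\inst)$ lower-bounds the integer cut cost, is always at most~$k$ on \textsc{Yes}-instances, and enjoys persistence: if some fractional optimum assigns an integral value to a given terminal, there is an integral optimum agreeing with that choice. This is the relaxation analogue that played the role of the half-integral vertex cover LP in~\cite{LP-guided-dl} for \oct.

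With such a relaxation in hand, the algorithm maintains the invariant that the measure $\mu(\inst)=k-\wsp(\inst)$ is non-negative, and branches by picking a not-yet-committed terminal pair $(t_1,t_2)$ and fixing its colors in two ways. Persistence guarantees that neither branch is wasted: in each child we either commit a pair whose color was forced by the relaxation (which, via standard reductions to a smaller \termsep{} instance, lets us decrease $k$ and hence $\mu$) or we break a fractional symmetry, strictly raising $\wsp$ by a quantifiable amount and again decreasing $\mu$. A straightforward branching rule applied uniformly will only recover $2^\mu$; to push the base below $2$, I would design several local reduction and branching rules tuned to the structure around $(t_1,t_2)$ (short augmenting paths, shared neighborhoods of terminals, small separators) so that in every branching case the relaxation value provably jumps by more than just the rounding slack, giving a branching vector whose characteristic polynomial has largest root below $1.977$.

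The main obstacle I expect lies precisely in these two steps. First, proving polynomial-time solvability and, crucially, the persistence property of the chosen VCSP relaxation is non-trivial: \termsep{} mixes a $\neq$-constraint structure with an edge-cut objective, and not every natural relaxation will be tractable or tight enough to drive the branching. Second, the Measure \& Conquer part: to certify the $1.977^k$ bound one typically assigns weights to the possible states of each terminal pair (free, half-committed, committed) and to local features of the instance, then solves a small numerical optimization so that all branching recurrences simultaneously yield the same base. Balancing these weights against the gain in $\wsp$ in each case---and doing so while the side cost of recomputing $\wsp(\inst)$ stays within an $nm$ factor per node of the recursion tree---is where the delicate work will sit.
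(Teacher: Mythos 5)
Your high-level picture is the right one: iterative compression down to \termsep, a tractable persistent VCSP relaxation (the bisubmodular-valued relaxation of Wahlstr\"om), and a Measure\&Conquer analysis of an LP-guided branching tree. But there is a genuine gap in the core of your proposal, namely the choice of measure, and it is precisely the place where the paper's contribution lives.

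You write that you would maintain the single measure $\mu(\inst)=k-\wsp(\inst)$ and that ``a straightforward branching rule applied uniformly will only recover $2^\mu$.'' This is not what that measure gives you. The Wahlstr\"om-style relaxation increases by at least $\tfrac12$ in each branch (that is the rounding slack you mention), so the recursion depth is up to $2\mu\le 2k$ and the naive bound is $4^k$, not $2^k$. Starting from $4^k$ and trying to push all the way down to $1.977^k$ by making the relaxation ``jump more than $\tfrac12$'' in every case is not how the improvement is obtained, and in the hard cases there simply is no such jump. The resolution in the paper is to measure progress by a linear combination of \emph{three} quantities: the number $t$ of unresolved terminal pairs, the relaxation slack $\nu=k-c(\Az,\Bz)$, and the budget $k$ itself, with tuned constants $\wsp_t\approx 0.60$, $\wsp_\nu\approx 0.30$, $\wsp_k\approx 0.10$. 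The dominant weight is on $t$: this is what recovers Guo et al.'s $2^{|\terms|}$ baseline, and it is the component that is ``charged'' in the generic branching case where one terminal pair gets resolved. The slack $\nu$ and the raw budget $k$ only contribute the small extra savings that push the base below $2$. Crucially, including $k$ itself in the potential is what makes the \emph{Boundary Reduction} (delete an edge or vertex already sitting between $\Az$ and $\Bz$ and decrement $k$) count as progress; the paper explicitly flags this as the key structural insight, since all the inefficient branching cases are exactly those where $\Az$ and $\Bz$ grow a large common boundary. Your proposal has none of this: your primary measure only tracks $\nu$, and the ``weights on terminal-pair states and local features'' in your last paragraph are gestured at but never connected to a concrete potential, nor to any mechanism that would let a Boundary Reduction contribute.

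Two smaller points. First, your vocabulary of ``free, half-committed, committed'' terminal pairs does not match the combinatorics: under the VCSP relaxation a hard $\ne$-constraint keeps each pair either fully resolved $(\Acsp,\Bcsp)$ or fully $(\bot,\bot)$, so there is no half-committed state, and the analysis hinges on a structural study of low-excess $\Az$-extensions (excess $1$ and $2$) rather than on per-pair states. Second, to achieve the $\Oh(nm)$ polynomial factor you also need to start the iterative compression from an $\Oh(k^2)$-size approximate solution (Kolay et al.) so that only $\Oh(k^2)$ rather than $m$ compression steps are needed; with the vanilla one-edge-at-a-time scheme the factor is $m^2$.
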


Thus, the $2$ in the base of the exponent is not the ultimate answer for \edgebip. 

To prove Theorem~\ref{thm:main}, we pursue the approach proposed by Guo et al.~\cite{eb-2k} and use iterative compression to reduce solving \edgebip to solving \termsep (see Section~\ref{sec:prelims} for a formal definition of the latter). This problem has two natural parameters: $|\terms|$, the number of terminal pairs, and $p$, the bound on the size of the cut between white and black terminals. The approach of Guo et al. is to use a simple $\Oh(2^{|\terms|}\cdot pm)$ algorithm that tries all colorings of terminal pairs and computes the size of a minimum cut between the colors. 

The observation that is crucial to our approach is that one can express \termsep as a very restricted instance of the {\sc{Edge Unique Label Cover}} problem. More precisely, in this setting the task is to assign each vertex of $G$ a label from $\{\Acsp,\Bcsp\}$. Pairs of $\terms$ present hard (of infinite cost) inequality constraints between the labels of terminals involved, while edges of $G$ present soft (of unit cost) equality constraints between the endpoints. The goal is to minimize the cost of the labeling, i.e., the number of soft constraints broken. An application of the results of Wahlstr\"om~\cite{magnus} (with the further improvements of Iwata, Wahlstr\"{o}m, and Yoshida~\cite{magnus-arxiv} regarding linear dependency on the input size) immediately gives an $\Oh(4^p\cdot m)$ algorithm for \termsep.

Thus, we have in hand two substantially different algorithms for $\termsep$. If we plug in $|\terms|=k+1$ and $p=k$, as is the case in the instance that we obtain from \edgebipcomp, then we obtain running times $\Oh(2^k\cdot km)$ and $\Oh(4^k\cdot m)$, respectively. The idea now is that these two algorithms present two complementary approaches to the problem, and we would like to combine them to solve the problem more efficiently. To this end, we need to explain more about the approach of Wahlstr\"om~\cite{magnus}.

The algorithm of Wahlstr\"om~\cite{magnus} is based on measuring the progress by means of the optimum solution to the relaxation of the problem in the form of a Valued CSP instance. In our case, this relaxation has the following form: We assign each vertex a label from $\{\bot,\Acsp,\Bcsp\}$, where $\bot$ is an additional marker that should be thought of as {\em{not yet decided}}. The hard constraints have zero cost only for labelings $(\Acsp,\Bcsp)$, $(\Bcsp,\Acsp)$ and $(\bot,\bot)$, and infinite cost otherwise. The soft constraints have cost $0$ for equal labels on the endpoints, $1$ for unequal from $\{\Acsp,\Bcsp\}$, and $\frac{1}{2}$ when exactly one endpoint is assigned $\bot$. Based on previous results of Kolmogorov, Thapper, and \v{Z}ivn\'y~\cite{thapper-zivny}, Wahlstr\"om observed that this relaxation is polynomial-time solvable, and moreover it is {\em{persistent}}: whenever the relaxation assigns $\Acsp$ or $\Bcsp$ to some vertex, then it is safe to perform the same assignment in the integral problem (i.e., only with the ``integral'' labels $\Acsp,\Bcsp$). The algorithm constructs an integral labeling by means of a backtracking process that fixes the labels of consecutive vertices of the graph. During this process, it maintains an optimum solution to the relaxation that is moreover {\em{maximal}}, in the sense that one cannot extend the current labeling by fixing integral labels on some undecided vertices without increasing the cost. This can be done by dint of persistence and polynomial-time solvability: we can check in polynomial time whether a non-trivial extension exists, and then it is safe to fix the labels of vertices that get decided. Thus, when the algorithm considers the next vertex $u$ and branches into two cases, fixing label $\Acsp$ or $\Bcsp$ on it, the optimum cost of the relaxation increases by at least $\frac{1}{2}$ in each branch. Hence the recursion tree can be pruned at depth $2p$, and we obtain an $4^p n^{\Oh(1)}$-time algorithm.

Our algorithm for \termsep applies a similar branching strategy, where at each point we maintain some labeling of the vertices with $\Acsp$, $\Bcsp$, and $\bot$ (undecided). Every terminal pair is either already {\em{resolved}} (assigned $(\Acsp,\Bcsp)$ or $(\Bcsp,\Acsp)$), or {\em{unresolved}} (assigned $(\bot,\bot)$). Using the insight of Wahlstr\"om we can assume that this labeling is maximal.  Intuitively, we look at the unresolved pairs from $\terms$ and try to identify a pair $(s,t)$ for which branching into labelings $(\Acsp,\Bcsp)$ and $(\Bcsp,\Acsp)$ leads to substantial progress. Here, we measure the progress in terms of a potential $\mu$ that is a linear combination of three components:
\begin{itemize}
\item $t$, the number of unresolved terminal pairs;
\item $k$, the current budget for the cost of the sought integral solution; 
\item $\nu$, the difference between $k$ and the cost of the current solution to the relaxation.
\end{itemize}
These ingredients are taken with weights $\wsp_t = 0.59950$, $\wsp_\nu = 0.29774$, and $\wsp_k = 1-\wsp_t-\wsp_\nu = 0.10276$.
Thus, the largest weight is put on the progress measured in terms of the number of resolved terminal pairs. Indeed, we want to argue that if we can identify a possibility of recursing into two instances, where in each of them at least one new terminal pair gets resolved, but in one of them we resolve two terminal pairs, then we can pursue this branching step. 

Therefore, we are left with the following situation: when branching on any terminal pair, only this terminal pair gets resolved in both branches. Then the idea is to find a branching step where the decrease of the auxiliary components of the potential, namely $\nu$ and $k$, is significant enough to ensure the promised running time of the algorithm. Here we apply an extensive combinatorial analysis of the instance to show that finding such a branching step is always possible. In particular, our analysis can end up with a branching not on a terminal pair, but on the label of some other vertex; however, we make sure that in both branches some terminal pair gets eventually resolved. Also, in some cases we localize a part of the input that can be simplified (a {\em{reduction step}}), and then the analysis is restarted. 

To sum up, we would like to highlight two aspects of our contribution.
First, we answer a natural question stemming from the optimality program, showing that $2^k$ is not the final dependency on the parameter for \edgebip.
Second, our algorithm can be seen as a ``proof of concept'' that the LP-guided branching technique, even in the more abstract variant of Wahlstr\"{o}m~\cite{magnus},
can be combined with involved Measure\&Conquer analysis of the branching tree.
Note that in the past Measure\&Conquer and related techniques led to rapid progress in the area of moderately-exponential algorithms~\cite{fomin-kratsch}.

We remark that the goal of the current paper is clearly improving $2^k$ factor, and not optimizing the dependence of the running time on the input size. However, we do estimate it. Using the tools prepared by Iwata, Wahlstr\"om, and Yoishida~\cite{magnus-arxiv}, we are able to implement the algorithm so that it runs in time $\Oh(1.977^k\cdot nm)$. Naively, this seems like an improvement over the algorithm of Guo et al.~\cite{eb-2k} that had quadratic dependence on $m$, however this is not the case. We namely use the recent approximation algorithm for \edgebip of Kolay et al.~\cite{eb-approx} that in time $\Oh(k^{\Oh(1)}\cdot m)$ either returns a solution $F^{\textrm{apx}}$ of size at most $\Oh(k^2)$, or correctly concludes that there is no solution of size $k$. Then we start iterative compression from $G-F^{\textrm{apx}}$ and introduce edges of $F^{\textrm{apx}}$ one by one, so we need to solve the \termsep problem only $\Oh(k^2)$ times. In our case each iteration takes time $\Oh(1.977^k\cdot nm)$, but for the approach of Guo et al. it would take time $\Oh(2^k\cdot km)$. Thus, by using the same idea based on~\cite{eb-approx}, the algorithm of Guo et al. can be adjusted to run in time $\Oh(2^k\cdot k^3m)$. It is just that the newer algorithm of Kolay et al.~\cite{eb-approx} was not known at the time of writing~\cite{eb-2k}.

\paragraph*{Organization of the paper.} In Section~\ref{sec:prelims} we give background on iterative compression and the VCSP-based tools borrowed from~\cite{magnus-arxiv,magnus}. In particular, we introduce formally the \termsep problem and reduce solving \edgebip to it. In Section~\ref{sec:main} we set up the Measure\&Conquer machinery that will be used by our branching algorithm, and we introduce preliminary reductions. In Section~\ref{sec:excess} we prove some auxiliary results on {\em{low excess set}}, which is the key technical notion used in our combinatorial analysis. Finally, we present the whole algorithm in Section~\ref{sec:miesko}. Section~\ref{sec:conc} is devoted to some concluding remarks and open problems.

\section{Preliminaries}\label{sec:prelims}
\subsection{Graph notation}

For all standard graph notation, we refer to the textbook of Diestel~\cite{diestel}.
For the input instance $(G,k)$ of \edgebip{}, we denote $n = |V(G)|$ and $m = |E(G)|$.
As isolated vertices are irrelevant for the \edgebip{} problem, we assume that
$G$ does not contain any such vertices, and hence $n = \Oh(m)$.

\subsection{Cuts and submodularity}

As edge cuts in a graph are the main topic of this work, let us introduce some convenient notation.
In all graphs in this paper we allow multiple edges, but not loops, as they are irrelevant for the problem.
For a graph $G$ and two disjoint vertex sets $A,B \subseteq V(G)$, by $E_G(A,B)$ we denote the set of edges
with one endpoint in $A$ and the second endpoint in $B$. If any of the sets $A$ or $B$ is a singleton, say $A = \{a\}$,
we write $E_G(a,B)$ instead of $E_G(\{a\},B)$. We drop the subscript if the graph $G$ is clear from the context.

For a set $A \subseteq V(G)$, we denote $d(A) = |E(A,V(G) \setminus A)|$.
It is well known that the $d(\cdot)$ function is submodular, that is, for every $A,B \subseteq V(G)$ it holds that
\begin{equation}\label{eq:submodularity}
d(A) + d(B) \geq d(A \cap B) + d(A \cup B).
\end{equation}
In fact, a study of the proof of~\eqref{eq:submodularity} allows us to state the difference in the inequality.
\begin{equation}\label{eq:submodularity-err}
d(A) + d(B) = d(A \cap B) + d(A \cup B) + 2|E(A \setminus B,B \setminus A)|.
\end{equation}
Since $d(\cdot)$ is symmetric (i.e., $d(A) = d(V(G) \setminus A)$), by applying~\eqref{eq:submodularity}
to $A$ and the complement of $B$, we obtain a property sometimes called \emph{posimodularity}: for every $A,B \subseteq V(G)$ it holds that
\begin{equation}\label{eq:posimodularity}
d(A) + d(B) \geq d(A \setminus B) + d(B \setminus A).
\end{equation}
Or, with the error term:
\begin{equation}\label{eq:posimodularity-err}
d(A) + d(B) = d(A \setminus B) + d(B \setminus A) + 2|E(A \cap B,V(G) \setminus (A \cup B))|.
\end{equation}

\subsection{Iterative compression and the compression variant}

Let $(G,k)$ be an input \edgebip{} instance.
The opening step of our algorithm is the standard usage of 
iterative compression. We start by applying the
approximation algorithm of~\cite{eb-approx} that, given $(G,k)$,
in time $\Oh(k^{\Oh(1)} m)$ either correctly concludes that it is a no-instance,
or produces a set $Z \subseteq E(G)$ of size $\Oh(k^2)$ such that $G-Z$ is bipartite.

Let $Z$ be the obtained set, $|Z| = r = \Oh(k^2)$ and $Z = \{e_1,e_2,\ldots,e_r\}$.
Let $G_i = G-\{e_{i+1},e_{i+2},\ldots,e_r\}$ for $0 \leq i \leq r$;
note that $G_r = G$ while $G_0 = G-Z$, which is bipartite.
Our algorithm, iteratively for $i=0,1,\ldots,r$, computes
a solution $X_i$ to the instance $(G_i,k)$, or concludes that
no such solution exists. Clearly, since $G_i$ is a subgraph of $G$,
if we obtain the latter conclusion for some $i$, we can report that $(G,k)$ is a no-instance.

For $i=0$, $G_0=G-Z$ is bipartite, thus $X_0 = \emptyset$ is a solution.
Consider now an instance $(G_i,k)$ for $1 \leq i \leq r$, and assume
that a solution $X_{i-1}$ has already been computed. 
Let $X_i' = X_{i-1} \cup \{e_i\}$.
If $|X_i'| \leq k$, we can take $X_i = X_i'$ and continue.
Otherwise, we can make use of the structural insight given by 
the set $X_i'$ and solve the following problem.

\defproblemG{\edgebipcomp}{A graph $G$, and integer $k$, and a set $X' \subseteq E(G)$ of size $k+1$ such that $G-X'$ is bipartite.}{Compute a set $X \subseteq E(G)$ of size at most $k$ such that $G-X$ is bipartite, or conclude that no such set exists.}

If we could efficiently solve an \edgebipcomp{} instance $(G_i,k,X_i')$, we can take the output solution as $X_i$ and proceed to the next step of this iteration.
Consequently, it suffices to prove the following theorem.

\begin{theorem}\label{thm:comp}
\edgebipcomp{} can be solved in time $\Oh(c^k nm)$ for some constant $c < 1.977$.
\end{theorem}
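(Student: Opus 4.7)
The plan is to reduce \edgebipcomp to \termsep and then attack \termsep by combining the VCSP-guided branching of Wahlstr\"om with a Measure\&Conquer analysis. For the first step I would apply the reduction of Guo et al.~\cite{eb-2k}: fixing a bipartition of $G-X'$ and subdividing each edge of $X'$ into a short gadget attaches two degree-$1$ terminals per edge of $X'$, whose coloring decides on which side of the bipartition the two original endpoints end up. This yields a \termsep instance with $|\terms|=k+1$ pairs and cut budget $p=k$, and any solution of size $\le k$ to the \termsep instance can be translated back to a solution of \edgebipcomp of size $\le k$ in linear time.

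Next I would set up the algorithm for \termsep. Throughout, the algorithm maintains a labeling of $V(G)$ by $\{\Acsp,\Bcsp,\bot\}$ together with an optimum solution to the VCSP relaxation described in the introduction, chosen so as to be \emph{maximal}: it cannot be extended by assigning integral labels $\Acsp$ or $\Bcsp$ to $\bot$-labeled vertices without strictly increasing its cost. By Wahlstr\"om's persistence theorem, fixing such an extension is safe, and by the results of~\cite{magnus-arxiv} the relaxation can be maintained in time $\Oh(nm)$ per node of the recursion tree. Each branching node picks some unresolved pair (or, in special situations, another vertex) and recurses on the two integral label assignments, keeping track of the potential
\[
\mu \;=\; \wsp_t \cdot t \;+\; \wsp_\nu \cdot \nu \;+\; \wsp_k \cdot k,
\]
with the weights from the introduction and where $t$, $\nu$, $k$ are as defined there. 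The overarching goal is to show that at every branching node one can force $\mu$ to drop in both children by an amount satisfying a recurrence with base below $1.977$.

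The case analysis proceeds in a priority order driven by the weights. First, I would look for an unresolved pair $(s,t)$ such that branching into $(\Acsp,\Bcsp)$ and $(\Bcsp,\Acsp)$ forces, via persistence of the maximal relaxation, the resolution of at least one additional pair in each branch (and two additional pairs in at least one branch): because $\wsp_t$ is by far the largest weight, such a branching alone closes the recurrence. When no such pair exists, every pair-branching resolves only the pair it branches on, and progress must come from $\nu$ and $k$ — equivalently, from edges broken by the newly assigned labels and from increases in the cost of the maintained maximal relaxation.

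The heart of the proof, and the main obstacle, is showing that this residual case always admits either a reduction step (simplifying the instance while not increasing $\mu$) or a branching step — possibly on a non-terminal vertex — whose two children both eventually resolve a terminal pair and whose $(\nu,k)$-drops meet the recurrence. I would carry this out using the \emph{low excess set} machinery developed in Section~\ref{sec:excess}: submodularity (cf.~\eqref{eq:submodularity-err}) and posimodularity (cf.~\eqref{eq:posimodularity-err}) of $d(\cdot)$, applied to the supports of $\Acsp$- and $\Bcsp$-labeled vertices, let one extract a compact subset $S$ whose boundary is close to the cost of the relaxation, and then a structural dichotomy on $S$ yields the needed branching or reduction. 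The numerical verification that the chosen weights $\wsp_t=0.59950$, $\wsp_\nu=0.29774$, $\wsp_k=0.10276$ make all recurrences from the case analysis converge to a base below $1.977$ is a mechanical computation to be done last. Combined with the $\Oh(k^2)$ iterations of the iterative compression loop and the $\Oh(nm)$ cost per recursion node, this establishes Theorem~\ref{thm:comp}.
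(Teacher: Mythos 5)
Your proposal follows the paper's approach exactly: reduce \edgebipcomp to \termsep via the Guo et al.\ gadget (Lemma~\ref{lem:compred}), then run the Measure\&Conquer branching for \termsep built on Wahlstr\"{o}m's persistent and polynomial-time-solvable VCSP relaxation, the potential $\mu=\wsp_t t+\wsp_\nu \nu+\wsp_k k$, and the low-excess-set case analysis of Section~\ref{sec:excess} (Theorem~\ref{thm:termsep}), finishing with the observation that $\mu_\inst<k+1$ for the reduced instance. One small misattribution in your last sentence: the $\Oh(k^2)$ iterative-compression iterations belong to the derivation of Theorem~\ref{thm:main} from Theorem~\ref{thm:comp}, not to Theorem~\ref{thm:comp} itself, which concerns a single compression call.
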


\subsection{The terminal separation problem}

Following the algorithm of~\cite{eb-2k}, we phrase \edgebipcomp{} as
a separation problem.

Consider a graph $G$ with a family $\terms$ of pairs of terminals in $G$.
A pair $(A,B)$ with $A,B \subseteq V(G)$ is a \emph{terminal separation}
if $A \cap B = \emptyset$ and, for every terminal pair $P$, either one of 
the terminals in $P$ belongs to $A$ and the second to $B$, or $P \subseteq V(G) \setminus (A \cup B)$. 
A terminal separation $(A,B)$ is \emph{integral} if $A \cup B = V(G)$.\footnote{The word \emph{integral} stems from the fact that an integral separation corresponds to a solution to the relaxed \termsep{} problem that actually does
not use the relaxed value $\bot$. In fact, it also corresponds to an integral solution of an LP formulation underlying the algorithmic results of~\cite{thapper-zivny}.}
A terminal separation $(A',B')$ \emph{extends} $(A,B)$
if $A \subseteq A'$ and $B \subseteq B'$. 
The \emph{cost} of a terminal separation $(A,B)$ is defined
as $c(A,B) = (d(A) + d(B))/2$. Note that if $(A,B)$ is integral,
then we have $c(A,B) = d(A) = d(B)$.

We will solve the following separation problem.

\defproblemG{\termsep}{A graph $G$ with a set of terminal pairs $\terms$ such that the pairs are pairwise disjoint and every terminal is of degree at most one in $G$; a terminal separation $(\Az,\Bz)$; and an integer $k$.}{Find an integral terminal separation $(A,B)$ extending $(\Az,\Bz)$ of cost at most $k$, or report that no such separation exists.}

\begin{lemma}\label{lem:compred}
Given an \edgebipcomp{} instance $(G,k,X')$, one can in polynomial
time compute an equivalent instance $(G',\terms,(\Az,\Bz),k')$ of
\termsep, such that $|E(G')| = |E(G)| + \Oh(|X'|)$, 
$|V(G')| = |V(G)| + \Oh(|X'|)$, $|\terms| = |X'|$, 
$\Az=\Bz = \emptyset$, and $k' = k$.
\end{lemma}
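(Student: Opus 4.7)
The plan is to follow the classical reduction of Guo et al.~\cite{eb-2k}, rephrased in the language of \termsep. First, since $G - X'$ is bipartite by assumption, I would compute a 2-coloring $(V_1, V_2)$ of $G - X'$ in linear time by BFS through each connected component. This bipartition provides the canonical reference against which candidate solutions $X$ can be described: any such $X$ corresponds to a bipartition $(A_G, B_G)$ of $V(G)$ obtained from $(V_1,V_2)$ by a choice of ``flips'', and the $X'$-edges behave differently depending on whether their endpoints lie on the same (``bad'') or different (``good'') sides of $(V_1, V_2)$.

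Next, I would construct $G'$ by augmenting $G - X'$ with one pendant terminal pair per edge of $X'$. Concretely, for each $e = uv \in X'$ I introduce two fresh degree-one vertices $s_e, t_e$ and pendant edges attaching them to $u$ and $v$; the assignment of which terminal goes to which endpoint is chosen in accordance with the type (good/bad) of $e$ in $(V_1,V_2)$, so that the natural extension of an optimal bipartition of $V(G)$ satisfies the terminal constraint with matching cost. I set $\terms = \{(s_e,t_e): e \in X'\}$, $\Az = \Bz = \emptyset$, and $k' = k$. The size bounds $|V(G')| = |V(G)| + 2|X'|$ and $|E(G')| = |E(G-X')| + 2|X'| = |E(G)| + |X'|$, as well as $|\terms|=|X'|$, are immediate from the construction.

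To establish equivalence in the forward direction, given a solution $X$ of size at most $k$ to \edgebipcomp, I would take a bipartition $(A_G, B_G)$ witnessing that $G - X$ is bipartite, then extend it to an integral terminal separation $(A, B)$ of $V(G')$ by placing each $s_e$ (resp.\ $t_e$) with the side of its adjacent endpoint whenever the terminal constraint can be met that way, and otherwise flipping one of them across at a unit pendant cost. A case analysis over each $X'$-edge (good vs.\ bad, and in $X$ vs.\ not in $X$) should show that the edges of $G'$ in $E(A,B)$ correspond bijectively to the edges of $X$, yielding cost exactly $|X| \leq k$. For the converse direction, given $(A,B)$ of cost at most $k$, I would restrict to $V(G)$ and define $X$ as the edges of $G$ with both endpoints on the same side; the same case analysis, run backwards, shows $|X|$ equals the cost and that $(A \cap V(G), B \cap V(G))$ witnesses bipartiteness of $G - X$.

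The main obstacle I expect is the careful bookkeeping needed to pin down the good/bad swap rule in the construction and verify the exact cost-to-$|X|$ correspondence edge-by-edge. In particular, one must ensure that the contribution of $G-X'$ edges to the cut is reconciled with the chosen bipartition via the reference $(V_1,V_2)$, and that the pendant-swap convention for bad edges makes the cost from each $X'$-edge equal to its indicator of membership in $X$. Once this correspondence is verified in both directions, equivalence of the two instances follows and the stated size bounds are met.
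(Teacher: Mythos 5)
Your construction matches the paper's: for each $uv \in X'$, delete $uv$ and introduce a new degree-one terminal pair with pendant edges to $u$ and $v$. The size bounds are immediate. There are two issues with the argument sketch, though. First, the ``good/bad swap rule'' is vacuous: a terminal pair $\{s_e, t_e\}$ is an unordered pair of interchangeable degree-one vertices, so reassigning which terminal attaches to $u$ and which to $v$ yields an isomorphic \termsep{} instance and cannot change any cost. (Also take care that the separation $(A,B)$ restricted to $V(G)$ must be the \emph{flip indicator} relative to $(V_1,V_2)$ --- put $v$ in $A$ iff $v$'s side in the bipartition of $G-X$ agrees with its side in $(V_1,V_2)$ --- not the bipartition of $G-X$ itself; otherwise every surviving $G-X'$ edge would be cut.)

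The genuine gap is exactly what you flag as your ``main obstacle,'' and no relabeling of terminals closes it. Under the flip labeling, a \emph{bad} edge $uv \in X'$ (endpoints on the same side of $(V_1,V_2)$) belongs to $X$ iff $u,v$ receive the \emph{same} label, which matches the gadget's optimal contribution (the gadget always contributes the indicator of equal labels, since one of $s_e,t_e$ must be placed opposite $u$). But a \emph{good} edge $uv \in X'$ belongs to $X$ iff $u,v$ receive \emph{different} labels --- the opposite of the gadget's indicator --- so the cost-to-$|X|$ correspondence you want to verify fails edge-by-edge on good edges. (The paper's own converse direction is tacitly in the minimal case: it uses the equality $X' = E_G(L',L') \cup E_G(R',R')$, which holds only if $X'$ has no good edges.) The missing step, which you should state explicitly, is a preprocessing reduction: a good edge $uv$ is redundant, since $G - (X'\setminus\{uv\})$ is still bipartite, so first discard all good edges from $X'$ and build $G'$ from the resulting inclusion-minimal set. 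Then every remaining $X'$-edge is bad, the gadget cost equals the indicator of membership in $X$ for every edge, and your case analysis goes through. This replaces $|\terms| = |X'|$ by $|\terms| \le |X'|$, which only helps the potential bounds downstream.
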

\begin{proof}
Let $G'$ be the graph obtained from $G$ by replacing every edge $uv$ in $X'$ with two new vertices $s,t$ and two pendant edges $us,vt$.
Let $\terms$ be the set of vertex pairs $\{s,t\}$ created this way, $\Az=\Bz=\emptyset, k'=k$.
We show this constructed instance is equivalent to the original instance.

If the constructed instance is a yes-instance, let $A,B$ be an integral terminal separation of cost at most $k$.
Take $X=E(A,B)$, and then, for every edge in $X$ incident to a terminal $s$, replace this edge with $uv$, where $uv$ is the edge of $X'$ for which the terminal pair containing $s$ was created.
We claim $X$ is a solution to the original instance (clearly $|X| \leq |E(A,B)| \leq k$).
Indeed, let $L',R'$ be a bipartition of $G-X'$.
We show that $(L'\cap A) \cup (R' \cap B), (R'\cap A) \cup (L' \cap B)$ gives a bipartition of $G-X$.
Suppose that, to the contrary, there is an edge $uv$ in $G-X$ with both endpoints in $(L'\cap A) \cup (R' \cap B)$ (the case of $(R'\cap A) \cup (L' \cap B)$ being symmetrical).
Since all edges with one endpoint in $A\cap V(G)$ and the other in $B \cap V(G)$ were deleted by $X$, we may assume $uv$ is an edge with both endpoints in $L' \cap A$ (the case of $R' \cap B$ being symmetrical).

Since $L'$ is one side of a bipartition of $G-X'$, $uv$ must be an edge in $X'$.
Let $us, vt$ be the corresponding edges to terminals created by the construction.
Since both $u$ and $v$ are in $A$ and exactly one of $s,t$ is in $A$ (as $A,B$ is a terminal separation), one of $us,vt$ must be an edge in $E(A,B)$.
Hence $uv$ was deleted by $X'$, a contradiction.

For the other side, if the original instance is a yes-instance, let $X$ be a set of at most $k$ edges such that $G-X$ has a bipartition $L,R$.
By taking $X$ to be minimal, we can assert that every edge in $X$ has both endpoints on one side of this bipartition.
Let $A$ contain all vertices in $(L \cap L') \cup (R \cap R')$, let $B$ contain all vertices of $(L \cap R')\cup (R \cap L')$.
For every edge $uv$ in $X$, add the corresponding terminal vertices $s,t$ to $A$ and $B$ so that $s$ is in the same set as $u$ and $t$ is in the other one.
Clearly $(A,B)$ is an integral terminal separation extending $(\emptyset,\emptyset)$, it suffices to show that $|E_{G'}(A,B)| \leq |X|$.

Let $e \in E_{G'}(A,B)$.
If neither endpoint of $e$ is a terminal, then let $e=uv$ for $u\in (L \cap L') \cup (R \cap R')$ and $v\in (L \cap R')\cup (R \cap L')$.
Since all edges with both endpoints in $L'$ were in $X'$ and hence replaced by edges to terminals in $G'$, it cannot be that both $u$ and $v$ are in $L'$.
Similarly, for $R'$, so let us assume that $u\in L'$ and $v\in R'$ (the other case is symmetrical).
Then $u\in (L\cap L')$ and $v\in (L \cap R')$, which means both $u$ and $v$ are on the same side of the $(L,R)$ bipartition of $G-X$ but on different sides of the $(L',R')$ bipartition of $G-X'$.
Hence $e\in X \setminus X'$.

Otherwise, assume that some endpoint of $e$ is a terminal.
Recall that we have defined $A$ and $B$ in such a way that the edge $us$ for a terminal pair $\{s,t\}$ is never
cut. Hence, let $e=vt$, and without loss of generality assume that
$v \in  (L \cap L') \cup (R \cap R')$.
(Here we keep the notation that  $s,t$ are two terminal with edges $us$ and $vt$ replacing in the construction an edge $uv$ in $X'$.)
In particular $v\in A$, so $t\in B$. 
By construction of the separation $(A,B)$, it must be that $s$ is in the same side as $u$ and $t$ is on the other side, so $s,u\in A$.
Since $u$ is not a terminal, $u,v\in (L \cap L') \cup (R \cap R')$.
Hence $uv \in X' = E_G(L',L')\cup E_G(R',R')$ implies that $u$ and $v$ are on the same side of the $(L',R')$ partition, thus also on the same side of the $(L,R)$ partition and hence $uv \in X \cap X'$.
Note also that of the two edges that replace $uv$ in the construction, only $e$ is in $E_{G'}(A,B)$, because $s,u \in A$.

Hence every edge in $E_{G'}(A,B)$ is either an edge in $X\setminus X'$ or an edge to a terminal uniquely corresponding to an edge in $X \cap X'$, which implies $|E_{G'}(A,B)| \leq |X|$, concluding the other side of the proof.

\end{proof}

We say that a terminal pair $P$ is \emph{resolved} in a \termsep{} instance
$(G,\terms,(\Az,\Bz),k)$ if $P \subseteq \Az \cup \Bz$, and \emph{unresolved}
otherwise (i.e., $P \subseteq V(G) \setminus (\Az \cup \Bz)$).
Thus, our goal is to design an efficient branching algorithm
for \termsep{}, with parameters being $k$, the excess in the cutset
$k - c(\Az,\Bz)$, and the number $t$ of unresolved terminal pairs.
A precise statement of the result can be found in Section~\ref{sec:main},
where an appropriate progress measure is defined.

\subsection{LP branching}

The starting point in designing an algorithm for \termsep{} using the
aforementioned parameters is the generic LP branching framework
of Wahlstr\"{o}m~\cite{magnus}.

Observe that one can phrase an instance of \termsep{} as a Valued CSP instance,
with vertices being variables over the domain $\{\Acsp,\Bcsp\}$, edges being
soft (unit cost) equality constraints, terminal pairs being hard
(infinite or prohibitive cost) inequality constraints, while membership
in $\Az$ or $\Bz$ translates to hard unary constraints on vertices
of $\Az \cup \Bz$. Observe that this Valued CSP instance is in fact
a \textsc{Unique Label Cover} instance over binary alphabet, with additional
hard unary constraints.

In a relaxed instance, we add to the domain the ``do not know'' value $\bot$,
and extend the cost function for an edge (equality) constraint to be $0$
for both endpoints valued $\bot$, and $\frac{1}{2}$ for exactly one
endpoint valued $\bot$; the hard inequality constraints on terminal pairs
 additionally allows both terminals to be valued $\bot$.
Observe that now feasible solutions $f: V(G) \to \{\bot,\Acsp,\Bcsp\}$ to
the so-defined instance are in one-to-one correspondence with
terminal separations $(A,B) = (f^{-1}(\Acsp),f^{-1}(\Bcsp))$, and
the cost of $f$ equals exactly $c(f^{-1}(\Acsp), f^{-1}(\Bcsp))$.
Furthermore, as shown by Wahlstr\"{o}m~\cite{magnus},
the cost functions in this relaxation are bisubmodular,
which implies the following two corollaries.

\begin{theorem}[\textbf{persistence} \cite{magnus}]\label{thm:persistence}
Let $(G,\terms,(\Az,\Bz),k)$ be a \termsep{} instance, and
let $(A,B)$ be a terminal separation in $G$ of minimum cost among separations
that extend $(\Az,\Bz)$. 
Then there exists an integral separation $(\Aopt,\Bopt)$ that has
minimum cost among all separations extending $(\Az,\Bz)$, with the additional
property that $(\Aopt,\Bopt)$ extends $(A,B)$.
\end{theorem}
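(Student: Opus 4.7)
The plan is to derive the statement from the bisubmodular structure of the relaxation. Recall that terminal separations extending $(\Az,\Bz)$ are in bijection with feasible assignments $f:V(G)\to\{\bot,\Acsp,\Bcsp\}$ satisfying the hard unary constraints inherited from $\Az\cup\Bz$ and the hard binary inequality constraints at pairs of $\terms$, and the cost of the separation equals the cost of the assignment. As Wahlstr\"om~\cite{magnus} observed, on $\{\bot,\Acsp,\Bcsp\}$ equipped with the standard bisubmodular operations $\sqcup,\sqcap$ (where $\Acsp\sqcup\Bcsp=\Acsp\sqcap\Bcsp=\bot$, $\bot\sqcup x=x$, $\bot\sqcap x=\bot$, and $x\sqcup x=x\sqcap x=x$, extended coordinatewise), the soft edge cost is a sum of bisubmodular binary terms.

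The first step is to verify that $\sqcup$ and $\sqcap$ preserve feasibility of the hard constraints. Unary constraints from $(\Az,\Bz)$ are preserved trivially. For a terminal pair $(s,t)\in\terms$ a short case analysis shows that whenever $(f(s),f(t))$ and $(g(s),g(t))$ both lie in the feasible set $\{(\Acsp,\Bcsp),(\Bcsp,\Acsp),(\bot,\bot)\}$, so do their coordinatewise join and meet. Hence the feasible assignments form a sublattice, and minimizing cost over them is a bona fide bisubmodular minimization problem.

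The second step is to invoke the bisubmodular inequality against an integral optimum. Let $f$ represent $(A,B)$ and $g$ represent any integral minimum-cost feasible assignment, which exists because every integral separation extending $(\Az,\Bz)$ is feasible. Bisubmodularity gives
\[
c(f)+c(g)\ge c(f\sqcup g)+c(f\sqcap g),
\]
and since $f$ is a global minimum while $f\sqcup g$ and $f\sqcap g$ are feasible, all four costs must equal $c(f)$; in particular $f\sqcup g$ is a minimum-cost feasible assignment that already agrees with $f$ on every coordinate where $g$ equals $f$ or $\bot$. To finish I would appeal to the persistence theorem for bisubmodular VCSPs of~\cite{thapper-zivny,magnus}: the minimizer sublattice of $\{\bot,\Acsp,\Bcsp\}^{V(G)}$ contains an integral element that dominates $f$ in the coordinatewise partial order with $\bot\prec\Acsp$ and $\bot\prec\Bcsp$, and any such integral minimum is the desired $(\Aopt,\Bopt)$.

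The hard part will be this last invocation: a single application of the bisubmodular inequality does not by itself produce an integral minimizer dominating $f$, because $f$ and $g$ may take opposite non-$\bot$ symbols at some coordinates, where $f\sqcup g$ collapses back to $\bot$. Bridging this gap requires the deeper persistence theorem for bisubmodular function minimization of Kolmogorov, Thapper, and \v{Z}ivn\'y~\cite{thapper-zivny}, adapted by Wahlstr\"om~\cite{magnus} to precisely the VCSP setting that underlies \termsep{}. I would invoke their black-box persistence guarantee rather than re-prove it.
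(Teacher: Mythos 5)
The paper does not supply a proof of this statement: it is attributed directly to Wahlstr\"om~\cite{magnus}, with only the remark that it follows from bisubmodularity of the relaxed cost function. Your proposal, which ultimately also defers to a black-box persistence result of~\cite{thapper-zivny,magnus}, is therefore in line with the paper's treatment at the top level. However, the intermediate derivation you offer contains a genuine error. From the bisubmodular inequality $c(f)+c(g)\ge c(f\sqcup g)+c(f\sqcap g)$ together with the fact that $f$ is a global minimizer, all you can deduce is $c(f\sqcup g)\ge c(f)$, $c(f\sqcap g)\ge c(f)$, and hence $c(g)\ge c(f)$, which is trivially true. The conclusion that ``all four costs must equal $c(f)$'' would require $c(g)=c(f)$, i.e., that the minimum over \emph{integral} separations coincides with the minimum over \emph{all} separations extending $(\Az,\Bz)$. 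That equality is precisely the first non-trivial part of the theorem and is not available as a premise: you define $g$ to minimize only among integral assignments, and a priori that minimum could strictly exceed $c(f)$. So the difficulty is not confined to the final step you flag (coordinates where $f$ and $g$ disagree and $f\sqcup g$ collapses to $\bot$); it already appears one step earlier, because you have not shown that $f\sqcup g$ is a minimizer at all. Given that you end up invoking the full persistence guarantee for bisubmodular VCSPs as a black box anyway, it would be cleaner to do so directly --- as the paper itself does --- rather than present a partial derivation that does not in fact reduce the work.
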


We say that a terminal separation $(A,B)$ is \emph{maximal} if
every other separation extending it has strictly larger cost.

\begin{theorem}[\textbf{polynomial-time solvability}, \cite{magnus,magnus-arxiv}]\label{thm:ptime}
Given a \termsep{} instance $(G,\terms,(\Az,\Bz),k)$ with $c(\Az,\Bz) \leq k$, one
can in $\Oh(k^{\Oh(1)} m)$ time find a maximal terminal separation $(A,B)$ in $G$ that has minimum cost among all separations extending $(\Az,\Bz)$.
\end{theorem}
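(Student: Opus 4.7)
The core of the argument is to invoke the machinery of~\cite{magnus,magnus-arxiv}, so my plan is mostly one of reformulation followed by careful gluing. First I would make the VCSP encoding of \termsep{} fully explicit, as sketched in the excerpt: the variables are $V(G)$ over the domain $\{\bot,\Acsp,\Bcsp\}$; each edge becomes a soft constraint with the described $3\times 3$ cost table; each pair in $\terms$ becomes a hard constraint allowing only $(\Acsp,\Bcsp),(\Bcsp,\Acsp),(\bot,\bot)$; and memberships in $\Az,\Bz$ become hard unary constraints. A direct case check on the edge cost function, treating $\bot$ as the bisubmodular origin, shows that all cost functions are bisubmodular, as already observed by Wahlstr\"om.

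Second, I would obtain a single minimum-cost solution by invoking the polynomial-time algorithm for bisubmodular VCSPs (Kolmogorov--Thapper--\v{Z}ivn\'y, as used in~\cite{magnus}), yielding a terminal separation $(A,B)$ extending $(\Az,\Bz)$ of minimum cost. To promote $(A,B)$ to a \emph{maximal} optimum I would use the fact that the set of minimizers of a bisubmodular function is closed under the natural bisubmodular meet and join, and hence has a unique maximal element. It can be computed by iterating over undecided vertices $v$ and labels $\ell\in\{\Acsp,\Bcsp\}$, each time checking whether there is still a minimum-cost extension that assigns $\ell$ to $v$; persistence (Theorem~\ref{thm:persistence}) guarantees that every positive answer safely commits $v$ to $\ell$, and after at most $\Oh(n)$ successful commitments we reach the unique maximal optimum.

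The main obstacle is not this conceptual plan but achieving the claimed $\Oh(k^{\Oh(1)} m)$ running time, since a black-box oracle for bisubmodular minimization is polynomial but far from linear in $m$. For this I would rely on Iwata--Wahlstr\"om--Yoshida~\cite{magnus-arxiv}, who show how to encode such a bisubmodular VCSP as an $s$--$t$ min-cut instance in an auxiliary graph of size $\Oh(m)$; because only solutions of cost at most $k$ are relevant, the max-flow value is $\Oh(k)$ and augmenting-path methods find it in $\Oh(km)$ time. The same framework also supports amortized implementations of the persistence loop, so that all $\Oh(n)$ extension tests together fit into the $\Oh(k^{\Oh(1)} m)$ budget. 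The delicate point is precisely this amortization: one cannot afford an independent min-cut call per undecided vertex, and all nontrivial work hides in re-using the flow structure across the persistence iterations, exactly as done in~\cite{magnus-arxiv}.
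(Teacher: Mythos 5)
This theorem is stated in the paper as a cited result (attributed to \cite{magnus,magnus-arxiv}); the paper itself contains no proof of it, so there is nothing internal to compare your argument against. That said, your reconstruction is a plausible account of how the cited works establish the statement: the VCSP encoding with bisubmodular cost tables, the polynomial-time solvability via Kolmogorov--Thapper--\v{Z}ivn\'y, the use of persistence (Theorem~\ref{thm:persistence}) to greedily promote a minimizer to a maximal one, and the Iwata--Wahlstr\"om--Yoshida reduction to a bounded-value $s$--$t$ cut in an $\Oh(m)$-size auxiliary graph to get the $\Oh(k^{\Oh(1)} m)$ bound are exactly the ingredients the paper relies on implicitly. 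One small point worth tightening: your appeal to ``a unique maximal optimum'' via closure of bisubmodular minimizers under meet and join is more than the theorem requires (it only asks for \emph{some} maximal minimum-cost separation), and the lattice structure on signed sets that underlies that uniqueness claim is a nontrivial fact you would need to cite or prove; the iterative persistence loop you describe already suffices to produce a maximal one without invoking uniqueness. The genuine technical content you correctly flag as hidden in \cite{magnus-arxiv} is the amortization of the persistence tests against a single flow structure, which is indeed where the linear-in-$m$ factor comes from.
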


From Theorems~\ref{thm:persistence} and~\ref{thm:ptime} it follows that, while working on a \termsep{} instance $(G,\terms,(\Az,\Bz),k)$, we can always assume that $(\Az,\Bz)$ is a maximal separation: If that is not the case, we can obtain an extending separation $(A,B)$ via Theorem~\ref{thm:ptime}, and set $(\Az,\Bz) := (A,B)$; the safeness of the last step is guaranteed by Theorem~\ref{thm:persistence}.

We remark here that, in the course of the algorithm, we will often merge sets of vertices in the processed graph. For a nonempty set $X \subseteq V(G)$, the operation of \emph{merging $X$ into a vertex} replaces $X$ with a new vertex $x$, and replaces every edge $uv \in E(X,V(G) \setminus X)$, $u \in X$, $v \notin X$, with an edge $xv$. That is, in this process we do not supress multiple edges while identifying some vertices. However, we do supress loops, as they are irrelevant for the problem.
Consequently, we allow the graph $G$ to have multiple edges, but not loops; we remark that both theorems cited in this section work perfectly fine in this setting as well.

\section{The structure of the branching algorithm}\label{sec:main}
In this section we describe the structure of the branching
algorithm for \termsep{}.
Before we state the main result, we introduce the potential
that will measure the progress made in each branching step.

Let $\inst = (G,\terms,(\Az,\Bz),k)$ be a \termsep{} instance, where
$(A_0,B_0)$ is a maximal terminal separation; we henceforth call such an instance \emph{maximal}.
We are interested in keeping track of the following
partial measures:
\begin{itemize}
\item $t_\inst$ is the number of unresolved terminal pairs;
\item $\nu_\inst = k - c(\Az,\Bz)$;
\item $k_\inst = k$.
\end{itemize}

The $\Oh(2^k km)$-time algorithm used in~\cite{eb-2k} can be interpreted
in our framework as an $\Oh(2^{t_\inst} k_\inst m)$-time algorithm
for \termsep{}, while the generic LP-branching algorithm for \textsc{Edge Unique Label Cover}
of~\cite{magnus} can be interpreted as an $\Oh(4^{\nu_\inst} m)$-time
algorithm. As announced in the intro, our main goal is to blend these two algorithms, by analysing the cases where both these algorithms perform badly.

An important insight is that all these inefficient cases happen when $\Az$ and $\Bz$ increase their common boundary. If this is the case, a simple reduction rule is applicable that also reduces the allowed budget $k$; in some sense, with this reduction rule the budget $k$ represents the yet undetermined part of the boundary between $\Aopt$ and $\Bopt$ in the final integral solution $(\Aopt,\Bopt)$.
For this reason, we also include the budget $k$ in the potential.

Formally, we fix three constants $\wsp_t = 0.59950$, $\wsp_\nu = 0.29774$, and
$\wsp_k = 1-\wsp_t-\wsp_\nu = 0.10276$ and define a potential of an instance $\inst$
as
$$\mu_\inst = \wsp_t \cdot t_\inst + \wsp_\nu \cdot \nu_\inst + \wsp_k \cdot k_\inst.$$

Our main technical result, proved in the remainder of this paper, is the following.
\begin{theorem}\label{thm:termsep}
A \termsep{} instance $\inst$ can be solved in time
$\Oh(c^{\mu_\inst} nm)$ for some $c < 1.977$.
\end{theorem}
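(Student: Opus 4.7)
The plan is to design a branching algorithm for maximal \termsep{} instances which, at each recursive call, either solves the instance directly, applies a cost-free reduction rule, or branches into at most two subinstances $\inst_1,\inst_2$ such that the recurrence
$$c^{\mu_\inst} \geq c^{\mu_{\inst_1}} + c^{\mu_{\inst_2}}$$
holds for $c = 1.977$. Since each node of the recursion tree costs $\Oh(nm)$ (dominated by re-maximalization via Theorem~\ref{thm:ptime} and amortized using the incremental tools of~\cite{magnus-arxiv}) and the number of leaves is bounded by $c^{\mu_\inst}$, Theorem~\ref{thm:termsep} will follow.

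Operationally, I would first dispatch the trivial situations: reject if $\nu_\inst<0$, return $(\Az,\Bz)$ if no unresolved pair remains, and invoke Theorem~\ref{thm:ptime} to assume $(\Az,\Bz)$ is maximal. Then I would exhaustively apply the basic reduction rules from Section~\ref{sec:main} that contract inefficient parts of the boundary between $\Az$ and $\Bz$ while decreasing $k$ correspondingly; after this the instance is both maximal and reduced. The elementary branching move is then to choose an unresolved pair $(s,t)$ and recurse into the two extensions adding $(s,t)$ to $(\Az,\Bz)$ or to $(\Bz,\Az)$, re-maximalizing afterwards. A na\"ive such branch drops only $\wsp_t$ of potential on each side, and since $1.977^{\wsp_t}\approx 1.505<2$ this alone is insufficient; the effectiveness of the algorithm depends on how much additional progress re-maximalization forces in $t_\inst$ or $\nu_\inst$.

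Accordingly, I would classify instances into three regimes. In the first, some terminal-pair branching resolves at least two pairs in both branches after re-maximalization, giving a drop of $\geq 2\wsp_t$ on each side; in the second, some pair branching is asymmetric (two pairs resolved on one side, at least one on the other), which still balances the recurrence thanks to the chosen weights. The main obstacle is the hard regime, where \emph{every} terminal-pair branching resolves exactly the chosen pair on both sides. Here I would invoke the low-excess-set machinery of Section~\ref{sec:excess}: the combinatorial rigidity of a maximal, reduced instance in which every terminal branching is "minimal" forces the existence of a small structure (a low-excess set, or a bounded local configuration around some pair) which either admits a further reduction strictly decreasing $k_\inst$, or permits branching on a non-terminal vertex $u$ by fixing $u$ to $\Acsp$ or $\Bcsp$, in such a way that each branch drops $\nu_\inst$ by enough and, via re-maximalization, still forces the resolution of at least one terminal pair.

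The quantitative heart of the proof is the verification that every branch produced above satisfies an inequality of the form $1.977^{-\Delta_1}+1.977^{-\Delta_2}\leq 1$, with $\Delta_1,\Delta_2$ explicit linear combinations of $\wsp_t,\wsp_\nu,\wsp_k$. The specific weights $\wsp_t=0.59950$, $\wsp_\nu=0.29774$, $\wsp_k=0.10276$ are tuned so that the worst such inequality over the entire case analysis is tight exactly at $c=1.977$; once this case list is shown to be exhaustive, the individual inequalities are routine. The main technical difficulty I expect is the hard regime: one must demonstrate that a maximal, reduced instance in which all terminal-pair branchings are minimal is combinatorially constrained enough to always yield either a reduction opportunity or a non-terminal branching pivot, and this is where essentially all of the problem-specific combinatorics — and the content of Sections~\ref{sec:excess} and~\ref{sec:miesko} — must be deployed.
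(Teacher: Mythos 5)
Your high-level blueprint matches the paper's: use the potential $\mu_\inst$, reduce and re-maximalize, branch on terminal pairs when one branch resolves $\geq 2$ pairs, and in the remaining ``hard regime'' exploit the rigidity of a maximal reduced instance via the low-excess machinery. The weights are identified correctly and the framing of the recurrence $c^{\mu_\inst}\geq c^{\mu_{\inst_1}}+c^{\mu_{\inst_2}}$ is exactly right.

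The gap is that you have stated a plan, not a proof. You write that in the hard regime ``one must demonstrate that a maximal, reduced instance\ldots always yields either a reduction opportunity or a non-terminal branching pivot,'' and that ``this is where essentially all of the problem-specific combinatorics\ldots must be deployed'' --- and then you stop. That demonstration \emph{is} the proof of Theorem~\ref{thm:termsep}: it occupies all of Section~\ref{sec:miesko} and relies on several ideas your outline does not surface. Concretely, the paper (i) structures the hard regime by $(\Delta(A_s),\Delta(B_s))\in\{(0,0),(1,0),(1,1)\}$ and the further split of $(\Delta(\tilde A),\Delta(\tilde B))$; (ii) proves the pushing property (your outline has no analogue of Lemma~\ref{lem:pushing}/Assumption~\ref{ass:pushing}, which is what actually forces $\Delta$ to increase when you put an extra vertex on one side); (iii) introduces \emph{antennas} as the bottleneck configuration and handles them by coordinating the analysis on \emph{both} terminals $s$ and $t$; (iv) crucially uses the device of \emph{fixing the pendant edge} $ss'$ or $tt'$ when branching, together with Lemma~\ref{lem:double-fixing} to fix both, which is what generates the extra $\nu$ and $k$ decrements; and (v) depends on the Boundary Reduction decreasing $k_\inst$ (not $\nu_\inst$), which is the whole reason the third component $\wsp_k$ is in the potential at all. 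None of these is implied by ``invoke the low-excess-set machinery,'' and without them you cannot produce the branching vectors of Lemma~\ref{lem:good-v}. Until the case analysis is actually exhibited and shown exhaustive, the inequalities you call ``routine'' have nothing to attach to.

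One smaller inaccuracy: you attribute the $\Oh(nm)$ polynomial factor to ``amortization from~\cite{magnus-arxiv}.'' That reference gives the $\Oh(k^{\Oh(1)}m)$ cost of a single re-maximalization; the $n$ factor arises from chains of reductions, and the paper avoids an additional $n$ factor by restricting withdrawn branching attempts to a light-weight count of immediate Boundary Reductions rather than full exhaustive reduction.
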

Observe that if $\inst$ is an instance output by the reduction of Lemma~\ref{lem:compred}, then $t_\inst = |X'| = k+1$, $\nu_\inst = k$ since $\Az=\Bz=\emptyset$, and
$k_\inst = k$. Consequently, $\mu_\inst < k+1$, and Theorem~\ref{thm:main} follows from Theorem~\ref{thm:termsep}.

The algorithm of Theorem~\ref{thm:termsep} follows a typical outline of a recursive branching algorithm. At every step, the current instance
is analyzed, and either it is reduced, or some two-way branching step is performed. The potential $\mu_\inst$ is used to measure the progress of the 
algorithm and to limit the size of the branching tree.

\subsection{Reductions}

We use a number of reductions in our algorithm.
Every reduction decreases $|V(G)|+|\terms|+k$, and after any application of any reduction
we re-run Theorem~\ref{thm:ptime} to ensure that the considered instance is maximal.

The first one is the trivial termination condition.

\begin{reduction}[Terminator Reduction]
If $k_\inst < 0$ or $\nu_\inst < 0$, then we terminate the current branch
with the conclusion that there is no solution.
If $(\Az,\Bz)$ is integral, return it as a solution.
\end{reduction}

Observe that if all terminals are resolved, then
both $(\Az, V(G)\setminus \Az)$
and $(V(G) \setminus \Bz,\Bz)$ are integral separations,
and one of them is of cost at most $c(\Az,\Bz)$. Consequently,
since $\inst$ is maximal, in fact $(\Az,\Bz)$ is integral.
We infer that if the Terminator Reduction does not trigger,
then there exists at least one unresolved terminal pair, i.e., $t_\inst > 0$.

We now provide the promised reduction of the boundary between $\Az$ and $\Bz$.

\begin{reduction}[Boundary Reduction]
If there exists an edge $ab$ with $a \in \Az$, $b \in \Bz$, delete the edge $ab$ and decrease $k$ by one.
If there exist two edges $va,vb$ with $a \in \Az$, $b \in \Bz$, and $v \notin \Az \cup \Bz$, delete both edges $va$ and $vb$, and decrease $k$ by one.
\end{reduction}

\begin{lemma}\label{lem:boundary-red}
Let $\inst = (G,\terms,(\Az,\Bz),k)$ be a maximal \termsep{} instance, and assume that the Boundary Reduction have been applied once, giving a graph $G'$.
Then $\inst' = (G',\terms,(\Az,\Bz),k-1)$ is a maximal \termsep{} instance, equivalent to $\inst$. Furthermore, $\mu_{\inst'} = \mu_\inst - \wsp_k$.
\end{lemma}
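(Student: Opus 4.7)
The plan is to verify four things, in order: (i)~that $(\Az,\Bz)$ remains a valid terminal separation in $G'$; (ii)~that for every terminal separation $(A,B)$ extending $(\Az,\Bz)$, the cost changes uniformly, namely $c_{G'}(A,B) = c_G(A,B) - 1$; (iii)~that (ii) together with the budget drop $k \to k-1$ implies both equivalence and maximality of $\inst'$; and (iv)~that the potential drops by exactly $\wsp_k$. Point~(i) is immediate since the reduction only deletes edges and does not alter $\terms$, $\Az$, or $\Bz$.

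For~(ii), I would split on which case of the reduction fired. In the first case we delete an edge $ab$ with $a \in \Az \subseteq A$ and $b \in \Bz \subseteq B$, so $ab$ is cut by every extending separation and contributes $1$ to both $d(A)$ and $d(B)$, hence exactly $1$ to $c_G(A,B) = (d(A)+d(B))/2$. Removing it drops $c$ by $1$ uniformly. In the second case we delete $va$ and $vb$ with $a \in \Az$, $b \in \Bz$, and $v \notin \Az \cup \Bz$; I would now case-split on the location of $v$ inside an arbitrary extension $(A,B)$. If $v \in A$, then $va$ contributes $0$ and $vb$ contributes $1$ to $c(A,B)$; if $v \in B$, symmetrically; if $v \notin A \cup B$, then each of $va$ and $vb$ contributes $\tfrac12$ to $c(A,B)$. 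In each subcase the pair $\{va,vb\}$ contributes $1$ in total, so deleting both uniformly drops $c$ by $1$. Specializing the claim to $(A,B) = (\Az,\Bz)$ (where $v$ is always in the third subcase) shows $c_{G'}(\Az,\Bz) = c_G(\Az,\Bz) - 1$.

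From~(ii), equivalence is immediate: an integral extension $(A,B)$ has $c_G(A,B) \leq k$ in $\inst$ iff $c_{G'}(A,B) \leq k-1$ in $\inst'$. Maximality is also immediate: the gap $c_{G'}(A,B) - c_{G'}(\Az,\Bz)$ equals $c_G(A,B) - c_G(\Az,\Bz)$ for every extending $(A,B)$, so if the latter was strictly positive for every proper extension in $\inst$ it remains so in $\inst'$.

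Finally, for~(iv), the terminal pairs are untouched so $t_{\inst'} = t_{\inst}$; the budget drops by $1$ so $k_{\inst'} = k_\inst - 1$; and by~(ii), $\nu_{\inst'} = (k-1) - c_{G'}(\Az,\Bz) = (k-1) - (c_G(\Az,\Bz)-1) = \nu_\inst$. Plugging into the definition of $\mu$ gives $\mu_{\inst'} - \mu_\inst = -\wsp_k$, as required. There is no real obstacle here: the only non-obvious piece is the uniform cost-decrease claim in the second case, which hinges on the simple but crucial observation that exactly one of $va,vb$ is cut when $v$ is integrally assigned, while both are half-cut when $v$ remains undecided.
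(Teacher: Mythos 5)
Your proof is correct and follows essentially the same route as the paper's: both establish the uniform cost decrease $c_{G'}(A,B) = c_G(A,B) - 1$ for every extending separation by case analysis on which rule fired (and, in the second case, on the position of $v$), and then derive equivalence, preservation of maximality, and the potential drop as immediate consequences. The only cosmetic difference is that the paper tracks $d(A)$ and $d(B)$ separately (two subcases $v \in A$ and $v \notin A$, then ``symmetrically'' for $B$), whereas you tally each edge's contribution directly to $c(A,B)$ via three subcases for $v$; these are the same bookkeeping.
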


\begin{proof}
Observe that whether $(A,B)$ is a terminal separation extending $(\Az,\Bz)$ does not depend on the instance we are looking at: $\inst$ and $\inst'$
differ only in the edgeset of the graph and the budget. For such a separation, by $c(A,B)$ we denote its cost in $\inst$, and by $c'(A,B)$ its cost in $\inst'$.

We claim that for any terminal separation $(A,B)$ extending $(\Az,\Bz)$ it holds that $c(A,B) = c'(A,B) + 1$.
The claim is straightforward if an edge $ab$ is deleted. For the second case, consider subcases depending on where the vertex $v$ lies.
If $v \in A$, then $d_G(A) = d_{G'}(A) + 1$ due to missing edge $vb$, while if $v \notin A$, then also $d_G(A) = d_{G'}(A) + 1$ due to missing edge $va$.
Symmetrically, $d_G(B) = d_{G'}(B)+1$, which proves the claim. Consequently, the instances $\inst$ and $\inst'$ are equivalent, and $(\Az,\Bz)$ remains
a maximal separation. Furthermore, since $c(\Az,\Bz) = c'(\Az,\Bz)+1$, we have $t_{\inst'} = t_\inst$ and $\nu_{\inst'} = \nu_\inst$, hence
$\mu_{\inst'} = \mu_\inst - \wsp_k$.
\end{proof}

It is easy to observe that the Boundary Reduction can be applied exhaustively in linear time.

In a number of reductions in this section,
in a few places in the analysis of different cases in the branching algorithm, as well as in the reduction rules defined in the next section,
we find a set $X \subseteq V(G)$ of at least two vertices
without any terminals, with at least one vertex of $V(G) \setminus (\Az \cup \Bz)$,
for which we can argue that there exists an integral solution $(\Aopt,\Bopt)$ to $\inst$ of minimum cost
such that $X \subseteq \Aopt$ or $X \subseteq \Bopt$. In this case, we identify $X$ into a single vertex (that belongs to $\Az$ if $X \cap \Az \neq \emptyset$
and to $\Bz$ if $X \cap \Bz \neq \emptyset$), and start from the beginning.

Note that after such reduction $(\Az,\Bz)$ may not be a maximal separation if the contracted set $X$ contains at least one vertex of $\Az \cup \Bz$,
and we need to apply Theorem~\ref{thm:ptime} to extend it to a maximal one.
However, note that the operation of merging vertices only shrinks the space
of all terminal separations, and thus the cost of $(\Az,\Bz)$ cannot
decrease with such a reduction (and, consequently, $\nu_\inst$ cannot increase).

We now introduce four simple rules. The first one reduces clearly superfluous pieces of the graph.

\begin{reduction}[Pendant Reduction]
If there exists a vertex set $X \subseteq V(G) \setminus (\Az \cup \Bz)$ that does not contain any terminal and $|N(X)| \leq 1$, then 
delete $X$ from $G$.

If there exists a vertex set $X \subseteq V(G) \setminus (\Az \cup \Bz)$ that does not contain any terminal and $|N(X)| = 2$,
then let $\lambda$ be the size of the minimum (edge) cut between the two vertices of $N(X)$ in $G[N[X]]$.
If $\lambda \leq k$, then replace $X$ with $\lambda$ edges between the two vertices of $N(X)$,
and otherwise identify $N[X]$ into a single vertex.
\end{reduction}
The safeness of the Pendant Reduction is straightforward: in the first case, for any integral separation $(A,B)$ of the reduced graph, 
one can add $X$ to the set $A$ or $B$ that contains $N(X)$, without increasing the cost of the separation, while in the second
case we can do exactly the same if $N(X)$ belongs to the same side $A$ or $B$, and otherwise we can greedily cut $G[N[X]]$ along the minimum cut
between the vertices of $N(X)$.

Observe that an application of a Pendant Reduction does not merge two terminals and does not 
spoil the invariant that every terminal in $G$ is of degree at
most one. If $|N(X)| \leq 1$, then clearly the deletion of $X$ cannot spoil this property.
Otherwise, if $|N(X)|=2$, then $\lambda$ is at most the degree of any vertex of $N(X)$ in $G[N[X]]$;
in particular, if $N(X)$ contains a terminal, then $\lambda \leq 1$ and the vertices of $N(X)$ are not identified.

This reduction also does not decrease $c(\Az,\Bz)$ (and thus does not increase $\nu_\inst$).
This is clear for $N(X) = \emptyset$. For $|N(X)| = 1$ or $\lambda > k$, it can be modelled as identifying $N[X]$ into a single vertex.
Otherwise, for $|N(X)| = 2$ and $\lambda \leq k$, it can be modelled as identifying the sides of a minimum cut in $G[N[X]]$ between vertices of $N(X)$
onto the corresponding elements of $N(X)$.

Let us now argue that the Pendant Reduction can be applied efficiently.
\begin{lemma}
One can in $\Oh(km)$ time find a set $X$ on which the Pendant Reduction is applicable, or correctly conclude that no such set exists.
\end{lemma}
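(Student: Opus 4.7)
The plan is to reformulate the applicability of the Pendant Reduction as a small-vertex-cut problem and solve it via standard graph decomposition algorithms. Setting $S = \Az \cup \Bz \cup \{\text{terminals of } G\}$, the reduction is applicable if and only if there is a vertex set $Y \subseteq V(G)$ with $|Y| \leq 2$ such that $G - Y$ has a connected component disjoint from $S$; any such component can be taken as $X$, with $N_G(X) \subseteq Y$. Note that requiring $X \cap S = \emptyset$ captures both the constraint $X \subseteq V(G) \setminus (\Az \cup \Bz)$ and the constraint that $X$ contains no terminal.

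I would split on $|Y|$ and handle the cases in order. For $|Y|=0$, a single DFS on $G$ computes its connected components in $\Oh(n+m)$ time and returns any component disjoint from $S$. For $|Y|=1$, I compute the block-cut tree of $G$ in linear time, mark every block containing a vertex of $S$, and then a post-order traversal identifies a cut vertex $y$ such that some branch of the block-cut tree hanging off $y$ is free of marked blocks; the interior of that branch forms a component of $G - y$ disjoint from $S$ and serves as $X$.

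The hard part is $|Y|=2$. I would use the SPQR-tree / triconnectivity decomposition of each biconnected component of $G$, computable in $\Oh(n+m)$ time via Hopcroft--Tarjan and Gutwenger--Mutzel. Every 2-vertex-cut of $G$ corresponds to a separation pair encoded by virtual edges of some SPQR-tree node, and by annotating each separation with the placement of $S$-vertices, a single scan decides in linear time whether any such pair isolates a component disjoint from $S$. The naive alternative of iterating vertex-max-flow over all source candidates would cost $\Oh(nm)$ and could exceed the $\Oh(km)$ budget.

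Finally, in the $|N(X)|=2$ case the Pendant Reduction will next require $\lambda$, the minimum edge cut between the two vertices of $N(X)$ in $G[N[X]]$ capped at $k$; I would compute it by Ford--Fulkerson, terminating after at most $k+1$ augmenting paths, each of cost $\Oh(n+m)$. This $\Oh(km)$ step is consistent with the bound claimed by the lemma, while the detection of $X$ itself runs in $\Oh(n+m)$.
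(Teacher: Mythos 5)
Your proposal reaches the same conclusion as the paper by the same underlying tools: a triconnectivity (SPQR/Tutte) decomposition for the detection step, plus $\Oh(k)$ rounds of Ford--Fulkerson to compute $\min(\lambda,k+1)$ once a candidate $X$ with $|N(X)|=2$ is found. The difference is in how the detection is organised. You split into cases $|Y|=0,1,2$ and handle each with a separate device (DFS, block-cut tree, SPQR tree), whereas the paper uses a small trick that unifies them: it adds a $K_4$ clique and joins it to $L = \Az\cup\Bz\cup\terms$, which forces $K\cup L$ to lie in a single 3-connected component; then \emph{any} leaf bag of the resulting 3-connected-component decomposition other than the one containing $K\cup L$ directly yields $N[X]$, with no case analysis on $|Y|$. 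Your route is correct and arguably more elementary, but the $|Y|=2$ step is doing more work than it admits: an S-node (cycle) of size $\ell$ encodes $\Theta(\ell^2)$ separation pairs, so ``a single scan over separation pairs'' must really be a tree DP annotating subtrees with whether they touch $S$, combined with a per-S-node search for a maximal $S$-free arc; this is doable in $\Oh(n+m)$ but deserves to be spelled out, and the clique trick in the paper's proof sidesteps it entirely. Also watch the interaction between the block-cut tree level (cut vertices) and the SPQR level: a 2-cut $Y$ of $G$ can consist of two cut vertices from different biconnected components, a case not covered by searching separation pairs inside a single biconnected component; the paper's construction handles this automatically, while in your setup it should be folded into the $|Y|\le 1$ block-cut-tree pass (a branch of the block-cut tree hanging off at most two articulation vertices). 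None of this affects correctness of the lemma, only the completeness of the argument.
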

\begin{proof}
First, compute an auxiliary graph $G'$ from $G$ by adding a clique $K$ on four vertices, and making $K$ fully adjacent to $L := \Az \cup \Bz \cup \terms$.
In this manner, the size of $G'$ is bounded linearly in the size of $G$, while $G'[K \cup L]$ is 3-connected. Compute the decomposition into 3-connected components~\cite{tutte-dekomp,tutte-dekomp2}, which can be done in linear time~\cite{tutte-dekomp-compute}.
It is easy to see that the Pendant Reduction is not applicable if and only if the decomposition consists of a single bag,
and otherwise any leaf bag of the decomposition different than the bag containing $K \cup L$
equals $N[X]$ for some $X$ to which the Pendant Reduction is applicable.
Furthermore, for such a set $X$ with $|N(X)|=2$, one can compute $\min(\lambda, k+1)$ in time $\Oh(km)$ using $\Oh(k)$ rounds of the Ford-Fulkerson algorithm.
\end{proof}

The next three reduction rules consider some special cases of how terminals can lie in the graph.
\begin{reduction}[Lonely Terminal Reduction]
If there exists an unresolved terminal pair $P = \{s,t\}$ such that $s$ is an isolated vertex, delete $P$ from $\terms$ and $V(G)$.
\end{reduction}
The safeness of the Lonely Terminal Reduction follows from the observation that in every terminal separation $(A,B)$ of the reduced
graph, we can always put $t$ on the same side as its neighbor (if it exists) and $s$ on the opposite side.
\begin{reduction}[Adjacent Terminals Reduction]
If there exist two neighboring unresolved terminals $t_1$ and $t_2$, then proceed as follows. If they belong to the same terminal pair,
delete both of them from $G$ and from $\terms$, and reduce $k$ by one.
If they belong to different terminal pairs, say $\{s_1,t_1\}$ and $\{s_2,t_2\}$, then delete both these terminal pairs from $\terms$,
   delete the vertices $t_1$ and $t_2$ from $G$, and add an edge $s_1s_2$.
\end{reduction}
For safeness of the Adjacent Terminals Reduction, first recall that terminals are of degree one in $G$, thus $\{t_1,t_2\}$ is a connected
component of $G$. If they belong to the same terminal pair, the edge $t_1t_2$ always belongs to the solution cut and can be deleted. If they belong to different terminal pairs $\{s_1,t_1\}$ and $\{s_2,t_2\}$,
then the edge $t_1t_2$ is cut by a solution $(A^\ast,B^\ast)$ if and only if $s_1$ and $s_2$ are on different sides of the solution, thus we can just as well account for it by replacing it with an edge $s_1s_2$.
\begin{reduction}[Common Neighbor Reduction]
If there exists an unresolved terminal pair $\{s,t\} \in \terms$, such that $s$ and $t$ share a neighbor $a$, then
delete both terminals from $\terms$ and $G$, and decrease $k$ by one.
\end{reduction}
For safeness of the Common Neighbor Reduction, note that in any solution, exactly one edge $as$ or $at$ is cut.

It is straightforward to check in linear time if any of the last three reductions is applicable, and apply one if this is the case.
It follows from maximality of $(\Az,\Bz)$ and the above safeness arguments that none of these reductions increases $\nu_\inst$:
any potential extension $(A,B)$ of $(\Az,\Bz)$ in the reduced graph can be translated to an extension in the original graph, with a cost larger than $c(A,B)$ by exactly the number of times the budget $k$ has been decreased by the reduction.
Consequently, every application of any of the last three reductions decreases the potential $\mu_\inst$ by at least $\wsp_t$, as each removes at least one terminal pair.

The last reduction is the following.
\begin{reduction}[Majority Neighbour Reduction]
If there exists two non-terminal vertices $u,v \in V(G) \setminus (\Az \cup \Bz)$
such that at least half of the edges incident to $u$ have the second endpoint
in $v$, identify $u$ and $v$.
\end{reduction}
The safeness of the Majority Neighbour Reduction is straightforward: in any integral separation that puts $u$ and $v$ on opposite sides,
changing the side of $u$ does not increase the cost of the separation.
Also, it is straightforward to find vertices $u,v$ for which the Majority Neighbour Reduction applies and execute it in linear time.
Note that, since we require $u,v \notin \Az \cup \Bz$, the considered instance remains maximal.

Two more reduction rules will be introduced in Section~\ref{sec:excess}, where we study sets $A \supseteq \Az$ with 
small $d(A) - d(\Az)$.

\subsection{Branching step}

In every branching step, we identify two terminal separations $(A_1,B_1)$ and $(A_2,B_2)$ extending $(\Az,\Bz)$, and branch into two subcases;
in subcase $i$ we replace $(\Az,\Bz)$ with $(A_i,B_i)$.
We always argue the \emph{correctness} of a branch by showing that there exists a solution $(\Aopt,\Bopt)$
extending $(\Az,\Bz)$ of minimum cost, with the additional property that $(\Aopt,\Bopt)$ extends $(A_i,B_i)$ for some $i=1,2$.
In subcase $i$, we apply the algorithm of Theorem~\ref{thm:ptime} to $(G,\terms,(A_i,B_i),k)$ to obtain a maximal separation
$(\Az_i,\Bz_i)$, and pass the instance $\inst_i = (G,\terms,(\Az_i,\Bz_i),k)$ to a recursive call.

To show the \emph{running time bound} for a branching step, we analyze how the measure $\mu_\inst$ decreases in the subcases,
taking into account the reductions performed in the subsequent recursive calls. More formally,
we say that a branching case \emph{fulfills a branching vector}
$$[t_1,\nu_1,k_1;\ t_2,\nu_2,k_2]$$
if, in subcase $i=1,2$, at least $t_i$ terminal pairs become resolved or reduced with one of the reductions,
the cost of the separation $(\Az_i,\Bz_i)$ grows by at least $\nu_i/2$, and the Boundary Reduction is applied at least $k_i$ times in the instance $(G,\terms,(\Az_i,\Bz_i),k)$.

A branching vector $[t_1,\nu_1,k_1;t_2,\nu_2,k_2]$ is \emph{good} if
$$1.977^{-\wsp_t t_1 - \wsp_\nu \nu_1/2 - \wsp_k k_1} + 1.977^{-\wsp_t t_2 - \wsp_\nu \nu_2/2- \wsp_k k_2} < 1.$$
In other words, if in subcase $i=1,2$, the potential $\mu_\inst$ of the instance $\inst$ decreases by $\delta_i$, then we require that $1.977^{-\delta_1}+1.977^{-\delta_2}<1$.
A standard inductive argument for branching algorithms show that, if in every case we perform a branching step that fulfills some good branching vector,
the branching tree originating from an instance $\inst$ has $\Oh(c^{\mu_\inst})$ leaves for some $c < 1.977$ (so that $c^{\mu_\inst-\delta_1}+c^{\mu_\inst-\delta_2} \leq c^{\mu_\inst}$).
To simplify further exposition, we gather in the next lemma good branching vectors used in the analysis; the fact that they are good can be checked
by direct calculations.
\begin{lemma}\label{lem:good-v}
The following branching vectors are good:
\begin{align*}
[1,1,0;2,1,0] & & [1,1,1;1,2,3] & & [1,2,0;1,3,1] & & [1,1,0;1,4,3] \\
[1,1,2;1,2,2] & & [1,1,1;1,3,2] & & [1,3,0;1,3,0] & & [1,1,0;1,5,2] \\
[1,2,1;1,2,2] & & [1,1,1;1,4,1] & &  & &
\end{align*}
\end{lemma}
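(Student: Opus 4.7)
The plan is to verify the claim by direct numerical substitution: for each of the ten listed vectors $[t_1,\nu_1,k_1;t_2,\nu_2,k_2]$, compute the two potential decrements
$$\delta_i \;=\; \wsp_t\, t_i + \tfrac{\wsp_\nu}{2}\, \nu_i + \wsp_k\, k_i, \qquad i=1,2,$$
using the constants $\wsp_t = 0.59950$, $\wsp_\nu = 0.29774$, $\wsp_k = 0.10276$, and then check that $1.977^{-\delta_1} + 1.977^{-\delta_2} < 1$. There is no conceptual content beyond arithmetic; the lemma is a bookkeeping statement whose role is to package numerical facts that will be invoked repeatedly from the case analysis in Section~\ref{sec:miesko}.

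Concretely, I would arrange the ten vectors in a table, list $\delta_1$ and $\delta_2$ in decimal form carried to at least five digits, and then tabulate $1.977^{-\delta_1}+1.977^{-\delta_2}$. For instance, the first vector $[1,1,0;2,1,0]$ gives $\delta_1 = \wsp_t + \wsp_\nu/2 \approx 0.74837$ and $\delta_2 = 2\wsp_t + \wsp_\nu/2 \approx 1.34787$, so the sum $1.977^{-\delta_1}+1.977^{-\delta_2}$ is readily seen to be strictly below $1$. The same bookkeeping works for each of the remaining nine vectors; by symmetry of the roles of the two branches, the order of the two triples in a vector is immaterial, which cuts the work slightly.

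The only real obstacle is numerical precision. The constants $\wsp_t,\wsp_\nu$ (and thus the base $1.977$) are evidently chosen so that the tightest of the ten inequalities is only barely satisfied: this is the reason the weights are fixed to five decimal places rather than being presented as simple fractions. Consequently, while none of the individual checks is difficult, one must carry enough precision to be confident in the strict inequality, and I would state that all calculations were performed with, say, floating-point arithmetic to at least ten significant digits (or, for a fully rigorous record, by evaluating $\log_{1.977}(\cdot)$ of the two summands via interval arithmetic). Apart from this, the proof is a one-line remark that the above ten computations were carried out.
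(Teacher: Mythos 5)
Your proposal is correct and matches the paper's own treatment: the paper explicitly states that the goodness of these vectors ``can be checked by direct calculations,'' and you carry out exactly that verification, plugging the weights $\wsp_t,\wsp_\nu,\wsp_k$ into $\delta_i=\wsp_t t_i+\tfrac{\wsp_\nu}{2}\nu_i+\wsp_k k_i$ and confirming $1.977^{-\delta_1}+1.977^{-\delta_2}<1$ for each row. Your remark about numerical precision is also well placed, since the first vector $[1,1,0;2,1,0]$ is indeed nearly tight.
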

Let us stop here to comment that the vectors in Lemma~\ref{lem:good-v} explain our choice of constants
$\wsp_t$, $\wsp_\nu$, $\wsp_k$. The constant $\wsp_t$ is sufficiently large to make the vector $[1,1,0;2,1,0]$ good;
intuitively speaking, we are always done when in one branch we manage to resolve or reduce at least two terminal pairs.
The choice of $\wsp_\nu$ and $\wsp_k$ represents a very delicate tradeoff that makes both $[1,1,1;1,2,3]$ and
$[1,2,0;1,3,1]$ good; note that setting $\wsp_\nu = 1-\wsp_t$ and $\wsp_k = 0$ makes the first vector not good,
while setting $\wsp_\nu = 0$ and $\wsp_k = 1-\wsp_t$ makes the second vector not good.%
\footnote{One can observe that the goodness of all other vectors mentioned in Lemma~\ref{lem:good-v},
can be easily deduced from the goodness of the first four vectors, by using the fact that $\wsp_t \geq \wsp_\nu \geq \wsp_k$
and that a branching vector cannot stop being good if one moves weight from the ``heavier'' side to the ``lighter'' one.}
In fact, arguably the possibility of a tradeoff that makes both the second and the third vector of Lemma~\ref{lem:good-v} good at the same time
is one of the critical insights in our work.

\subsection{Running time bound}

In the subsequent sections, we will only argue that 
\begin{enumerate}
\item every single application of a reduction or
a branching step is executed in $\Oh(k^{\Oh(1)} m)$
time;
\item every reduction either terminates or reduces
$k + |\terms| + |V(G)|$ by at least one; note that this is true for the reductions defined so far;
\item every branching step is correct and fulfills one of the good vectors
mentioned in Lemma~\ref{lem:good-v}.
\end{enumerate}
Observe that these properties guarantee correctness and the claimed running
time of the algorithm.

In a number of places in the branching algorithm, the algorithm \emph{attempts} some branching $(A_1,B_1),(A_2,B_2)$, and withdraws
this decision if the measure decrease is too small. A naive implementation of such behaviour would lead to an additional $n$ factor in the running
time bound, as exhaustive application of our reduction rules may take $\Oh(k^{\Oh(1)} nm)$ time, only to be later withdrawn. To maintain the $\Oh(nm)$ polynomial
factor in our running time bound, we restrict such attempts to only the following procedure: for $i=1,2$, we apply Theorem~\ref{thm:ptime} to
obtain a minimum cost extension $(\Az_i,\Bz_i)$ of $(A_i,B_i)$, and report:
\begin{enumerate}
\item the number of terminal pairs contained in $(\Az_i \cup \Bz_i) \setminus (\Az \cup \Bz)$, i.e., the immediate decrease in $t_\inst$;
\item the difference $c(\Az_i,\Bz_i) - c(\Az,\Bz)$, i.e., the immediate decrease in $\nu_\inst$;
\item the number of immediately applicable Boundary Reductions, defined as follows:
$$\rho_i := |E(\Az_i,\Bz_i)| + \sum_{v \in V(G) \setminus (\Az_i \cup \Bz_i)} \min(|E(v,\Az_i)|,|E(v,\Bz_i)|).$$
\end{enumerate}
Clearly, the aforementioned numbers are computable in $\Oh(k^{\Oh(1)}m)$ time.

\section{Low excess sets}\label{sec:excess}

Let $\inst = (G,\terms,(\Az,\Bz),k)$ be a maximal \termsep{} instance.
A set $A \subseteq V(G)$ is an \emph{$\Az$-extension} if $\Az \subseteq A \subseteq V(G) \setminus \Bz$. It is \emph{terminal-free} if $A \setminus \Az$ does not
contain any terminal.
We denote by $\Delta(A) := d(A)-d(\Az)$ the \emph{excess} of an $\Az$-extension $A$.
An $\Az$-extension $A$ is \emph{compact} if $A \setminus \Az$ is connected and $E(A \setminus \Az, \Az) \neq \emptyset$.

In this section we consider extensions of small excess, and show that their structure can be reduced to have a relatively simple picture.
While in this section we focus on supersets of the set $\Az$,
by symmetry the same conclusion holds if we swap the roles of $\Az$ and $\Bz$.
In our algorithm, we exhaustively apply the reduction rules defined in this section both for the $A$-side and $B$-side of the separation $(\Az,\Bz)$.

Before we start, let us first observe that we can efficiently enumerate all maximal sets of particular constant excess.
\begin{lemma}\label{lem:enum-excess}
For every fixed constant $r$, one can in $\Oh(k^{\Oh(1)}(n+m))$ time enumerate
all inclusion-wise maximal compact $\Az$-extensions of excess at most $r$.
\end{lemma}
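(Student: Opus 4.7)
The plan is to reduce the enumeration task to a structured near-minimum $s$-$t$ cut enumeration in a contracted graph, and then exploit the bounded excess to limit the branching to depth $\Oh(r)$. First I would contract $\Az$ to a source vertex $s$ and $\Bz$ to a sink vertex $t$, obtaining a multigraph $H$. A compact $\Az$-extension $A$ of excess at most $r$ is in one-to-one correspondence with a set $S\subseteq V(H)\setminus\{t\}$ satisfying $s\in S$, $|S|\geq 2$, $S\setminus\{s\}$ connected in $H-s$, $s$ adjacent to $S\setminus\{s\}$, and $|\delta_H(S)|\leq d_H(s)+r$; so the task reduces to enumerating the inclusion-maximal such $S$.

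I would first handle the terminal-free sub-case, where $V(H)\setminus\{s,t\}$ contains no terminals. By maximality of $(\Az,\Bz)$, every strict terminal-free $\Az$-extension has strictly larger $d$-value, which forces $d_H(s)=\lambda$, the minimum $s$-$t$ cut value in $H$. Hence the cuts of interest are exactly the near-minimum cuts of excess at most $r$. A max flow computed in $\Oh(km)$ time (using $\lambda\leq d_H(s)\leq k+r$) yields a residual graph $R$; by the Picard--Queyranne theorem, minimum $s$-$t$ cuts correspond to closures in the DAG of strongly connected components of $R$. Near-minimum cuts of excess at most $r$ correspond to closures perturbed by at most $r$ violating residual arcs, giving $\Oh(k^r)$ candidates enumerable by branching of depth $r$; for each candidate, the corresponding maximal $S$ is computed by reachability in $\Oh(n+m)$ time.

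Terminals in $V(H)\setminus\{s,t\}$ (of which there are $\Oh(k)$, each of degree $1$) are handled by an outer branching deciding which of them lie in $S$. Since including a terminal changes $|\delta_H(S)|$ by $\pm 1$, only $\Oh(r)$ terminals can be in $S\setminus\{s\}$, so there are $k^{\Oh(r)}$ outer branches. For each branch, force the chosen terminals into $S$ by merging them with $s$ and apply the terminal-free procedure. A final post-processing in $\Oh(n+m)$ time per candidate discards non-maximal and non-compact outputs, giving the claimed total running time of $\Oh(k^{\Oh(1)}(n+m))$.

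The main obstacle will be showing that the near-minimum cut branching is \emph{complete}---that every inclusion-maximal low-excess extension is captured. This requires extending the Picard--Queyranne machinery to $r$-excess cuts and arguing, via submodularity, that any two maximal candidates differ on a bounded number of residual edges, which in turn justifies the depth-$r$ branching strategy. The subtle interaction between the compactness requirement (connectivity of $S\setminus\{s\}$ plus adjacency to $s$) and the closure structure of near-minimum cuts will require some extra care in the post-processing, since a candidate $S$ that is not compact may still give rise to several valid compact extensions by restricting to individual connected components adjacent to $s$.
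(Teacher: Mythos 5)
Your proposal takes a genuinely different route from the paper --- the paper uses the important-separator-style BFS: maintain a queue of extensions, and for each $A$ in the queue and each $v\in N(A)$, compute the inclusion-maximal minimum-excess extension $A_v\supseteq A\cup\{v\}$ avoiding $\Bz\cup(\terms\setminus\Az)$ via Ford--Fulkerson, inserting $A_v$ when its excess is still at most $r$. The queue size $\Oh((2k+r)^r)$ follows because each push strictly increases $d(\cdot)$, bounded by $d(\Az)+r \leq 2k+r$. Your Picard--Queyranne route is plausible in spirit but is not a complete proof as written: you explicitly flag that ``the main obstacle will be showing that the near-minimum cut branching is complete,'' i.e., the claim that inclusion-maximal $r$-excess cuts correspond to closures perturbed by at most $r$ residual arcs and can be enumerated by depth-$r$ branching is asserted, not proven. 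That claim is precisely the content of the lemma, so leaving it as an ``obstacle'' is a real gap, not a deferred technicality. Note also that the supporting intuition you give (``any two maximal candidates differ on a bounded number of residual edges'') is false in general: two maximal $r$-excess cuts can diverge along different chains of the SCC-DAG and differ on $\Omega(n)$ arcs.

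Two further concrete problems. First, the lemma as used in the paper enumerates \emph{terminal-free} compact $\Az$-extensions (the cut in the paper's proof is taken against $\Bz\cup(\terms\setminus\Az)$), so your outer branching on which terminals to include in $S$ is solving a different, broader task; moreover, for terminals you decide \emph{not} to include, you need to actively merge them into $t$, otherwise the inner procedure may still pick them up. Second, the compactness/maximality post-processing is hand-waved: if a candidate $S$ has $S\setminus\{s\}$ disconnected, restricting to a single component adjacent to $s$ yields a compact extension of smaller excess, but it need not be inclusion-maximal among $r$-excess compact extensions, so ``discard non-compact outputs'' can lose required extensions; one must re-augment each component to maximality, which is the very step the paper's queue handles automatically.
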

\begin{proof}
Our algorithm will in fact enumerate all compact $\Az$-extensions $A$ of excess at most $r$
with the property that every compact $\Az$-extension $A'$ with $A \subsetneq A'$ satisfies $\Delta(A') > \Delta(A)$.
The approach closely follows the algorithm for enumerating important separators (see, e.g.,~\cite{our-book}, Chapter 8).

By the maximality of $(\Az,\Bz)$, $\Az$ is the only such extension of excess $0$.
We initiate a queue $Q$ with $Q = \{\Az\}$. Iteratively, until $Q$ is not empty,
we extract an extension $A$ from $Q$, and proceed as follows. For every $v \in N(A)$,
we compute a set $A_v$ such that $E(A_v,V(G) \setminus A_v)$ is a minimum cut between
$A \cup \{v\}$ and $\Bz \cup (\terms \setminus \Az)$, or take $A_v = \bot$
if for such a set $d(A_v)$ would be larger than $d(\Az)+r$. 
Such a set $A_v$ can be computed using $\Oh(k+r)$ rounds of the Ford-Fulkerson algorithm,
and furthermore it allows us to compute $A_v$ being the unique inclusion-wise maximal set with the required properties.

If $A_v \neq \bot$, we insert $A_v$ into the queue $Q$. Otherwise, if $A_v = \bot$ for every $v \in N(A)$, then
we output $A$ as one of the desired sets.
For correctness, observe that every set $A$ in the queue has excess at most $r$, and the described procedure uses
the definition of compactness to check if there exists any other extension of excess at most $r$ being a strict superset of $A$.
For the time bound, observe that whenever a set $A_v$ is inserted into the queue, it holds that $d(A_v) > d(A)$, while $d(\Az) \leq 2k$ (because of the Terminator Reduction).
Hence, $\Oh((2k+r)^r)$ sets are inserted into the queue. Moreover, the computation for a single set $A$ extracted from the queue
takes $\Oh(k^{\Oh(1)} (n+m))$ time.
\end{proof}

We now proceed to the promised description of reductions.
A straightforward corollary of the assumption that $\inst$ is maximal
is the following.

\begin{lemma}\label{lem:ex0}
If $A$ is a terminal-free $\Az$-extension of excess zero or less,
then $A = \Az$.
\end{lemma}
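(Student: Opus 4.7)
The plan is to use maximality of $(\Az,\Bz)$ applied to the candidate extension $(A,\Bz)$.

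First I would verify that $(A,\Bz)$ is a valid terminal separation that extends $(\Az,\Bz)$. The disjointness $A \cap \Bz = \emptyset$ is immediate from the definition of an $\Az$-extension, namely $A \subseteq V(G) \setminus \Bz$. The nontrivial condition is the terminal condition: for every terminal pair $P$, either $P$ is split between $A$ and $\Bz$, or $P$ lies entirely outside $A \cup \Bz$. Since $A$ is terminal-free, $A \setminus \Az$ contains no terminal, so the terminals present in $A$ are exactly those in $\Az$. Hence the terminal pattern with respect to $(A,\Bz)$ is identical to that with respect to $(\Az,\Bz)$, and the required condition is inherited from the fact that $(\Az,\Bz)$ is already a valid terminal separation. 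Clearly $(A,\Bz)$ extends $(\Az,\Bz)$.

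Next I would compute the cost: $c(A,\Bz) = (d(A)+d(\Bz))/2 = (d(\Az) + \Delta(A) + d(\Bz))/2 = c(\Az,\Bz) + \Delta(A)/2$. By assumption $\Delta(A) \leq 0$, so $c(A,\Bz) \leq c(\Az,\Bz)$.

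Finally, I invoke maximality of the instance $\inst$: any separation strictly extending $(\Az,\Bz)$ has strictly greater cost. Therefore, the only way to avoid contradiction is $(A,\Bz) = (\Az,\Bz)$, i.e.\ $A = \Az$. There is no real obstacle here; the entire content of the lemma is just an unpacking of what maximality means when the ``B-side'' is kept fixed and we consider terminal-free enlargements of the ``A-side''. The terminal-free hypothesis is exactly what is needed to ensure that enlarging $\Az$ to $A$ does not break the terminal-separation property, so that maximality can be invoked.
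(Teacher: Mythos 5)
Your proof is correct and is exactly the argument the paper has in mind; the paper merely calls this a ``straightforward corollary of the assumption that $\inst$ is maximal'' without spelling out the details, and your write-up fills them in faithfully (checking that $(A,\Bz)$ remains a terminal separation, computing $c(A,\Bz) \le c(\Az,\Bz)$, and invoking maximality).
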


We now study extensions of excess $1$.

\begin{lemma}\label{lem:ex1}
If $A$ is a terminal-free $\Az$-extension of excess $1$, then there
exists a minimum cost integral terminal separation $(\Aopt,\Bopt)$ extending
$(\Az,\Bz)$, such that $(A\setminus \Az)$ is either completely contained in $\Aopt$
or completely contained in $\Bopt$.
\end{lemma}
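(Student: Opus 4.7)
The plan is to uncross an arbitrary minimum cost integral extension $(\Aopt,\Bopt)$ of $(\Az,\Bz)$ against the set $A$. Set $A' := A \setminus \Az$, $P := A' \cap \Aopt$, and $Q := A' \cap \Bopt$, so $A'$ is the disjoint union of $P$ and $Q$. If $P = \emptyset$ or $Q = \emptyset$ then $A'$ already lies entirely on one side of $(\Aopt,\Bopt)$ and the lemma holds with $(\Aopt,\Bopt)$ itself, so suppose both $P$ and $Q$ are nonempty.

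The main step is to consider the candidate separation $(\Aopt',\Bopt') := (\Aopt\cup A,\ \Bopt\setminus A)$. A routine check confirms that it is still an integral terminal separation extending $(\Az,\Bz)$: only vertices of $A\setminus\Az$ change sides, and since $A$ is terminal-free no terminal is moved, so all terminal-pair constraints are preserved. Applying submodularity~\eqref{eq:submodularity} to $\Aopt$ and $A$ gives
$$d(\Aopt) + d(A) \ \ge\ d(\Aopt\cap A) + d(\Aopt\cup A) \ =\ d(\Az\cup P) + d(\Aopt'),$$
using the identity $\Aopt\cap A = \Az\cup P$, which holds because $\Az\subseteq\Aopt$ and $\Aopt\cap A'=P$.

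I would close the argument with two lower bounds. The excess-one hypothesis gives $d(A) = d(\Az)+1$. For $d(\Az\cup P)$, observe that terminal-freeness of $A$ makes $(\Az\cup P, \Bz)$ a valid terminal separation (unresolved pairs stay outside $\Az\cup P\cup \Bz$ because $P\subseteq A$ contains no terminals), and because $P\ne\emptyset$ it is a strict extension of $(\Az,\Bz)$; maximality of $(\Az,\Bz)$ then forces $d(\Az\cup P) \ge d(\Az)+1$. Substituting both estimates into the displayed inequality collapses it to $d(\Aopt') \le d(\Aopt)$, which by optimality of $(\Aopt,\Bopt)$ must hold with equality. Hence $(\Aopt',\Bopt')$ is a minimum cost integral extension satisfying $A\setminus\Az\subseteq \Aopt'$, as desired.

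The only subtle point---the place one could slip up---is the terminal-separation bookkeeping for $(\Az\cup P,\Bz)$ and $(\Aopt',\Bopt')$, where one must explicitly invoke terminal-freeness of $A$ both to argue that unresolved terminal pairs remain outside $\Az\cup P\cup \Bz$ in the first case, and that resolved and unresolved pairs remain correctly split in the second. Once this bookkeeping is in place, the lemma reduces to the one-line submodularity-plus-maximality computation above, and no further combinatorial analysis is needed.
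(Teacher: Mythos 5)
Your proposal is correct and follows essentially the same route as the paper: uncross $(\Aopt,\Bopt)$ against $A$ by passing to $(\Aopt\cup A,\Bopt\setminus A)$, use submodularity, and observe that $\Aopt\cap A=\Az\cup P$ is a terminal-free $\Az$-extension strictly larger than $\Az$ so has excess at least $1$. The only cosmetic difference is that you invoke maximality of $(\Az,\Bz)$ directly where the paper cites its Lemma~\ref{lem:ex0} (which is the same fact packaged as a lemma).
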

\begin{proof}
Let $(\Aopt, \Bopt)$ be a minimum cost integral terminal separation extending $(\Az,\Bz)$.

If $(A\setminus \Az)$ is completely contained in $\Bopt$,
then $(\Aopt, \Bopt)$ proves the claim, so let us assume the contrary:
$(A \setminus \Az) \cap \Aopt \neq \emptyset$. Then $A\cap \Aopt \neq \Az$.
We show that $(\Aopt \cup A, \Bopt \setminus A)$ is
a minimum cost integral separation, proving the claim.

Indeed, since $A$ is terminal-free, $(\Aopt \cup A, \Bopt \setminus A)$ is an integral terminal separation. It suffices to show that it is minimum, that is, $d(\Aopt \cup A) \leq d(\Aopt)$.
By submodularity, $d(\Aopt \cup A) + d(\Aopt \cap A) \leq d(\Aopt) + d(A)$.
Since $A \cap \Aopt$ is a terminal-free $\Az$-extension and $A \cap \Aopt\neq \Az$,
by Lemma~\ref{lem:ex0} we have $\Delta(A \cap \Aopt) > 0$,
which means $d(A \cap \Aopt)\geq 1 + d(\Az)$. By assumption $d(A) = 1+d(\Az)$.
Taking this together, $d(\Aopt \cup A) \leq d(\Aopt) + d(A) - d(\Aopt \cap A) \leq d(\Aopt)$, which concludes the proof.
\end{proof}

Lemma~\ref{lem:ex1} proves safeness of the following reduction rule.
\begin{reduction}[Excess-1 Reduction]
If there exists a terminal-free $\Az$-extension of excess $1$ with 
$|A \setminus \Az| > 1$, merge all vertices of $A \setminus \Az$
into a single vertex.
\end{reduction}

The next lemma shows that one can apply the Excess-1 Reduction efficiently.
\begin{lemma}\label{lem:ex1-apply}
Given a maximal instance $\inst$ for which none of the previously defined
reduction rules is applicable, one can in $\Oh(k^{\Oh(1)} (n+m))$ time
find a set $A$
for which the Excess-1 Reduction rule is applicable, or correctly
conclude that no such set exists.
\end{lemma}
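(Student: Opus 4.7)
The plan is to reduce the task almost entirely to a single invocation of Lemma~\ref{lem:enum-excess} with $r=1$, which in $\Oh(k^{\Oh(1)}(n+m))$ time lists all inclusion-wise maximal compact $\Az$-extensions of excess at most $1$. The first key observation I would record is that every such extension is automatically terminal-free: the minimum-cut computed inside the proof of Lemma~\ref{lem:enum-excess} separates the candidate set from $\Bz \cup (\terms \setminus \Az)$, so the output never contains any terminal outside $\Az$. After obtaining the list, the algorithm simply scans it and returns any $A$ with $\Delta(A)=1$ and $|A\setminus \Az|>1$; if none is present, it reports that the Excess-1 Reduction is not applicable. The remaining work is to prove that this negative answer is correct.

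To do so, I would start from an arbitrary (not necessarily compact) terminal-free $\Az$-extension $A$ of excess $1$ with $|A\setminus\Az|>1$, decompose $A\setminus \Az$ into connected components $C_1,\dots,C_p$ of $G[A\setminus\Az]$, and verify the additive identity $\Delta(A)=\sum_{i=1}^p \Delta(\Az\cup C_i)$ by summing the contributions of edges leaving each $C_i$ (no edges between different $C_i$ exist in $G[A\setminus\Az]$, so nothing is double-counted). Since each $\Az\cup C_i$ is a nontrivial terminal-free $\Az$-extension, Lemma~\ref{lem:ex0} forces $\Delta(\Az\cup C_i)\geq 1$, which combined with $\Delta(A)=1$ immediately gives $p=1$. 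Hence $A\setminus\Az$ is connected.

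Next I would rule out the non-compact subcase, namely $E(A\setminus \Az,\Az)=\emptyset$. In that situation $\Delta(A)=1$ forces $A\setminus \Az$ to have exactly one edge to $V(G)\setminus A$, so $|N(A\setminus\Az)|\leq 1$, and the Pendant Reduction would have fired on the terminal-free set $A\setminus \Az$, contradicting the standing assumption that no earlier reduction applies. Therefore $A$ is compact, and I can extend it to an inclusion-wise maximal compact $\Az$-extension $A'\supseteq A$ of excess at most $1$. Because $A'\neq \Az$, Lemma~\ref{lem:ex0} yields $\Delta(A')\geq 1$, so $\Delta(A')=1$; moreover $|A'\setminus\Az|\geq |A\setminus\Az|>1$ and $A'$ is terminal-free by the first observation, so $A'$ is precisely the kind of set the scan returns, completing the correctness argument.

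The step I expect to be most delicate is pinning down that Lemma~\ref{lem:enum-excess} indeed delivers terminal-free extensions and that a witness, if one exists, always appears as a subset of some enumerated maximal one. Once those two facts are in place, the component-wise decomposition identity $\Delta(A)=\sum_i\Delta(\Az\cup C_i)$ turns the structural argument into a short case analysis powered by Lemma~\ref{lem:ex0} and the absence of the Pendant Reduction; the running time bound then follows immediately from the cost of Lemma~\ref{lem:enum-excess} together with a linear-time post-processing of a list of $\Oh(k^{\Oh(1)})$ candidates.
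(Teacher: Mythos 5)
Your proposal is correct and mirrors the paper's proof: establish that every terminal-free excess-$1$ $\Az$-extension is compact (connectedness of $A\setminus\Az$ via the component-wise additivity of $\Delta$ together with Lemma~\ref{lem:ex0}, and $E(A\setminus\Az,\Az)\neq\emptyset$ via the Pendant Reduction), then invoke Lemma~\ref{lem:enum-excess} and scan the output for a set of excess exactly $1$ with $|A'\setminus\Az|>1$; your observation that the enumeration procedure only ever produces terminal-free sets is the right justification for being allowed to apply the Excess-1 Reduction to whatever the scan returns. The only wrinkle is a small ordering issue in the last step: you invoke Lemma~\ref{lem:ex0} on $A'$ before knowing that $A'$ is terminal-free and then deduce terminal-freeness from the enumeration, which is slightly circular; it is cleaner to take $A'$ to be a maximal element of the family of \emph{terminal-free} compact $\Az$-extensions of excess at most $1$ that contain $A$, so that terminal-freeness holds by construction, Lemma~\ref{lem:ex0} then gives $\Delta(A')=1$, and Lemma~\ref{lem:enum-excess} guarantees $A'$ appears among the enumerated sets.
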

\begin{proof}
Let $A$ be a terminal-free $\Az$-extension of excess 1.
If $A \setminus \Az$ is disconnected, then for any connected component $C$ of $A \setminus \Az$
we have that $d(\Az \cup C) + d(A \setminus C) = d(\Az) + d(A)$, hence either $d(\Az \cup C) \leq d(\Az)$ or $d(A \setminus C) \leq d(\Az)$,
contradicting the maximality of $(\Az,\Bz)$. Thus, $A \setminus \Az$ is connected.
If $E(A\setminus \Az, \Az)$ were empty, then $A\setminus \Az$ would be a terminal-free set with $d(A\setminus \Az)=1$, and would hence be deleted by the Pendant Reduction.

Consequently, every terminal-free $\Az$-extension of excess $1$ is compact.
We can enumerate all such inclusion-wise maximal extensions by Lemma~\ref{lem:enum-excess}, and apply the reduction for any such set $A$ with $|A \setminus \Az| > 1$.
\end{proof}

We can henceforth assume that for every terminal-free $\Az$-extension $A$
of excess $1$, the set $A \setminus \Az$ is a singleton.

We now move to an analysis of sets of excess $2$.
\begin{lemma}\label{lem:ex2}
Assume that the Pendant Reduction and Excess-1 Reduction have been exhaustively applied.
If $A$ is a terminal-free $\Az$-extension of excess $2$, then there exists
a partition $A \setminus \Az = D \uplus C_1 \uplus C_2 \uplus \ldots \uplus C_r$ for some $r \geq 0$, such that:
\begin{enumerate}
\item there exists
a minimum cost integral terminal separation $(\Aopt,\Bopt)$ extending 
$(\Az,\Bz)$, such that one of the following holds:
\begin{itemize}
\item $(A \setminus \Az) \cap \Aopt = \emptyset$;
\item $(A \setminus \Az) \cap \Aopt = C_i$ for some $1 \leq i \leq r$; or
\item $A \subseteq \Aopt$.
\end{itemize}
\item for every $1 \leq i \leq r$, 
the sets $C_i$ and $E(C_i,\Az)$ are nonempty, and $\Az \cup C_i$
is a terminal-free $\Az$-extension of excess $1$;
\item if $D \neq \emptyset$, then for every $1 \leq i \leq r$
the set $E(C_i,D)$ is nonempty and $A\setminus\Az$ is connected;
\item if $D = \emptyset$, then $r=2$;
\item for every $1 \leq i < j \leq r$, there are no edges between $C_i$ and $C_j$.
\end{enumerate}
\end{lemma}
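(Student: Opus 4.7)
The plan is to define, for each $v \in A\setminus\Az$, the vertex-potential $\pi(v) := |E(v, V(G)\setminus\Az\setminus\{v\})| - |E(v,\Az)|$, use it to classify the vertices of $A\setminus\Az$, and build the partition around the ``singleton excess-$1$ vertices'' with $\pi(v) = 1$. A direct degree-counting calculation yields
\[
\sum_{v\in A\setminus\Az} \pi(v) \;=\; 2 + 2\,|E(G[A\setminus\Az])|,
\]
while maximality of $(\Az,\Bz)$ forces $\pi(v)\geq 1$, and the Pendant Reduction guarantees that $\pi(v) = 1$ implies $|E(v,\Az)| \geq 1$ (otherwise $\{v\}$ would already have been reduced).

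If $G[A\setminus\Az]$ is disconnected, submodularity applied to each connected component $X_j$ shows that $\Az \cup X_j$ is a terminal-free $\Az$-extension, with the excesses summing to $\Delta(A)=2$; by maximality each excess is at least $1$, so there are exactly two components, each of excess $1$. The Pendant Reduction rules out a component without an edge to $\Az$, so the Excess-1 Reduction forces $|X_j|=1$; setting $D = \emptyset$, $r=2$, $C_j = X_j$ yields properties (2)--(5) immediately. For the connected case, if $|A\setminus\Az|=1$ take $D = A\setminus\Az$, $r=0$; otherwise let $C := \{v : \pi(v)=1\}$ and note that the identity above, combined with the spanning-tree bound $|E(G[A\setminus\Az])|\geq |A\setminus\Az| - 1$, forces $D_0 := (A\setminus\Az) \setminus C$ to be nonempty. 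Then greedily refine: while $G[C]$ still contains an edge $uv$, move one endpoint (say $u$) from $C$ into $D$. The resulting $C'$ is independent in $G[A\setminus\Az]$, and because $A\setminus\Az$ is connected with at least two vertices, every $v \in C'$ has a neighbor in $A\setminus\Az$, which by independence of $C'$ must lie in $D' := (A\setminus\Az)\setminus C'$. Define $C_i = \{v_i\}$ for $v_i \in C'$; properties (2)--(5) then hold by construction.

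For property (1), take any minimum-cost integral separation $(\Aopt,\Bopt)$ extending $(\Az,\Bz)$ and let $X = (A\setminus\Az) \cap \Aopt$. If $|X|\geq 2$, then by exhaustive application of the Excess-1 Reduction $\Delta(\Az\cup X)\geq 2 = \Delta(A)$, so the submodular inequality $d(\Aopt\cup A) + d(\Aopt\cap A) \leq d(\Aopt)+d(A)$ gives $d(\Aopt\cup A)\leq d(\Aopt)$, and we may replace $\Aopt$ by $\Aopt\cup A$, landing in the ``$A \subseteq \Aopt$'' option. By the symmetry between $\Az$ and $\Bz$ noted at the beginning of Section~\ref{sec:excess}, the symmetric argument on the $\Bz$-side handles $|(A\setminus\Az)\setminus X|\geq 2$, producing an optimum with $X'=\emptyset$. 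The remaining case $|X|, |(A\setminus\Az)\setminus X| \leq 1$ forces $|A\setminus\Az|\leq 2$; if disconnected, $X$ coincides with one of the components $C_1, C_2$, and if connected and of size $2$ the counting identity gives $\sum\pi = 4$, so some vertex has $\pi\geq 2$ and the earlier argument applied to the corresponding singleton once again reduces us to a handled case.

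The main obstacle I anticipate is the $\Bz$-side symmetric step in property (1): although submodularity and maximality are formally symmetric in $\Az$ and $\Bz$, obtaining $d(\Aopt\setminus X)\leq d(\Aopt)$ requires applying submodularity to $\Bopt$ and a carefully chosen superset of $\Bz \cup ((A\setminus\Az)\setminus X)$, and exploiting that adding $X$ back into this $\Bz$-extension cannot drop its excess below the level guaranteed by the exhaustively applied $\Bz$-side Excess-1 Reduction. Making this step fully rigorous, and matching it cleanly to the constructed partition in the edge cases where $C$ becomes empty after the greedy peeling, is the combinatorial heart of the lemma.
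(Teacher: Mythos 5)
Your potential function $\pi(v)=\Delta(\Az\cup\{v\})$ and the counting identity $\sum_{v\in A\setminus\Az}\pi(v)=2+2|E(G[A\setminus\Az])|$ are a nice reformulation, and the constructed decomposition does in the end coincide with the paper's (which takes $C_i$ to be exactly the vertices of $A\setminus\Az$ with $\pi=1$, after the Excess-$1$ Reduction collapses each maximal excess-$1$ extension to a singleton). However there are two problems, the second of which is a real gap.

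First, the ``greedy refinement'' step is both unnecessary and dangerous. You should observe that the set $C=\{v\in A\setminus\Az:\pi(v)=1\}$ is \emph{already} independent in $G[A\setminus\Az]$: if $u,v\in C$ were adjacent, then $\Delta(\Az\cup\{u,v\})=\pi(u)+\pi(v)-2|E(u,v)|\le 0$, contradicting Lemma~\ref{lem:ex0} (this is exactly the paper's argument for property (5)). Because you did not notice this, your construction \emph{in principle} allows a vertex $u$ with $\pi(u)=1$ to be placed in $D$. That would break property~(1): an optimum with $(A\setminus\Az)\cap\Aopt=\{u\}$ would give an $\Aopt\cap A$ of excess $1$, so the submodularity step would not yield $d(\Aopt\cup A)\le d(\Aopt)$, and $\{u\}$ is not one of your $C_i$'s either. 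The proof only goes through because the refinement happens to be vacuous.

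Second, and more importantly, your treatment of property~(1) has a genuine gap that you yourself flag (``the combinatorial heart of the lemma''), and it is the wrong gap to be stuck on: no $\Bz$-side argument is needed. The case split should stay entirely on the $\Az$-side, as in the paper. Write $X=(A\setminus\Az)\cap\Aopt$. If $X=\emptyset$ you are done; if $X=C_i$ for some $i$ you are done. Otherwise, $\Az\cup X$ is a terminal-free $\Az$-extension whose excess is at least $2$: either $|X|\ge 2$ (and the Excess-$1$ Reduction is inapplicable, so excess cannot be $1$; Lemma~\ref{lem:ex0} excludes excess $\le 0$), or $X=\{u\}$ with $u\in D$ and hence $\pi(u)\ge 2$. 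In all these subcases $d(\Aopt\cap A)=d(\Az\cup X)\ge d(A)$, so submodularity gives $d(\Aopt\cup A)\le d(\Aopt)$ and we may replace $\Aopt$ by $\Aopt\cup A$, landing in the ``$A\subseteq\Aopt$'' option. Your case $|(A\setminus\Az)\setminus X|\ge 2$ is thus entirely absorbed: it never needs the symmetric $\Bz$-side reasoning (which, as you suspected, does not cleanly close, because you have no control over $d(\Bz\cup(A\setminus\Az))$). Dropping that detour and the refinement step turns your argument into the paper's.
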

\begin{proof}
Let $C'_1,\dots,C'_r$ be all the inclusion-wise maximal subsets of $A$
that are $\Az$-extensions of excess 1.
Let $C_i = C'_i\setminus \Az$ and let $D=A \setminus (\Az \cup C_1 \cup \dots \cup C_r)$.
We show the claim is true for these sets.
Let $1\leq i\neq j \leq r$.

The Excess-1 Reduction allows us to assume that $C_i$ is a singleton and hence $C_i$ is disjoint from $C_j$.
Since $\Delta(C'_i)=\Delta(C'_j)=1$ and $\Delta(C'_i \cup C'_j) \geq 2$ (by maximality of $C'_i$), there are no edges between $C_i$ and $C_j$, proving point 5.

If $E(C_i,\Az)$ were empty, then $d(C_i)=1$ and $C_i$ would be deleted by the Pendant Reduction; this proves point 2.
If $D\neq \emptyset$ but $A\setminus\Az$ was disconnected, then consider a component $C$ of $A\setminus\Az$.
Then $\Delta(A) = \Delta(\Az \cup C) + \Delta(A \setminus C)$,
hence either $\Delta(\Az \cup C) = \Delta(A \setminus C) = 1$, which would contradict that $D\neq \emptyset$,
or one of $\Az \cup C, A\setminus C$ has excess 0, which would contradict Lemma~\ref{lem:ex0}.
Hence $A\setminus \Az$ is connected and as there are no edges between $C_i$ and $C_j$, 
there must be edges between $C_i$ and $D$, proving point 3.
If $D=\emptyset$, then $\Delta(A) = \sum_{i=1}^r \Delta(\Az \cup C_i) = r$. Hence $r=2$, proving point 4.

To prove point 1, consider a minimum cost integral terminal separation $(\Aopt,\Bopt)$.
Since $C_i$ is a singleton, it is either completely contained in $\Aopt$ or disjoint from it.
If $\Aopt \cap (A\setminus \Az)$ is empty or equal to one of $C_i$, the claim follows.
Otherwise, $\Aopt \cap (A\setminus \Az)$ contains a vertex of $D$ or two of the $C_i$ sets;
by their maximality, the excess of $\Aopt \cap A$ is then at least 2, so $d(\Aopt \cap A) \geq d(A)$.
By submodularity, $d(\Aopt \cup A) + d(\Aopt \cap A) \leq d(\Aopt) + d(A)$
and thus $d(\Aopt \cup A) \leq d(\Aopt)$.
Therefore, since $A$ is terminal-free, $(\Aopt \cup A, \Bopt \setminus A)$ is an integral terminal separation, concluding the proof.
\end{proof}

Lemma~\ref{lem:ex2} ensures safeness of the following reduction rule.
\begin{reduction}[Excess-2 Reduction]
If there exists a terminal-free $\Az$-extension $A$ of excess $2$ 
such that in the partition $D \uplus C_1 \uplus \ldots \uplus C_r$ defined by
Lemma~\ref{lem:ex2}, $|D|>1$,
then merge $D$ into a single vertex.
\end{reduction}

We are left with an efficient implementation of this rule.
\begin{lemma}\label{lem:ex2-apply}
Given a maximal instance $\inst$ for which none of the previously defined
reduction rules is applicable, one can in $\Oh(k^{\Oh(1)} (n+m))$ time
find a set $A$
for which the Excess-2 Reduction is applicable and compute the decomposition
of $A \setminus \Az$ of Lemma~\ref{lem:ex2}, or correctly
conclude that no such set $A$ exists.
\end{lemma}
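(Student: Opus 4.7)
The plan is to adapt the strategy of Lemma~\ref{lem:ex1-apply}, replacing the search for excess-$1$ extensions with the enumeration of maximal compact $\Az$-extensions of excess at most $2$ provided by Lemma~\ref{lem:enum-excess}. First I would invoke that lemma with $r=2$ to obtain, in time $\Oh(k^{\Oh(1)}(n+m))$, a list $\mathcal{L}$ of candidate extensions whose size is polynomial in $k$, and then process each candidate $A'\in\mathcal{L}$ of excess exactly $2$ to check the applicability condition $|D|>1$ of the Excess-2 Reduction.

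For a fixed $A'$, computing the decomposition promised by Lemma~\ref{lem:ex2} will reduce to identifying the sets $C_i$, because $D$ is then the remainder $A'\setminus(\Az\cup\bigcup_i C_i)$. Here I would exploit that the Excess-1 Reduction has been applied exhaustively, so every terminal-free $\Az$-extension of excess $1$ is a singleton $\Az\cup\{v\}$. Consequently, the $C_i$'s are precisely the singletons $\{v\}$ with $v\in A'\setminus\Az$ satisfying $\Delta(\Az\cup\{v\})=d(v)-2|E(v,\Az)|=1$, which can be detected in constant time per vertex after a linear-time precomputation of the incidences to $\Az$. The algorithm then outputs $A'$ together with the decomposition whenever $|D|>1$, and otherwise moves on to the next candidate; if no candidate yields $|D|>1$, it reports that no applicable set exists.

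The delicate part will be the completeness argument: if some terminal-free $\Az$-extension $A$ of excess $2$ exhibits $|D_A|>1$, then some candidate in $\mathcal{L}$ must be accepted. The first step is to establish that $A$ is compact: the hypothesis $D_A\neq\emptyset$ combined with Lemma~\ref{lem:ex2}(3) gives connectedness of $A\setminus\Az$, and inapplicability of the Pendant Reduction to $A\setminus\Az$ (whose boundary has size $\Delta(A)=2$) forces $E(A\setminus\Az,\Az)\neq\emptyset$. The second step is to extend $A$ to a cardinality-maximal compact terminal-free $\Az$-extension $A^\star$ of excess $2$; any compact terminal-free strict superset of $A^\star$ would then have excess strictly greater than $2$, placing $A^\star$ in $\mathcal{L}$. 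Finally, every $v\in D_A$ lies in $A^\star\setminus\Az$ and still satisfies $\Delta(\Az\cup\{v\})\neq 1$, hence $v\in D_{A^\star}$, yielding $|D_{A^\star}|\geq|D_A|>1$ and acceptance of $A^\star$. This monotonicity of $D$ under terminal-free extensions at the same excess level is, I expect, the main conceptual ingredient, while the rest of the argument is a routine combination of Lemma~\ref{lem:enum-excess} with local vertex checks.
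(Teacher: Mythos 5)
Your proof is correct and follows essentially the same strategy as the paper: enumerate inclusion-wise maximal compact terminal-free excess-$2$ extensions via Lemma~\ref{lem:enum-excess}, compute the decomposition by identifying the $c_i$'s locally, handle the corner case via the Pendant Reduction, and argue completeness by the monotonicity of $D$ under inclusion of excess-$2$ compact extensions. One small point worth tightening: when you place $A^\star$ in $\mathcal{L}$, you observe that any compact strict superset of $A^\star$ has excess $>2$, but maximality of $A^\star$ among \emph{excess-$2$} extensions only directly rules out excess-$2$ supersets; you need to also invoke Lemma~\ref{lem:ex0} and the exhausted Excess-1 Reduction to exclude excess-$\leq 1$ strict supersets of a set with $|A^\star\setminus\Az|\geq 2$ (which is immediate, but unstated). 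Likewise, the parenthetical "whose boundary has size $\Delta(A)=2$" holds only under the contradiction hypothesis $E(A\setminus\Az,\Az)=\emptyset$; the intended indirect argument is clear from context but the phrasing could mislead.
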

\begin{proof}
Let $A$ be a terminal-free $\Az$-extension of excess $2$, and let $D, C_1,C_2,\ldots,C_r$ be the sets promised by Lemma~\ref{lem:ex2} and let $|D|>1$.
The inapplicability of the Excess-1 Reduction ensures that every set $C_i$ is a singleton, $C_i = \{c_i\}$.

Let us first deal with the corner case in which $r=0$ and $E(D,\Az) = \emptyset$.
Then, since $A$ is of excess $2$,
we have $d(D) = 2$. However, as $D$ does not contain any terminal,
the Pendant Reduction is applicable to it.

In the remaining cases, Lemma~\ref{lem:ex2} guarantees that $A$ is compact.
We enumerate all inclusion-wise maximal compact excess-$2$ extensions using Lemma~\ref{lem:enum-excess}. 
For every output extension $A$, we first identify the set $C \subseteq A \setminus \Az$ of all vertices $v$ such that $\Az \cup \{v\}$ is of excess
one. By Lemma~\ref{lem:ex2}, we have $D = A \setminus (\Az \cup C)$. If $|D| > 1$, then we can apply the reduction.

To complete the proof, note that if the Excess-2 Reduction is applicable to some compact $\Az$-extension $A$, then it is also applicable
to any compact $\Az$-extension $A'$ of excess $2$ being a superset of $A$: the corresponding set $D$ for $A$ is a subset of the corresponding set $D'$ for $A'$.
\end{proof}

The set $D$ of Lemma~\ref{lem:ex2} is often a very convenient branching pivot: putting it into $\Az$ makes the boundary of $\Az$ extend by two, while putting it into $\Bz$ triggers a number of Boundary Reductions.
In the next few lemmata we summarize the properties of an excess-2 set after reductions, and outcomes on branching on the set $D$.

We start from a slightly more useful presentation of the properties promised by Lemma~\ref{lem:ex2}.

\begin{figure}[H]
	\centering
	\begin{tikzpicture}[scale=1.1]

\Azero;
\node at (0,-0.2) {$\Az$};
%nodes
\node[A,label=60:$\mathbf{d}$] (d) at (0,2) {};
\node[A,label=$\mathbf{c_1}$] (c1) at (-1.7,1) {};
\node[A,label=$\mathbf{c_2}$] (c2) at (-0.3,1) {};
\node[A,label=$\mathbf{c_3}$] (c3) at (1.5,1) {};
%edges d--
\draw (d) to[bend left=10] (c1);
\draw (d) to[bend right=10] (c1);
\node at (-1.1,1.65) {$p_1$};

\draw (d) to[bend left] (c2);
\draw (d) to (c3);

\draw (d) to (-0.7,2.2);
\draw (d) to (-0.7,2.3);
\draw (d) to (-0.6,2.4);

%edges c_1--
\draw (c1) to (-1.5,-0.1)  (c1) to (-1.66,-0.15)  (c1) to (-1.84,-0.1);
\draw (c1) to (-2.6,1.15) (c1) to (-2.6,1.3);
\node at (-2.6,0) {$p_1+x_1$};
\node at (-2.9,1.4) {$x_1+1$};
%edges c_2--
\draw (c2) to (-0.4,-0.1);
\draw (c2) to[bend right=15] (1.2,2.1); 
%edges c_3--
\draw (c3) to (1.4,-0.1);
\draw (c3) to (1.6,-0.1);
\draw (c3) to (2.2,1.15);
\draw (c3) to (2.2,1.3);
%non-edges
\draw[non] (c1) to[bend right=20] (c2) (c2) to[bend right=20] (c3) (c1) to[bend right] (c3);

\begin{scope}[shift={(6,0)}]

\Azero;
\node at (0,-0.2) {$\Az$};
%nodes
\node[A,label=$\mathbf{c_1}$] (c1) at (-1.1,1) {};
\node[A,label=$\mathbf{c_2}$] (c2) at (1.1,1) {};
%edges c_1--
\draw (c1) to (-1,-0.1);
\draw (c1) to (-1.2,-0.1);
\node at (-1.45,0.15) {$x_1$};

\draw (c1) to (-1.9,1.15);
\draw (c1) to (-1.9,1.3);
\draw (c1) to (-1.9,1.45);
\node at (-1.9,1.6) {$x_1+1$};
%edges c_2--
\draw (c2) to (1.2,-0.1);
\draw (c2) to (1.05,-0.15);
\draw (c2) to (0.9,-0.1);

\draw (c2) to (1.9,1.0);
\draw (c2) to (1.9,1.1);
\draw (c2) to (1.9,1.2);
\draw (c2) to (1.9,1.3);
%non-edges
\draw[non] (c1) to[bend right=10] (c2);

\end{scope}

\end{tikzpicture}
\figspace
	\caption{Examples of sets of excess 2 after reductions (dotted lines are non-edges).}
\label{fig:ex2-red}
\end{figure}
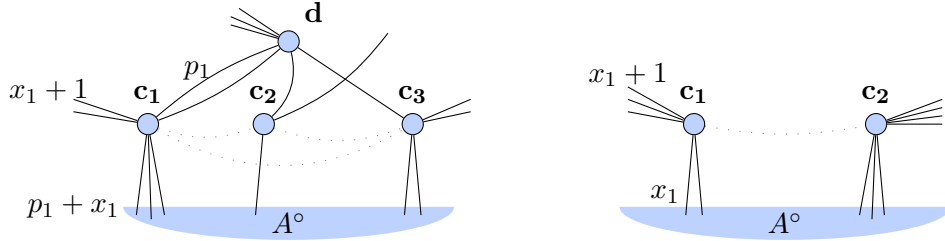

\begin{lemma}\label{lem:ex2-red}
Assume that no reduction is applicable, and let $A$ be a terminal-free $\Az$-extension of excess $2$.
Then one can in $\Oh(k^{\Oh(1)} m)$ time
compute a decomposition $A \setminus \Az = \{d,c_1,c_2,\ldots,c_r\}$ for some $r \geq 0$ or $A \setminus \Az = \{c_1,c_2\}$
with the following properties:
\begin{enumerate}
\item if the vertex $d$ exists, then $A$ is compact and for every $1 \leq i \leq r$, there are $p_i$ edges $dc_i$ for some $p_i \geq 1$; we put $p_1 = p_2 = 0$ if the vertex
$d$ does not exists;
\item for every $1 \leq i \leq r$, the set $\Az \cup \{c_i\}$ is an $\Az$-extension of excess $1$, the vertex $c_i$ has $x_i+1 \geq 1$ edges towards
$V(G) \setminus (A \cup \Bz)$ and $p_i+x_i \geq 1$ edges towards $\Az$, for some $x_i \geq 0$;
\item the vertices $c_i$ are pairwise nonadjacent;
\item the set $\Az \cup \{d\}$ is an $\Az$-extension of excess larger than $1$.
\end{enumerate}
\end{lemma}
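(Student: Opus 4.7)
The plan is to specialize Lemma~\ref{lem:ex2} to the situation where no reduction applies and read off the promised structure. Since the Excess-1 Reduction is inapplicable, every $\Az$-extension of excess $1$ has the form $\Az \cup \{v\}$ for a single vertex $v$, so each $C_i$ produced by Lemma~\ref{lem:ex2} is a singleton $\{c_i\}$. Since the Excess-2 Reduction is inapplicable to $A$, the set $D$ has size at most one. If $D = \emptyset$, Lemma~\ref{lem:ex2}(4) puts us in the subcase $r = 2$, $A \setminus \Az = \{c_1, c_2\}$, and we set $p_1 = p_2 := 0$ by convention. If $D = \{d\}$, we obtain $A \setminus \Az = \{d, c_1, \ldots, c_r\}$; compactness of $A$ and the existence of at least one edge between $d$ and each $c_i$ (so $p_i \geq 1$) are immediate consequences of Lemma~\ref{lem:ex2}(2,3), giving property~1. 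Property~3 is Lemma~\ref{lem:ex2}(5). Property~4 holds because $\Az \cup \{d\}$ is a terminal-free $\Az$-extension, so by Lemma~\ref{lem:ex0} its excess is at least $1$; if it were exactly $1$, then $d$ would itself be one of the $c_i$, contradicting $d \in D$.

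The heart of the proof is a local calculation around each $c_i$ that establishes property~2. Put $a_i := |E(c_i, \Az)|$, $b_i := |E(c_i, \Bz)|$, $p_i := |E(c_i, d)|$ (interpreted as $0$ if $d$ does not exist) and $y_i := |E(c_i, V(G) \setminus (A \cup \Bz))|$; by Lemma~\ref{lem:ex2}(5) these partition the edges incident to $c_i$, so $\deg(c_i) = a_i + b_i + p_i + y_i$. Expanding $\Delta(\Az \cup \{c_i\}) = 1$ gives $\deg(c_i) - 2 a_i = 1$. Lemma~\ref{lem:ex2}(2) guarantees $a_i \geq 1$; together with $c_i \notin \Az \cup \Bz$, any edge from $c_i$ to $\Bz$ would trigger the Boundary Reduction, so $b_i = 0$. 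Finally, $p_i \geq a_i + 1$ would yield $2 p_i \geq 2 a_i + 2 > \deg(c_i)$, and then the Majority Neighbour Reduction would apply to $c_i$ and $d$ (both in $V(G) \setminus (\Az \cup \Bz)$); hence $p_i \leq a_i$. Setting $x_i := a_i - p_i \geq 0$, the identity $\deg(c_i) = 2 a_i + 1$ rearranges to $y_i = x_i + 1 \geq 1$ and $a_i = p_i + x_i \geq 1$, exactly as required.

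For the running time, given $A$ we compute $\Delta(\Az \cup \{v\}) = \deg(v) - 2\,|E(v, \Az)|$ for every $v \in A \setminus \Az$ in $\Oh(\deg(v))$ time, and hence $\Oh(m)$ time in total; the vertices with $\Delta$-value $1$ are declared $c_1, \ldots, c_r$, the at most one remaining vertex is $d$, and the integers $p_i$ and $x_i$ are read off directly. I expect the main obstacle to be the argument that $p_i \leq a_i$: it is the only place where the Majority Neighbour Reduction enters, and one has to match the degree inequality it produces against the cut equation $\deg(c_i) = 2a_i + 1$ to rule out exactly the bad configuration $p_i > a_i$. Once this is in place, the rest amounts to bookkeeping over the local data $(a_i, b_i, p_i, y_i)$ attached to each $c_i$.
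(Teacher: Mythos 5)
Your proof follows the same route as the paper's: specialize Lemma~\ref{lem:ex2} under the assumption that no reduction applies, use inapplicability of Excess-1 to make each $C_i$ a singleton and of Excess-2 to make $|D|\leq 1$, rule out $\Bz$-edges via the Boundary Reduction, and eliminate the bad degree pattern via the Majority Neighbour Reduction. The bookkeeping on property~2 is carried out correctly, with a cosmetic difference: you set $x_i := a_i - p_i$ and derive $y_i = x_i + 1$, while the paper sets $x_i := |E(c_i, V(G)\setminus A)| - 1$ and derives $a_i = p_i + x_i$; these are the same constraint, and in both cases the bad configuration ($x_i < 0$, equivalently $p_i > a_i$) is refuted by Majority Neighbour exactly as you do. Your argument for property~4 ($d$ would otherwise itself be a $c_i$) is also a sound alternative to the paper's explicit edge count on $\Az\cup\{d,c_1\}$, and the running time observation matches.

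One small gap: you claim compactness of $A$ when $D = \{d\}$ is an ``immediate consequence of Lemma~\ref{lem:ex2}(2,3).'' That is true when $r \geq 1$ (part~(2) gives an edge from some $c_i$ into $\Az$, part~(3) gives connectivity of $A\setminus\Az$). But when $r = 0$, so that $A\setminus\Az = \{d\}$, parts (2) and (3) say nothing about $E(d,\Az)$, and compactness requires $E(d,\Az)\neq\emptyset$. You need the extra observation that if $E(d,\Az)=\emptyset$, then (since $b_i$-type arguments give $E(d,\Bz)=\emptyset$ would follow from no Boundary Reduction combined with excess considerations, or more directly) $d$ has exactly two edges to $V(G)\setminus A$, so $|N(\{d\})|\leq 2$ and the Pendant Reduction applies — contradiction. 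This is exactly what the paper's ``noncompact $\Az$-extensions of excess $2$ are completely reduced by the Pendant Reduction'' handles; adding that one sentence closes the gap.
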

\begin{proof}
Most of the enumerated properties are just repetitions of the points of Lemma~\ref{lem:ex2},
after each set of the partition has been identified into a single vertex. Recall that noncompact $\Az$-extensions of excess $2$
are completely reduced by the Pendant Reduction.

For the count on the number of edges incident to a vertex $c_i$,
define $p_i$ as claimed and $x_i := |E(c_i,V(G) \setminus A)|-1$; clearly $x_i \geq -1$.
Since $\Az \cup \{c_i\}$ is of excess $1$, and no two vertices $c_i$ are adjacent, 
we have $|E(c_i,\Az)| = p_i+x_i$.
Furthermore, note that no edge may connect $c_i$ and $\Bz$, as it would trigger a Boundary Reduction.
It remains to refute the case $x_i = -1$, i.e., $E(c_i,V(G) \setminus A) = \emptyset$. 
In this case $p_i + x_i = |E(c_i,\Az)| \geq 0$ implies $p_i \geq 1$, so the vertex $d$ exists.
However, the Majority Neighbour Reduction then applies to $c_i$ and $d$, a contradiction.

If $\Az \cup \{d\}$ is an $\Az$-extension of excess at most $1$,
then $r \geq 1$ as $A$ has excess $2$, but then an edge count shows that $\Az \cup \{d,c_1\}$ would be an $\Az$-extension of nonpositive excess, a contradiction
to the maximality of $\Az$.

Finally, the decomposition of $A\setminus \Az$ can be identified by inspecting the edges incident to every vertex $v \in A \setminus \Az$
to check whether $\Az \cup \{v\}$ is of excess $1$ or larger.
\end{proof}

We now investigate what happens in a branch when we put the vertex $d$ onto the $A$-side.
\begin{lemma}\label{lem:ex2-nested}
Assume that no reduction is applicable, and let $A,A'$ be two terminal-free $\Az$-extensions of excess $2$ with $A \subsetneq A'$.
Then $A' \setminus \Az$ decomposes as $\{d,c_1,c_2,\ldots,c_r\}$ for some $r \geq 2$, and $A \setminus \Az$
consists of two vertices $c_i$ of this decomposition.
\end{lemma}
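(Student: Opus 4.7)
The plan is to apply Lemma~\ref{lem:ex2-red} to the larger set $A'$ and then analyze how $A$ must sit inside the resulting decomposition of $A' \setminus \Az$.

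First, I would eliminate the possibility that $A' \setminus \Az = \{c_1, c_2\}$ without a vertex $d$. Since $A \subsetneq A'$ with $A \setminus \Az$ nonempty, $A \setminus \Az$ would have to be a singleton $\{c_i\}$; but Lemma~\ref{lem:ex2-red}(2) gives $\Delta(\Az \cup \{c_i\}) = 1$, contradicting $\Delta(A) = 2$. Hence the decomposition of Lemma~\ref{lem:ex2-red} applied to $A'$ yields the form $A' \setminus \Az = \{d, c_1, \ldots, c_r\}$ for some $r \geq 0$, with the corresponding quantities $p_i \geq 1$ and $x_i \geq 0$.

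Next, I would split on whether $d \in A$. If $d \notin A$, then $A \setminus \Az \subseteq \{c_1, \ldots, c_r\}$, and since the $c_i$'s are pairwise non-adjacent (Lemma~\ref{lem:ex2-red}(3)), a direct edge count gives
\[
\Delta(A) \;=\; \sum_{c_i \in A} \Delta(\Az \cup \{c_i\}) \;=\; |\{i : c_i \in A\}|,
\]
so exactly two of the $c_i$'s lie in $A$. This forces $r \geq 2$ and yields precisely the conclusion of the lemma.

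The remaining case $d \in A$ with $A \neq A'$ must be refuted. Here the key observation is a monovariant: by Lemma~\ref{lem:ex2-red}(1)--(2), when $d \in A$, removing a single vertex $c_i$ from $A$ (i.e., passing from a set containing $c_i$ to the set without it) changes $d(\cdot)$ by $(x_i + 1) - (2p_i + x_i) = 1 - 2p_i \leq -1$, since the $p_i + x_i$ edges $E(c_i, \Az)$ and the $p_i$ edges $E(c_i, d)$ stop being cut, while the $x_i + 1$ edges to $V(G) \setminus A'$ start being cut. Summing this difference over all $c_i \in A' \setminus A$ yields
\[
d(A') - d(A) \;=\; \sum_{c_i \in A' \setminus A} (1 - 2p_i) \;\leq\; -|A' \setminus A| \;<\; 0,
\]
contradicting $\Delta(A') = \Delta(A) = 2$. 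Hence $d \notin A$, and the claim follows from the previous paragraph. The main obstacle is isolating the right monovariant; once the decomposition of Lemma~\ref{lem:ex2-red} is available, with $p_i \geq 1$ and pairwise non-adjacency of the $c_i$'s, the inequality $1 - 2p_i \leq -1$ does all the work, but without those structural guarantees the nested sets of excess $2$ can a priori interact in many ways.
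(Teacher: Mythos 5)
Your proof is correct and follows essentially the same route as the paper: rule out the $\{c_1,c_2\}$ form, then do an edge count under the decomposition of Lemma~\ref{lem:ex2-red} to show excess $2$ forces $A = \Az \cup \{c_i,c_j\}$. One small wording slip: when you describe removing $c_i$ from a set containing $d$, the edges $E(c_i,\Az)$ and $E(c_i,d)$ \emph{start} being cut and the $x_i+1$ outward edges \emph{stop} --- you have the directions inverted --- but the quantity $1-2p_i$ you compute is the correct change for \emph{adding} $c_i$, which is exactly how you then use it in $d(A')-d(A)=\sum_{c_i\in A'\setminus A}(1-2p_i)$, so the arithmetic and conclusion stand.
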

\begin{proof}
If $A' \setminus \Az = \{c_1,c_2\}$, then there is no choice for the set $A$, as $\Az \cup \{c_i\}$ is of excess $1$ for $i=1,2$.
Hence, $A' \setminus \Az = \{d,c_1,c_2,\ldots,c_r\}$ 
for some $r \geq 1$; note that $|A' \setminus \Az| \geq 2$ as
$\Az \subsetneq A \subsetneq A'$.
A direct edge count using Lemma~\ref{lem:ex2-red} shows that for every $C \subseteq \{c_1,c_2,\ldots,c_r\}$
we have $\Delta(\Az \cup C) = |C|$ and $\Delta(\Az \cup C \cup \{d\}) \geq 2 + (r-|C|)$.
Hence, the only option to get excess $2$ is to have $A = \Az \cup C$ for some $|C|=2$.
\end{proof}

\begin{lemma}\label{lem:ex2-Aside}
Assume that no reduction is applicable, and let $A$ be a terminal-free $\Az$-extension of excess $2$
with $A \setminus \Az = \{d,c_1,c_2,\ldots,c_r\}$ for some $r \geq 0$.
If we furthermore consider a branch $(A_1,B_1)$ such that $d \in A_1$,
but $A_1 \setminus \Az$ does not contain any terminal, then
\begin{enumerate}
\item if $B_1$ contains at least one vertex $c_i$, then there does not exist any minimum cost integral terminal separation $(A^\ast,B^\ast)$
extending $(\Az,\Bz)$ that also extends $(A_1,B_1)$;
\item $d(A_1) \geq d(\Az) + 2$;
\item if $d(A_1) = d(\Az) + 2$, then $A_1 = A$.
\end{enumerate}
\end{lemma}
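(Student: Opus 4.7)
The three parts can be handled in sequence, in roughly increasing order of subtlety, all leveraging the decomposition of $A$ produced by Lemma~\ref{lem:ex2-red}.

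For part (1), the plan is to take any hypothetical minimum-cost integral separation $(\Aopt,\Bopt)$ extending both $(\Az,\Bz)$ and $(A_1,B_1)$, move $c_i$ from $\Bopt$ into $\Aopt$, and show that the resulting cost strictly drops, contradicting the minimality of $(\Aopt,\Bopt)$ among separations extending $(\Az,\Bz)$. By Lemma~\ref{lem:ex2-red}, the edges incident to $c_i$ split into: $p_i \geq 1$ edges to $d \in \Aopt$; $p_i + x_i$ edges to $\Az \subseteq \Aopt$; and $x_i+1$ edges to $V(G) \setminus A$ (there are no edges to $\Bz$ thanks to the Boundary Reduction, and none to other $c_j$'s by Lemma~\ref{lem:ex2-red}(3)). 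Letting $\alpha \in [0,x_i+1]$ count how many of the last group land in $\Aopt$, swapping $c_i$ removes $2p_i + x_i + \alpha$ cut edges and adds $x_i+1-\alpha$ new ones, for a net change of $1 - 2p_i - 2\alpha \leq -1$. Since $c_i$ is non-terminal and lies outside $\Bz$, the swap is a valid terminal separation extending $(\Az,\Bz)$, giving the required contradiction.

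For part (2), the plan is to rule out small values of $\Delta(A_1)$. Because $d \in A_1 \setminus \Az$, the set $A_1$ strictly extends $\Az$ and is terminal-free, so Lemma~\ref{lem:ex0} yields $\Delta(A_1) \geq 1$. If $\Delta(A_1) = 1$, then either $|A_1 \setminus \Az| > 1$, in which case the Excess-1 Reduction would be applicable to $A_1$ (contradicting the hypothesis that no reduction applies), or $|A_1 \setminus \Az| = 1$, forcing $A_1 = \Az \cup \{d\}$, which contradicts Lemma~\ref{lem:ex2-red}(4) asserting $\Delta(\Az \cup \{d\}) > 1$. Hence $\Delta(A_1) \geq 2$.

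For part (3), assume $\Delta(A_1) = 2$ and aim to prove $A_1 = A$. First, observe that $A \cap A_1$ and $A \cup A_1$ are terminal-free $\Az$-extensions both containing $d$, so the same reasoning as in part (2) gives $\Delta(A \cap A_1) \geq 2$ and $\Delta(A \cup A_1) \geq 2$. Combining this with submodularity $d(A) + d(A_1) \geq d(A \cap A_1) + d(A \cup A_1)$ and $d(A) = d(A_1) = d(\Az) + 2$ forces equality throughout, so both the intersection and the union have excess exactly~$2$. Suppose, toward a contradiction, that $A_1 \neq A$; then at least one of the containments $A \cap A_1 \subseteq A$ or $A \cap A_1 \subseteq A_1$ is strict. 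After possibly swapping the roles of $A$ and $A_1$, assume $A \cap A_1 \subsetneq A$. Applying Lemma~\ref{lem:ex2-nested} to this nested pair of excess-$2$ sets, the set $A \cap A_1$ must equal $\Az$ together with two of the vertices $c_i$ from the decomposition of $A$. But $d \in A \cap A_1 \setminus \Az$, while $d$ cannot be any $c_i$ since $\Delta(\Az \cup \{c_i\}) = 1$ whereas $\Delta(\Az \cup \{d\}) \geq 2$ by Lemma~\ref{lem:ex2-red}(4), a contradiction. The main technical care is in the first step of this part: ensuring that the excess-$2$ bound is inherited by $A \cap A_1$ and $A \cup A_1$, which is exactly where the Excess-1 Reduction and Lemma~\ref{lem:ex2-red}(4) must both be invoked.
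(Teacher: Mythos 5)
Your proof is correct, and it departs from the paper's argument in a way worth noting. For part~(1) you and the paper both use the same edge count from Lemma~\ref{lem:ex2-red} to move $c_i$ across the cut; the net change $1 - 2p_i - 2\alpha \le -1$ is exactly the quantity behind the paper's claim that $d(A_1\cup\{c_i\})<d(A_1)$. For parts~(2) and~(3), however, the paper introduces the auxiliary pair $A' := A_1\cup A$, $B' := B_1\setminus A$, bounds $d(A')\le d(A_1)$ by iterating the same edge count, deduces $\Delta(A')\ge 2$ from $A\subseteq A'$ and inapplicability of the Excess-1 Reduction, and then shows $A'=A_1$ at equality before applying Lemma~\ref{lem:ex2-nested}. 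You instead prove $\Delta(A_1)\ge 2$ directly: the only excess-$1$ singleton containing $d$ would have to be $\Az\cup\{d\}$, which Lemma~\ref{lem:ex2-red}(4) rules out, and larger sets are killed by the Excess-1 Reduction. This is more economical, since it never leaves $A_1$. For part~(3) you substitute submodularity on the pair $A\cap A_1, A\cup A_1$ for the paper's edge-count on $A'$, again a clean alternative; the same application of Lemma~\ref{lem:ex2-nested} together with $\Delta(\Az\cup\{d\})\ge 2$ finishes both.

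One small imprecision: in part~(3) the phrase ``after possibly swapping the roles of $A$ and $A_1$'' deserves a sentence more. If $A\cap A_1 = A \subsetneq A_1$, you apply Lemma~\ref{lem:ex2-nested} to $A\subsetneq A_1$, and the $c_i$'s you must compare $d$ against are those in the decomposition of $A_1$, not $A$. The argument still goes through because $\Delta(\Az\cup\{d\})\ge 2$ is a property of the vertex $d$ alone and every $c_i$-type vertex in any excess-$2$ decomposition satisfies $\Delta(\Az\cup\{c_i\})=1$, but the reader has to supply that observation; it would be cleaner to state it. This does not constitute a gap, just a spot where the writing slightly outpaces the justification.
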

\begin{proof}
Define $A' := A_1 \cup A$ and $B' := B_1 \setminus A$; note that  $A' \setminus \Az$ is terminal-free and $(A',B')$ is a terminal separation as well.

Observe that if $(A_1,B_1)$ is a terminal separation extending $(\Az,\Bz)$ with $d \in A_1$ but $c_i \notin A_1$ for some $1 \leq i \leq r$,
then a direct edge count from Lemma~\ref{lem:ex2-red} shows that 
$d(A_1 \cup \{c_i\}) < d(A_1)$, $d(B_1 \setminus \{c_i\}) \leq d(B_1)$, hence $c(A_1 \cup \{c_i\}, B_1 \setminus \{c_i\}) < c(A_1,B_1)$.
This proves the first point, and shows that $d(A') \leq d(A_1)$, $d(B') \leq d(B_1)$, thus $c(A',B') \leq c(A_1,B_1)$, and the equality holds only if $(A',B') = (A_1,B_1)$.

Since $A \subseteq A'$, the Excess-1 Reduction is inapplicable, and $\Delta(A) = 2$, we have $\Delta(A') \geq 2$. Consequently, $d(A_1) \geq d(A') \geq d(\Az)+2$,
and $d(A_1) = d(\Az)+2$ only if $d(A_1) = d(A') = d(\Az)+2$.
As discussed in the previous paragraph, this can only happen if $A' = A_1$ and $\Delta(A') = 2$.
By Lemma~\ref{lem:ex2-nested}, this implies $A' = A$, finishing the proof of the lemma.
\end{proof}

In the last lemma we study what happens in a branch when we put the vertex $d$ onto the $B$-side.

\begin{figure}[H]
	\centering
	\begin{tikzpicture}

\Azero;
\node at (1,-0.2) {$\Az$};
%nodes
\node[A,label=$\mathbf{d}$] (d) at (0,1) {};
%edges d--
\draw (d) to (0,-0.1);
\draw (d) to (-1.1,1);
\draw (d) to (-1.1,1.55);
\draw (d) to (-0.8,1.9);

\begin{scope}[shift={(6,0)}]

\Azero;
\node at (1,-0.2) {$\Az$};
%nodes
\node[A,label=60:$\mathbf{d}$] (d) at (0,1.5) {};
\node[A,label=0:$\mathbf{c_1}$] (c1) at (0,0.8) {};
%edges d--
\draw (d) to (c1);
\draw (d) to (-0.3,2.2);
\draw (d) to (-0.7,2);
%edges c1--
\draw (c1) to (-0.1,-0.1);
\draw (c1) to (0.1,-0.15);
\draw (c1) to (0.3,-0.1);
\node at (-0.25,0.15) {$x$};

\draw (c1) to (-1,0.7);
\draw (c1) to (-1,0.9);
\draw (c1) to (-1,1.1);
\node at (-1.2,0.9) {$x$};
\end{scope}

\end{tikzpicture}
\figspace
	\caption{The two cases when putting $d$ on the unnatural side triggers only one Boundary Reduction.}
\label{fig:ex2-Bside}
\end{figure}
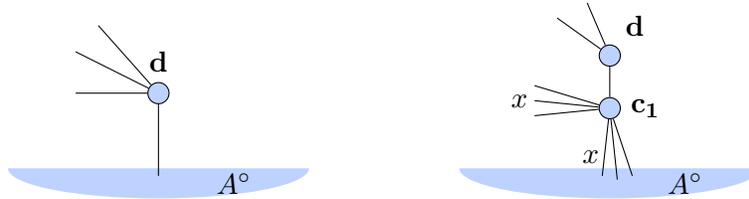

\begin{lemma}\label{lem:ex2-Bside}
Assume that no reduction is applicable, and let $A$ be a terminal-free $\Az$-extension of excess $2$
with $A \setminus \Az = \{d,c_1,c_2,\ldots,c_r\}$ for some $r \geq 0$.
Furthermore, if we consider a branch $(A_1,B_1)$ such that $d \in B_1$,
then at least one Boundary Reduction is immediately triggered.
If only one is triggered, then one of the following holds:

\begin{enumerate}
\item $r = 0$, $A \setminus \Az = \{d\}$, and the vertex $d$ is of degree four, with one incident edge having second endpoint in $\Az$ and the remaining
three edges having second endpoint in $V(G) \setminus (A \cup \Bz)$; or
\item $r=1$, $A \setminus \Az = \{d,c_1\}$, the vertex $d$ is of degree three, with one incident edge being $c_1d$ and the remaining
two edges having second endpoint in $V(G) \setminus A$, and the vertex $c_1$ is of degree $2x+1$ for some $x \geq 1$, with
one incident edge being $c_1d$, $x$ incident edges having second endpoint in $\Az$, and $x$ incident edges having second endpoint in $V(G) \setminus (A \cup \Bz)$.
\end{enumerate}
\end{lemma}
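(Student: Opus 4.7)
Let $\alpha := |E(d, \Az)|$. When $d \in B_1$, the Boundary Reductions that fire from the partition $(A_1, B_1)$ come from two sources: first, each of the $\alpha$ edges from $\Az \subseteq A_1$ to $d \in B_1$ is a direct crossing edge; second, each vertex $c_i$ contributes according to its placement in the branch. If $c_i \in A_1$, the $p_i$ parallel edges $c_i d$ each cross; if $c_i \in B_1$, the $p_i + x_i$ edges from $c_i$ to $\Az$ each cross; if $c_i \notin A_1 \cup B_1$, then $c_i$ has $p_i + x_i$ edges to $A_1$ (all through $\Az$) and at least $p_i$ edges to $B_1$ (through $d$), so by the middle-vertex clause of the Boundary Reduction it contributes $\min(p_i + x_i, p_i) = p_i$. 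In every case the contribution of $c_i$ is at least $p_i \geq 1$, giving the uniform lower bound $\rho_1 \geq \alpha + r$.

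This immediately yields $\rho_1 \geq 1$ whenever $r \geq 1$. When $r = 0$, Lemma~\ref{lem:ex2-red} states that $A = \Az \cup \{d\}$ is compact, so $E(d, \Az) \neq \emptyset$ and $\alpha \geq 1$, again giving $\rho_1 \geq 1$. This handles the first claim of the lemma.

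Now assume $\rho_1 = 1$. The bound $\rho_1 \geq \alpha + r$ combined with $p_i \geq 1$ forces either $(r, \alpha) = (0, 1)$ or $(r, \alpha, p_1) = (1, 0, 1)$. In the first subcase, the identity $\Delta(A) = \deg(d) - 2\alpha = 2$ (applied to $A = \Az \cup \{d\}$) yields $\deg(d) = 4$; since $d$ is adjacent to $\Az$, the inapplicability of the Boundary Reduction in the current instance forbids any edge from $d$ to $\Bz$, so the remaining three edges of $d$ all point into $V(G) \setminus (A \cup \Bz)$, matching case 1. In the second subcase, a direct edge count for $A = \Az \cup \{d, c_1\}$ (using the $\alpha$, $p_1$, $x_1$ decomposition from Lemma~\ref{lem:ex2-red}) gives
\[
\Delta(A) = -\alpha - p_1 + |E(d, V(G) \setminus A)| + 1 = 2,
\]
so $|E(d, V(G) \setminus A)| = 2$; together with the edge $dc_1$, this makes $d$ of degree $3$. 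For $c_1$, Lemma~\ref{lem:ex2-red} with $p_1 = 1$ gives $1 + x_1$ edges to $\Az$ and $x_1 + 1$ edges to $V(G) \setminus (A \cup \Bz)$; setting $x := x_1 + 1 \geq 1$, the vertex $c_1$ has degree $2x + 1$ with exactly the partition claimed in case 2.

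The delicate point, handled uniformly by the minimum bound above, is the case $c_i \notin A_1 \cup B_1$: even though $c_i$ is undetermined in the branch, its forced adjacencies to both $\Az$ and $d$ make it fire the middle-vertex form of the Boundary Reduction. Once this is in place, the rest of the argument is a clean case analysis and an edge count against $\Delta(A) = 2$.
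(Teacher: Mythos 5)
Your proof is correct and follows essentially the same strategy as the paper's: establish the lower bound $|E(d,\Az)| + \sum_{i=1}^r p_i$ on the number of immediately triggered Boundary Reductions, use compactness to show this is at least one, and then, in the tight case, pin down the two configurations by a direct edge count against $\Delta(A)=2$ and the inapplicability of the Boundary Reduction (which rules out edges to $\Bz$). The only difference is presentational: you spell out the three possible placements of each $c_i$ in the branch and show they all contribute at least $p_i$, whereas the paper compresses this into the single observation that each $c_i$ contributes $\min(p_i, p_i+x_i) = p_i$; the substance is identical.
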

\begin{proof}
In the branch $(A_1,B_1)$,
a Boundary Reduction is immediately triggered for every edge in $E(d,\Az)$, and every vertex $c_i$ triggers $\min(p_i,p_i+x_i) = p_i$ Boundary Reductions.
Note that $r\geq 1$ or $E(D,\Az)\neq\emptyset$, as $A$ is compact by Lemma~\ref{lem:ex2-red}.
Hence at least one reduction is triggered.
If only one reduction is triggered, then $|E(d,\Az)| + \sum_{i=1}^r p_i = 1$. In particular $r$ is either $0$ or $1$. %= |E(D,A \setminus D)|.

If $r = 0$, then $|E(d,\Az)| = 1$ and the assumption that $A$ is of excess $2$ implies that
$|E(d, V(G) \setminus \Az)| = 3$. No edge incident to $d$ may have a second endpoint in $\Bz$, as it would trigger
the Boundary Reduction together with the edge in $E(d,\Az)$.
Thus the first case of the claim holds.

If $r=1$, then $|E(d,\Az)| = 0$ and $p_1=1$. Since $c_1$ has $p_1+x_1$ edges to $\Az$ and $x_1+1$ edges to $V(G)\setminus A$, the assumption that $A$ is of excess $2$ implies that $d$ has exactly two edges to $V(G)\setminus A$.
No edge incident to $c_1$ can have the second endpoint in $\Bz$,
as otherwise it would trigger the Boundary Reduction with any edge
in $E(c_1,\Az)$.
Thus the second case of the claim holds.
\end{proof}

\section{The detailed cases of the branching algorithm}\label{sec:miesko}
In this section we assume we have a maximal instance $\inst = (G,\terms,(\Az,\Bz),k)$ for which none of the previously defined reduction rules is applicable.
Our goal is to find a {\em{branching step}} that fulfils a good vector,
or a set of vertices to merge (a {\em{reduction step}}). Recall that when we consider a branching into terminal separations $(A_1,B_1)$ and $(A_2,B_2)$ that extend $(\Az,\Bz)$, then $t_i,\nu_i,k_i$ for $i=1,2$ measure respectively the number of terminals resolved in branch $i$, two times the growth of the cost of the separation in branch $i$ (i.e., $2(c(A_i,B_i)-c(\Az,\Bz))$), and the decrease in the budget $k$ after applying all the reduction rules when recursing into branch $i$.

Assume that we have identified a branching step into separations
$(A_1,B_1)$ and $(A_2,B_2)$ that both extend, but are different than
$(\Az,\Bz)$. Then, from the maximality of $(\Az,\Bz)$ we infer than
$\nu_1,\nu_2 \geq 1$.
Since $[1,1,0;2,1,0]$ is a good vector, any branching step
in which in both cases we resolve or reduce at least one terminal pair,
while in at least one case we resolve or reduce at least \emph{two} terminal
pairs, is fine for our purposes. 

\subsection{Basic branching and reductions}

Let $\terms' \subseteq \terms$ be the set of unresolved terminal pairs (not in $\Az\cup \Bz$). For every terminal pair $\{s,t\}\in \terms'$, we apply the algorithm of Theorem~\ref{thm:ptime} twice: once for terminal separation $(\Az\cup \{s\},\Bz\cup \{t\})$, and the second time for terminal separation $(\Az\cup \{t\},\Bz\cup \{s\})$. In this manner we obtain two maximal terminal separations $(A_s,B_t)$ and $(A_t,B_s)$ that extend $(\Az\cup \{s\},\Bz\cup \{t\})$ and $(\Az\cup \{t\},\Bz\cup \{s\})$ respectively. Of course, the number of unresolved pairs decreases by at least one in both $(A_s,B_t)$ and $(A_t,B_s)$, due to resolving $\{s,t\}$. If the number of unresolved pairs either in $(A_s,B_t)$ or in $(A_t,B_s)$ decreases by more than one, then, as we argued, performing a branching step $(A_1,B_1)=(A_s,B_t)$ and $(A_2,B_2)=(A_t,B_s)$ leads to the branching vector $[1,1,0;2,1,0]$ or a better one, which is good. We can test in $\Oh(k^{\Oh(1)} m)$ time whether this holds for any pair $\{s,t\}\in \terms'$, and if so then we pursue the branching step. 

\begin{branching}\label{br:twopairs}
If in either $(A_s,B_t)$ or in $(A_t,B_s)$, more than one terminal pair gets resolved, then perform branching into $(A_1,B_1)=(A_s,B_t)$ and $(A_2,B_2)=(A_t,B_s)$.
\end{branching}

Hence, if this branching step cannot be performed, then we assume the following:

\begin{assumption}\label{ass:one-pair}
For every pair $\{s,t\}\in \terms'$ in both $(A_s,B_t)$ and $(A_t,B_s)$ only the pair $\{s,t\}$ gets resolved.
\end{assumption}

We now proceed with some structural observations about the instance at hand.

\begin{lemma}\label{lem:AsConn}
$G[A_s\setminus \Az]$, $G[A_t\setminus \Az]$, $G[B_s\setminus \Bz]$, $G[B_t\setminus \Bz]$ are connected.
\end{lemma}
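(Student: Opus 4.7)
The plan is to prove the statement for $G[A_s \setminus \Az]$; the other three cases follow by symmetry (swapping the roles of $s,t$ or of $\Az,\Bz$). Recall that $(A_s, B_t)$ was obtained as a maximal terminal separation of minimum cost extending $(\Az \cup \{s\}, \Bz \cup \{t\})$ via Theorem~\ref{thm:ptime}, so we can use its minimality for a contradiction.

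Suppose for contradiction that $G[A_s \setminus \Az]$ is disconnected. Since $s$ is a terminal of degree at most one, $s$ is contained in at most one connected component of $G[A_s \setminus \Az]$, so we may pick a connected component $C'$ of $G[A_s \setminus \Az]$ not containing $s$. By Assumption~\ref{ass:one-pair}, $A_s \setminus (\Az \cup \{s\})$ is terminal-free, hence so is $C'$; furthermore $C' \cap \Az = \emptyset$ by construction.

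Now I would set $X = A_s \setminus C'$ and $Y = \Az \cup C'$ and apply the equality form of submodularity~\eqref{eq:submodularity-err}. We have $X \cap Y = \Az$ and $X \cup Y = A_s$, and since $C'$ is a connected component of $G[A_s \setminus \Az]$, there are no edges between $C' = Y \setminus X$ and $A_s \setminus (\Az \cup C') = X \setminus Y$. Thus the error term vanishes and we obtain
$$d(A_s \setminus C') + d(\Az \cup C') = d(A_s) + d(\Az).$$
By Lemma~\ref{lem:ex0} applied to the nonempty terminal-free $\Az$-extension $\Az \cup C'$, we have $d(\Az \cup C') > d(\Az)$, and consequently $d(A_s \setminus C') < d(A_s)$.

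Finally, observe that $(A_s \setminus C', B_t)$ is still a terminal separation extending $(\Az \cup \{s\}, \Bz \cup \{t\})$: it contains $s$ (as $s \notin C'$), it still contains $\Az \cup \{s\}$ on the $A$-side and $\Bz \cup \{t\}$ on the $B$-side, and removing the terminal-free set $C'$ does not change the side of any terminal pair. Its cost is strictly smaller than $c(A_s, B_t)$, contradicting minimality. The main conceptual step, and the only nontrivial one, is spotting that the vanishing error term in submodularity combined with Lemma~\ref{lem:ex0} forces the decrease to be strict; the rest is bookkeeping and symmetry.
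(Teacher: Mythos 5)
Your proof is correct and follows essentially the same route as the paper: pick a component of $G[A_s \setminus \Az]$ avoiding $s$, observe it is terminal-free, use maximality (via Lemma~\ref{lem:ex0}, which is itself a corollary of maximality) to get strict excess, and conclude that deleting it from $A_s$ strictly reduces cost while preserving the extension property, contradicting minimality of $(A_s,B_t)$. The paper states the key cut inequality tersely; your use of the equality form~\eqref{eq:submodularity-err} with a vanishing error term is exactly the justification the paper leaves implicit. (Minor nit: the remark that $s$ is in at most one component because it has degree at most one is unnecessary — any vertex is in exactly one component — but this is harmless.)
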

\begin{proof}
We prove the statement for $G[A_s\setminus \Az]$, since the other statements are symmetric. Suppose $G[A_s\setminus \Az]$ is disconnected, and let $C$ be any of its connected component that does not contain $s$. Then $C$ is terminal-free, so by the maximality of $(\Az,\Bz)$ we infer that $d(C\cup \Az)>d(\Az)$. But then $d(A_s\setminus C)<d(A_s)$, which contradicts the optimality of $(A_s,B_s)$.
\end{proof}

\begin{lemma}\label{lem:lb}
Let $\{s,t\}\in \terms'$, and let $(A_s,B_t)$ and $(A_t,B_s)$ be any optimum-cost terminal separations extending $(\Az\cup\{s\},\Bz\cup \{t\})$ and $(\Az\cup\{t\},\Bz\cup \{s\})$, respectively. Suppose that $(A_s,B_t)$ and $(A_t,B_s)$ do not resolve any terminal pair apart from $\{s,t\}$. Then for any set $A$ with $\Az\cup\{s\}\subseteq A\subseteq V(G)\setminus \Bz$ that has only $s$ among the terminals of $\terms'$, it holds that $\Delta(A)\geq \Delta(A_s)$. Symmetrically, for any set $B$ with $\Bz\cup\{s\}\subseteq B\subseteq V(G)\setminus \Az$ that has only $s$ among the terminals of $\terms'$, it holds that $\Delta(B)\geq \Delta(B_s)$.
\end{lemma}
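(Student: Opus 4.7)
The plan is to combine posimodularity~\eqref{eq:posimodularity} with the optimality of $(A_s,B_t)$. Applying posimodularity to $A$ and $B_t$ gives
$$d(A) + d(B_t) \geq d(A \setminus B_t) + d(B_t \setminus A).$$
If we can argue that $(A \setminus B_t, B_t \setminus A)$ is a terminal separation extending $(\Az \cup \{s\}, \Bz \cup \{t\})$, then by the optimality of $(A_s,B_t)$ the right-hand side is at least $d(A_s) + d(B_t)$, and cancelling $d(B_t)$ yields $d(A) \geq d(A_s)$, which is exactly $\Delta(A) \geq \Delta(A_s)$.

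The main step is therefore to verify that the pair $(A \setminus B_t, B_t \setminus A)$ is a valid terminal separation extending $(\Az \cup \{s\}, \Bz \cup \{t\})$. Disjointness is immediate, and the containment $\Az \cup \{s\} \subseteq A \setminus B_t$ follows from $\Az \cup \{s\} \subseteq A$ together with the fact that $B_t$ is disjoint from $\Az \cup \{s\}$; symmetrically, $\Bz \cup \{t\} \subseteq B_t \setminus A$. The crucial use of the hypothesis comes in handling the unresolved pairs: since $(A_s,B_t)$ does not resolve any pair other than $\{s,t\}$, the only terminal of a pair of $\terms'$ that is contained in $B_t$ is $t$, while by assumption the only such terminal contained in $A$ is $s$. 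Thus for every unresolved pair $\{s',t'\} \neq \{s,t\}$, neither terminal lies in $A \setminus B_t$ nor in $B_t \setminus A$, placing the pair entirely outside $(A \setminus B_t) \cup (B_t \setminus A)$ as the definition of a terminal separation requires. For already resolved pairs, the $\Az$-terminal lies in $A \setminus B_t$ and the $\Bz$-terminal lies in $B_t \setminus A$; and for $\{s,t\}$ itself the split is correct by construction.

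The symmetric statement for a candidate $B$ is proved by an entirely analogous argument: apply posimodularity to $A_t$ and $B$, observe that $(A_t \setminus B, B \setminus A_t)$ is a terminal separation extending $(\Az \cup \{t\}, \Bz \cup \{s\})$, and invoke the optimality of $(A_t,B_s)$. I expect the only subtle point in the whole proof to be the bookkeeping involved in verifying the terminal-separation conditions; once that is handled, the conclusion follows from a single combination of posimodularity with the optimality bound, so there is no substantive obstacle beyond this case analysis.
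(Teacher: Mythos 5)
Your proof is correct and uses the same core tools as the paper: posimodularity applied to $A$ and $B_t$ combined with the optimality of $(A_s,B_t)$, and the same bookkeeping (resting on the hypothesis that $(A_s,B_t)$ resolves only $\{s,t\}$ and that $A$ contains only $s$ from $\terms'$) to verify that the truncated pair is a terminal separation extending $(\Az\cup\{s\},\Bz\cup\{t\})$. The paper argues by contradiction and compares against one of the one-sided truncations $(A,B_t\setminus A)$ or $(A\setminus B_t,B_t)$, whereas you argue directly with the single two-sided truncation $(A\setminus B_t,B_t\setminus A)$; this is a cosmetic variation rather than a genuinely different route.
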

\begin{proof}
We prove only the first claim for the second one is symmetric. Let $A$ be such a set, and for the sake of contradiction suppose $\Delta(A)<\Delta(A_s)$. Then $d(A)+d(B_t)<2c(A_s,B_t)$. However, from posimodularity of cuts it follows that either $d(B_t\setminus A)+d(A)\leq d(B_t)+d(A)$ or $d(B_t)+d(A\setminus B_t)\leq d(B_t)+d(A)$. Both $(A,B_t\setminus A)$ and $(A\setminus B_t,B_t)$ are terminal separations that extend $(\Az\cup\{s\},\Bz\cup \{t\})$, and one of them has strictly smaller cost than $(A_s,B_t)$. This is a contradiction with the optimality of $(A_s,B_t)$.
\end{proof}

\subsubsection{Pushing $A_s$ and $B_s$}

The problem that we will soon face is that separations $(A_s,B_t)$ and $(A_t,B_s)$ are not uniquely defined. For instance, there can be some set of vertices $Z\subseteq A_s\setminus \Az$ that could be moved from $A_s$ to $B_t$ without changing the cost of the separation. We now make an adjustment of these separations so that we can assume that $A_s$, resp. $B_s$, is maximal. For this, we need the following technical results.

\begin{lemma}\label{lem:patch}
Suppose that $(A_s,B_t)$ and $(A_s',B_t')$ are maximal terminal separations of minimum cost among separations that extend $(\Az\cup \{s\},\Bz\cup \{t\})$. Suppose further that they do not resolve any other terminal pair from $\terms'$. Then
\begin{itemize}
\item[(a)] $d(A_s)=d(A_s')$ and $d(B_t)=d(B_t')$;
\item[(b)] $(A_s\cap A_s',B_t\cup B_t')$ and $(A_s\cup A_s',B_t\cap B_t')$ are also terminal separations of minimum cost among separations that extend $(\Az\cup \{s\},\Bz\cup \{t\})$;
\item[(c)] $A_s\cup B_t=A_s'\cup B_t'$.
\end{itemize}
\end{lemma}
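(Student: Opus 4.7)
The plan is as follows. Let $c^\ast = c(A_s,B_t) = c(A_s',B_t')$ denote the common minimum cost. For part~(a), I would apply Lemma~\ref{lem:lb} twice. Since $A_s'$ satisfies $\Az\cup\{s\}\subseteq A_s'\subseteq V(G)\setminus \Bz$ and contains only $s$ from $\terms'$, Lemma~\ref{lem:lb} with witness $(A_s,B_t)$ gives $d(A_s')\ge d(A_s)$; swapping the witness yields the reverse inequality, so $d(A_s) = d(A_s')$. Then $d(B_t) = d(B_t')$ follows from $d(A_s)+d(B_t) = 2c^\ast = d(A_s')+d(B_t')$.

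For part~(b), I would first check that $(A_s\cap A_s',B_t\cup B_t')$ and $(A_s\cup A_s',B_t\cap B_t')$ are terminal separations extending $(\Az\cup\{s\},\Bz\cup\{t\})$: disjointness follows from $A_s\cap B_t = A_s'\cap B_t' = \emptyset$, and the terminal condition is routine (for any unresolved $P\ne\{s,t\}$, both of its vertices lie outside all four of $A_s,A_s',B_t,B_t'$). Summing the submodular inequalities for the $A$-pair and the $B$-pair,
$$4c^\ast = d(A_s)+d(A_s')+d(B_t)+d(B_t') \;\ge\; 2c(A_s\cap A_s', B_t\cup B_t') + 2c(A_s\cup A_s', B_t\cap B_t'),$$
while minimality forces each cost on the right to be at least $c^\ast$, so equality holds throughout and both new separations attain the minimum cost~$c^\ast$.

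Part~(c) is the main obstacle. Applying Lemma~\ref{lem:lb} to $A_s\cap A_s'$ and to $A_s\cup A_s'$ (both satisfy its hypotheses), together with the tight submodular identity $d(A_s\cap A_s')+d(A_s\cup A_s')=2d(A_s)$ obtained in part~(b), pins down $d(A_s\cap A_s')=d(A_s\cup A_s')=d(A_s)$, and analogously $d(B_t\cap B_t')=d(B_t\cup B_t')=d(B_t)$. Let $X = A_s\cap B_t'$ and $Y = A_s'\cap B_t$ be the potential conflict sets. I would prove $B_t'\subseteq A_s\cup B_t$ by considering the terminal separation $(A_s,(B_t\cup B_t')\setminus X)$, which extends $(A_s,B_t)$ because $B_t\cap X\subseteq B_t\cap A_s = \emptyset$. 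Posimodularity applied to $A_s$ and $B_t\cup B_t'$ yields
$$d(A_s)+d(B_t\cup B_t') \;\ge\; d(A_s\setminus X) + d((B_t\cup B_t')\setminus X),$$
while Lemma~\ref{lem:lb} supplies the matching lower bounds $d(A_s\setminus X)\ge d(A_s)$ and $d((B_t\cup B_t')\setminus X)\ge d(B_t)=d(B_t\cup B_t')$. Equality is therefore forced, so $(A_s,(B_t\cup B_t')\setminus X)$ has cost $c^\ast$; by maximality of $(A_s,B_t)$ this forces $(B_t\cup B_t')\setminus X = B_t$, which gives $B_t'\setminus A_s\subseteq B_t$ and hence $B_t'\subseteq A_s\cup B_t$. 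An analogous argument using $((A_s\cup A_s')\setminus Y, B_t)$ gives $A_s'\subseteq A_s\cup B_t$, so $A_s'\cup B_t'\subseteq A_s\cup B_t$; swapping the primed and unprimed roles yields the reverse inclusion. The main hurdle is exactly this posimodularity computation: one must manufacture a separation of cost exactly $c^\ast$ that extends $(A_s,B_t)$ so that its maximality can be invoked to collapse the conflict sets.
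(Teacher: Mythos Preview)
Your proof is correct and follows essentially the same skeleton as the paper's, though with some pleasant streamlining in part~(b). A few remarks on the differences and one caveat:

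\emph{Part~(a).} You invoke Lemma~\ref{lem:lb}, whereas the paper argues directly by posimodularity (assuming $d(A_s)<d(A_s')$ and deriving a contradiction). Your route is cleaner, but be aware that Lemma~\ref{lem:lb} as stated requires both $(A_s,B_t)$ and $(A_t,B_s)$ as hypotheses, and you also need a $B_t$-variant (lower-bounding sets containing $\Bz\cup\{t\}$) that is symmetric to, but not literally, what Lemma~\ref{lem:lb} says. Since the proof of each half of Lemma~\ref{lem:lb} only uses one of the two witnesses, this is harmless; just flag it, or inline the two-line posimodularity argument.

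\emph{Part~(b).} Your argument (sum the two submodular inequalities for the $A$-pair and the $B$-pair, then squeeze against $2c^\ast+2c^\ast$) is more direct than the paper's, which goes through a longer chain of posimodularity steps. The paper's detour has a purpose, though: along the way it records equalities such as $c((A_s\cup A_s')\setminus B_t',B_t')=c^\ast$, which it then reuses to dispatch part~(c) in a couple of lines. Your streamlined~(b) forces you to redo a posimodularity computation in~(c); that is exactly what you do, and it works, but it explains why~(c) is the ``main obstacle'' in your write-up while in the paper it is almost free.

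\emph{Part~(c).} Your argument is correct: using $d(B_t\cup B_t')=d(B_t)$ (from the tight submodular identity in~(b)) together with posimodularity on $A_s$ and $B_t\cup B_t'$, and the Lemma~\ref{lem:lb}-type lower bounds on $A_s\setminus X$ and $(B_t\cup B_t')\setminus A_s$, you manufacture a cost-$c^\ast$ separation $(A_s,(B_t\cup B_t')\setminus A_s)$ extending $(A_s,B_t)$, and maximality collapses it. The paper does the same trick but cites the equality $c((A_s\cup A_s')\setminus B_t',B_t')=c^\ast$ already established in its proof of~(b), so it can immediately say ``this extends $(A_s',B_t')$ strictly unless $A_s\subseteq A_s'\cup B_t'$''. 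Both routes arrive at the same place.
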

\begin{proof}
\noindent (a) Let $C=c(A_s,B_t)=c(A_s',B_t')$ be the minimum cost of a terminal separation extending $(\Az\cup \{s\},\Bz\cup \{t\})$. Suppose w.l.o.g. that $d(A_s)<d(A_s')$, then we have that $d(B_t)>d(B_t')$. By posimodularity, we have that
\begin{equation}\label{eq1}
d(A_s\setminus B_t')+d(B_t'\setminus A_s)\leq d(A_s)+d(B_t')<2C.
\end{equation}
Observe that $(A_s\setminus B_t',B_t')$ is a terminal separation that extends $(\Az\cup \{s\},\Bz\cup \{t\})$, and hence 
\begin{equation}\label{eq2}
d(A_s\setminus B_t')+d(B_t')=2c(A_s\setminus B_t',B_t')\geq 2C.
\end{equation}
Symmetrically, by considering terminal separation $(A_s,B_t'\setminus A_s)$ we obtain that
\begin{equation}\label{eq3}
d(A_s)+d(B_t'\setminus A_s)=2c(A_s,B_t'\setminus A_s)\geq 2C.
\end{equation}
Thus, from~\eqref{eq1},~\eqref{eq2}, and~\eqref{eq3} we obtain that
$$4C\leq d(A_s)+d(B_t')+d(A_s\setminus B_t')+d(B_t'\setminus A_s)<4C,$$
which is a contradiction.

\smallskip

\noindent (b) Observe that $d(A_s\cap A_s')\geq d(A_s)$, because otherwise $A_s$ could have been replaced with $A_s\cap A_s'$ in separation $(A_s,B_t)$. By submodularity of cuts we have that $d(A_s\cap A_s')+d(A_s\cup A_s')\leq d(A_s)+d(A_s')$, and hence $d(A_s\cup A_s')\leq d(A_s')=d(A_s)$. By posimodularity, we have that
\begin{equation}\label{eq21}
d((A_s\cup A_s') \setminus B_t)+d(B_t\setminus (A_s\cup A_s'))\leq d(A_s\cup A_s')+d(B_t)\leq d(A_s)+d(B_t)=2C
\end{equation}
On the other hand, for terminal separation $((A_s\cup A_s') \setminus B_t,B_t)$ we have that
\begin{equation}\label{eq22}
d((A_s\cup A_s') \setminus B_t)+d(B_t) = 2c((A_s\cup A_s') \setminus B_t,B_t) \geq 2C,
\end{equation}
and for terminal separation $(A_s\cup A_s',B_t\setminus (A_s\cup A_s'))$ we have that
\begin{equation}\label{eq23}
d(A_s\cup A_s')+d(B_t\setminus (A_s\cup A_s')) = 2c((A_s\cup A_s',B_t\setminus (A_s\cup A_s')) \geq 2C.
\end{equation}
Thus, from~\eqref{eq21},~\eqref{eq22}, and~\eqref{eq23}
$$4C\geq d((A_s\cup A_s') \setminus B_t)+d(B_t)+d(A_s\cup A_s')+d(B_t\setminus (A_s\cup A_s'))\geq 4C,$$
which means that all the inequalities above are in fact equalities. In particular:
\begin{itemize}
\item $d(A_s\cap A_s')=d(A_s)=d(A_s\cup A_s')$, and 
\item $c((A_s\cup A_s') \setminus B_t,B_t)=C$.
\end{itemize} 
Symmetric arguments can be used to show that:
\begin{itemize}
\item $d(B_t\cap B_t') = d(B_t)=d(B_t\cup B_t')$, 
\item $c((A_s\cup A_s') \setminus B_t',B_t')=C$, 
\item $c(A_s,(B_t\cup B_t')\setminus A_s)=C$, and
\item $c(A_s',(B_t\cup B_t')\setminus A_s')=C$.
\end{itemize}
Therefore, both $(A_s\cap A_s',B_t\cup B_t')$ and $(A_s\cup A_s',B_t\cap B_t')$ have cost $C$.

\smallskip

\noindent (c) For the sake of contradiction, assume that $A_s\cup B_t\neq A_s'\cup B_t'$. Suppose first that there is an element $u\in A_s$ such that $u\notin A_s'\cup B_t'$. In the proof of (b) we have showed that $c((A_s\cup A_s') \setminus B_t',B_t')=C$. Note that $((A_s\cup A_s') \setminus B_t',B_t')$ is a terminal separation that extends $(A_s',B_t')$, and moreover its left side is has at least one additional element $u$. Since its cost is the same as the cost of $(A_s',B_t')$, we obtain a contradiction with the maximality of $(A_s',B_t')$.
\end{proof}

\begin{lemma}\label{lem:pushing}
Let $\mathcal{F}$ be the family of all maximal terminal separations $(A_s,B_t)$ of minimum cost among separations that extend $(\Az\cup \{s\},\Bz\cup \{t\})$. Suppose that all separations from $\mathcal{F}$ resolve only the pair $\{s,t\}$ among the pairs from $\terms'$. Then there exists a unique maximal terminal separation $(A_s^{\max},B_t^{\min})$ such that $A_s^{\max}\supseteq A_s$ and $B_t^{\min}\subseteq B_t$ for each $(A_s,B_t)\in \mathcal{F}$. Moreover, if $A$ is such that $\Az\cup \{s\}\subseteq A$, $A\cap \Bz=\emptyset$, $A \cap \bigcup \terms' \subseteq \{s\}$, but $A\setminus A_s^{\max}\neq \emptyset$, then $d(A)>d(A_s^{\max})$.
\end{lemma}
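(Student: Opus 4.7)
The plan is to establish existence and uniqueness of $(A_s^{\max}, B_t^{\min})$ first, and then the strict inequality at the end.

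For existence, I will exploit Lemma~\ref{lem:patch}(c): all separations in $\mathcal{F}$ share the same union $U := A_s\cup B_t$, so each one is determined by its $A$-side. Since $V(G)$ is finite, pick $(A_s^{\max}, B_t^{\min})\in \mathcal{F}$ with $|A_s^{\max}|$ maximum; then automatically $B_t^{\min} = U\setminus A_s^{\max}$. I claim $A_s\subseteq A_s^{\max}$ for every $(A_s, B_t)\in \mathcal{F}$. If not, Lemma~\ref{lem:patch}(b) applied to $(A_s, B_t)$ and $(A_s^{\max}, B_t^{\min})$ produces a minimum-cost separation with $A$-side $A_s\cup A_s^{\max}\supsetneq A_s^{\max}$; extending it to a maximal separation (necessarily still of cost $C$) yields an element of $\mathcal{F}$ whose $A$-side strictly contains $A_s^{\max}$, contradicting the choice of $A_s^{\max}$. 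Uniqueness is then immediate, as every $(A_s, B_t)\in\mathcal{F}$ satisfies $A_s\subseteq A_s^{\max}$ and $B_t\supseteq B_t^{\min}$.

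For the second claim, suppose toward contradiction that some $A$ satisfies the hypotheses with $A\setminus A_s^{\max}\neq \emptyset$ but $d(A)\leq d(A_s^{\max})$. Lemma~\ref{lem:lb} already yields $d(A)\geq d(A_s^{\max})$, so equality holds. I will construct a terminal separation that strictly extends $(A_s^{\max}, B_t^{\min})$ of cost at most $c(A_s^{\max}, B_t^{\min})$, contradicting maximality. Submodularity gives $d(A\cup A_s^{\max}) + d(A\cap A_s^{\max})\leq d(A) + d(A_s^{\max})$, and since $A\cap A_s^{\max}$ still contains $\Az\cup\{s\}$, avoids $\Bz$, and meets $\bigcup \terms'$ only in $\{s\}$, Lemma~\ref{lem:lb} gives $d(A\cap A_s^{\max})\geq d(A_s^{\max})$; hence $d(A\cup A_s^{\max})\leq d(A_s^{\max})$. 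Symmetrically, posimodularity applied to $A$ and $B_t^{\min}$, combined with Lemma~\ref{lem:lb} applied to $A\setminus B_t^{\min}$, yields $d(B_t^{\min}\setminus A)\leq d(B_t^{\min})$.

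It remains to verify that $(A\cup A_s^{\max},\ B_t^{\min}\setminus A)$ is a \emph{valid} terminal separation strictly extending $(A_s^{\max}, B_t^{\min})$, and I expect this bookkeeping to be the main obstacle. Disjointness is immediate; the delicate point is that no previously unresolved pair from $\terms'$ may be newly split or mis-placed. The condition $A\cap \bigcup\terms'\subseteq\{s\}$ prevents any unresolved terminal other than $s$ from entering $A\cup A_s^{\max}$, while the hypothesis that $(A_s^{\max}, B_t^{\min})$ resolves only $\{s,t\}$ ensures that $B_t^{\min}\setminus A$ still avoids every unresolved terminal other than $t$; pairs already resolved by $(\Az,\Bz)$ remain handled because $A\cap \Bz=\emptyset$. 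Once this is established, the cost of the new separation is $(d(A\cup A_s^{\max}) + d(B_t^{\min}\setminus A))/2\leq (d(A_s^{\max}) + d(B_t^{\min}))/2 = c(A_s^{\max}, B_t^{\min})$, and since $A\setminus A_s^{\max}\neq \emptyset$ the extension is strict, contradicting maximality of $(A_s^{\max}, B_t^{\min})$. Therefore $d(A)>d(A_s^{\max})$, as required.
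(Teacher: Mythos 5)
Your overall strategy mirrors the paper's proof: Lemma~\ref{lem:patch} for existence and uniqueness, the same submodularity and posimodularity inequalities, and the same construction $(A\cup A_s^{\max},\ B_t^{\min}\setminus A)$. Your existence argument (maximum-cardinality element of $\mathcal{F}$ rather than the union over $\mathcal{F}$) and your appeals to Lemma~\ref{lem:lb} in place of the paper's direct substitution arguments are harmless cosmetic variants; the part of Lemma~\ref{lem:lb} you use only relies on the hypothesis about $(A_s,B_t)$, which the assumptions of Lemma~\ref{lem:pushing} supply.

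There is, however, a gap in the closing contradiction. You intend to show that $(A\cup A_s^{\max},\ B_t^{\min}\setminus A)$ \emph{strictly extends} $(A_s^{\max},B_t^{\min})$ and then invoke the maximality of that separation. Extension requires both $A_s^{\max}\subseteq A\cup A_s^{\max}$ \emph{and} $B_t^{\min}\subseteq B_t^{\min}\setminus A$; the second containment fails whenever $A\cap B_t^{\min}\neq\emptyset$, and nothing in the hypotheses on $A$ forbids this ($A\cap\Bz=\emptyset$ and $A\cap\bigcup\terms'\subseteq\{s\}$ leave $A$ free to meet nonterminal vertices of $B_t^{\min}\setminus\Bz$). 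Your verification paragraph in fact establishes only that the new pair is a terminal separation extending $(\Az\cup\{s\},\Bz\cup\{t\})$, which is weaker and does not by itself contradict the maximality of $(A_s^{\max},B_t^{\min})$.

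The fix is short and uses material you already proved: having shown the new separation has cost at most $C$ and extends $(\Az\cup\{s\},\Bz\cup\{t\})$, extend it to a maximal separation of cost $C$; that separation lies in $\mathcal{F}$, so by the dominance property from your existence paragraph its $A$-side is contained in $A_s^{\max}$, while on the other hand it contains $A\cup A_s^{\max}\supsetneq A_s^{\max}$ --- a contradiction. This is exactly how the paper concludes: the contradiction is with the choice of $A_s^{\max}$ as dominating all $A$-sides over $\mathcal{F}$, not with the maximality of $(A_s^{\max},B_t^{\min})$ as a terminal separation.
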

\begin{proof}
We set 
$$(A_s^{\max},B_t^{\min})=\left(\bigcup_{(A_s,B_t)\in \mathcal{F}} A_s,\bigcap_{(A_s,B_t)\in \mathcal{F}} B_t\right).$$
From Lemma~\ref{lem:patch} it follows that $(A_s^{\max},B_t^{\min})\in \mathcal{F}$. 

\newcommand{\oA}{\overline{A}}
\newcommand{\oB}{\overline{B}}

We are left with proving the last statement. Take any such $A$, and suppose for the sake of contradiction that $d(A)\leq d(A_s^{\max})$. Let $\oA=A_s^{\max}\cup A$ and $\oB=B_t^{\min}\setminus A$. Observe that $(\oA,\oB)$ is a terminal separation that extends $(\Az\cup \{s\},\Bz\cup \{t\})$. Since $\oA$ has at least one more element than $A_s^{\max}$, from the properties of $(A_s^{\max},B_t^{\min})$ we infer that $c(\oA,\oB)>C$, where $C$ is the cost of every separation from $\mathcal{F}$. Observe that $d(A_s^{\max}\cap A)\geq d(A_s^{\max})$, because otherwise we would substitute $A_s^{\max}$ with $A_s^{\max}\cap A$ in separation $(A_s^{\max},B_t^{\min})$ and obtain a separation of smaller cost that extends $(\Az\cup \{s\},\Bz\cup \{t\})$. Hence, from the submodularity of cuts we infer that $d(\oA)\leq d(A)$, so in particular $d(\oA)\leq d(A_s^{\max})$.

Now, by posimodularity we obtain that
$$d(\oA\setminus B_t^{\min})+d(B_t^{\min}\setminus \oA)\leq d(\oA)+d(B_t^{\min})\leq d(A_s^{\max})+d(B_t^{\min}).$$
On the other hand, observe that $d(\oA\setminus B_t^{\min})\geq d(A_s^{\max})$, because otherwise we could substitute $A_s^{\max}$ with $\oA\setminus B_t^{\min}$ in the terminal separation $(A_s^{\max},B_t^{\min})$ and obtain a terminal separation that extends $(\Az\cup \{s\},\Bz\cup \{t\})$ and has strictly smaller cost. Thus we infer that $d(B_t^{\min}\setminus \oA)\leq d(B_t^{\min})$. As $B_t^{\min}\setminus \oA=B_t^{\min}\setminus A=\oB$, we conclude that $d(\oA)\leq d(A_s^{\max})$, $d(\oB)\leq d(B_t^{\min})$, and hence $c(\oA,\oB)\leq C$. This is a contradiction.
\end{proof}

We modify now separation $(A_s,B_t)$ as follows. For every terminal pair $\{s',t'\}\in \terms'$ that is different from $\{s,t\}$, we verify using Theorem~\ref{thm:ptime} whether $(A_s,B_t)$ can be chosen so that it has a minimum possible cost among the separations that extend $(\Az\cup \{s\},\Bz\cup \{t\})$, but it also resolves $\{s',t'\}$. If this is possible, then we pursue Branching Step~\ref{br:twopairs} with appropriate $(A_s,B_t)$. Otherwise, every minimum-cost separation extending $(\Az\cup \{s\},\Bz\cup \{t\})$ resolves only $\{s,t\}$, and the assumptions of Lemma~\ref{lem:pushing} are satisfied.
Let $(A_s^{\max},B_t^{\min})$ be the terminal extension whose existence is asserted by Lemma~\ref{lem:pushing}.
Observe that we can construct $(A_s^{\max},B_t^{\min})$ in time $\Oh(k^{\Oh(1)} m)$:
we start with any $(A_s,B_t)$ given by Theorem~\ref{thm:ptime},
and observe that Lemma~\ref{lem:pushing} implies that $A_s^{\max}$ is the unique inclusion-wise maximal set
containing $A_s$ such that $E(A_s^{\max},V(G) \setminus A_s^{\max})$ is a minimum cut between $A_s$ and $\Bz \cup (\terms \setminus A_s)$;
such a set can be computed using $\Oh(k)$ rounds of the Ford-Fulkerson algorithm.
  
Hence, we proceed further with the assumption that we have chosen $(A_s,B_t)$ to be $(A_s^{\max},B_t^{\min})$. We do symmetrically in the second branch, assuming that $(A_t,B_s)$ is chosen to be $(A_t^{\min},B_s^{\max})$, that is, the extension of $B$ that contains terminal $s$ is chosen to be maximum possible. Hence, by Lemma~\ref{lem:pushing}, we can from now on use the following assumption.

\begin{assumption}\label{ass:pushing}
For any set $A$ with $\Az\subseteq A\subseteq V(G)\setminus \Bz$ that contains only $s$ from the terminals of $\terms'$ and has at least one vertex outside $A_s$, it holds that $\Delta(A)>\Delta(A_s)$. Symmetrically, for any set $B$ with $\Bz\subseteq B\subseteq V(G)\setminus \Az$ that contains only $s$ from the terminals of $\terms'$ and has at least one vertex outside $B_s$, it holds that $\Delta(B)>\Delta(B_s)$. 
\end{assumption}

\subsubsection{Analyzing $A_s\cap B_s$, $A_s\setminus B_s$, and $B_s\setminus A_s$}

Suppose now that for some pair $\{s,t\}\in \terms'$, we have that $|(A_s\cap B_s)\setminus \{s\}|\geq 2$. Then, by Assumption~\ref{ass:one-pair} $Z=(A_s\cap B_s)\setminus \{s\}$ is a terminal-free set. Since pair $\{s,t\}$ has to be resolved one way or the other, then by persistence (Theorem~\ref{thm:persistence}) we infer that there is some minimum integral terminal separation $(\Aopt,\Bopt)$ such that $Z\subseteq \Aopt$ or $Z\subseteq \Bopt$. Therefore, it is a safe reduction to merge $Z$ into a single vertex.

\begin{reductionstep}
For every $\{s,t\}\in \terms'$, compute $Z_s=(A_s\cap B_s)\setminus \{s\}$ and $Z_t=(A_t\cap B_t)\setminus \{t\}$. Provided $Z_s$ ($Z_t$) contains more than one vertex, merge it.
\end{reductionstep}

We apply this reduction to all terminal pairs from $\terms'$, which takes time $\Oh(k^{\Oh(1)} m)$. Hence, using Lemma~\ref{lem:AsConn} from now on we can assume the following:

\begin{assumption}\label{ass:small-intersections}
For every pair $\{s,t\}\in \terms'$, either $A_s\cap B_s=\{s\}$ or $A_s\cap B_s=\{s,s'\}$, where $s'$ is the only neighbor of $s$. Moreover, either $A_t\cap B_t=\{t\}$ or $A_t\cap B_t=\{t,t'\}$, where $t'$ is the only neighbor of $t$.
\end{assumption}

As every terminal has degree one, for a pair $\{s,t\}\in \terms'$ we have that $d(A_s)\leq d(\Az\cup \{s\})\leq d(\Az)+1$, since otherwise replacing $A_s$ with $\Az\cup \{s\}$ would decrease the cost of $(A_s,B_s)$. On the other hand, we have that $d(A_s)\geq d(\Az)$, since otherwise $(A_s,\Bz\cup \{t\})$ would be a terminal separation extending $(\Az,\Bz)$ of not larger cost, which would contradict the maximality of $(\Az,\Bz)$. Then, we have three possible cases for $(\Delta(A_s),\Delta(B_s))$: $(0,0)$, $(1,0)$ and $(1,1)$; the omitted case $(0,1)$ is symmetric to $(1,0)$. The algorithm behaves differently in each of these cases. Before we proceed to the description of handling each case separately, we prove some useful observations first.

\newcommand{\Atr}{\tilde{A}}
\newcommand{\Btr}{\tilde{B}}

Let us now fix one pair $\{s,t\}$, and let $\Atr=A_s\setminus B_s$ and $\Btr = B_s\setminus A_s$. Observe that since branching on $\{s,t\}$ did not resolve any additional terminal pair, then both $\Atr \setminus \Az$ and $\Btr \setminus \Bz$ are terminal-free. Hence, by the maximality of $(\Az,\Bz)$ we have that
\begin{equation}\label{eq:truncs}
d(\Atr)\geq d(\Az) \qquad \text{and}\qquad d(\Btr)\geq d(\Bz),
\end{equation}
and the equality holds if and only if $\Atr=\Az$ or $\Btr=\Bz$, respectively. Let $R=V(G)\setminus (A_s\cup B_s)$.

\begin{lemma}\label{lem:posi}
One of the following two cases holds:
\begin{itemize}
\item $|E(A_s\cap B_s,R)|=1$, $\Atr=\Az$, $\Btr=\Bz$, and $(\Delta(A_s),\Delta(B_s))=(1,1)$; or
\item $|E(A_s\cap B_s,R)|=0$, and $2\geq \Delta(A_s)+\Delta(B_s)=\Delta(\Atr)+\Delta(\Btr)\geq 0$.
\end{itemize}
\end{lemma}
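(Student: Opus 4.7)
The plan is to apply the posimodularity identity~\eqref{eq:posimodularity-err} to the pair $A_s, B_s$, with the error term given by $|E(A_s \cap B_s, R)|$, and then use the numerical bounds on $\Delta(A_s)$ and $\Delta(B_s)$ collected in the paragraph preceding the lemma together with the non-negativity of the excess of terminal-free extensions supplied by~\eqref{eq:truncs}.

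More concretely, I would first rewrite~\eqref{eq:posimodularity-err} applied to $A=A_s$, $B=B_s$ in the form
\[
\Delta(A_s) + \Delta(B_s) \;=\; \Delta(\Atr) + \Delta(\Btr) + 2\,|E(A_s \cap B_s, R)|,
\]
by subtracting $d(\Az) + d(\Bz)$ from both sides (here the $\Delta$ of a $\Bz$-extension means $d(\cdot) - d(\Bz)$, by the symmetric definition). Then I would recall the three facts: (i)~$0 \leq \Delta(A_s), \Delta(B_s) \leq 1$, as already argued above the lemma; (ii)~$\Delta(\Atr) \geq 0$ and $\Delta(\Btr) \geq 0$ by~\eqref{eq:truncs}, since $\Atr \setminus \Az$ and $\Btr \setminus \Bz$ are terminal-free by Assumption~\ref{ass:one-pair}; and (iii)~by~\eqref{eq:truncs} equality $\Delta(\Atr)=0$ forces $\Atr=\Az$, and analogously for $\Btr$.

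From here the case split is immediate. If $|E(A_s \cap B_s, R)| = 0$, the identity reads $\Delta(A_s) + \Delta(B_s) = \Delta(\Atr) + \Delta(\Btr)$; combining with (i) and (ii) gives the second bullet of the lemma. If $|E(A_s \cap B_s, R)| \geq 1$, then the right-hand side is at least $2$, which together with $\Delta(A_s) + \Delta(B_s) \leq 2$ from (i) forces $|E(A_s \cap B_s, R)|=1$, $\Delta(A_s)=\Delta(B_s)=1$, and $\Delta(\Atr)=\Delta(\Btr)=0$; then (iii) yields $\Atr = \Az$ and $\Btr = \Bz$, which is exactly the first bullet.

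There is no real obstacle in this proof; the only subtle point is making sure that both $\Atr \setminus \Az$ and $\Btr \setminus \Bz$ are terminal-free, so that~\eqref{eq:truncs} applies — but this is precisely the content of Assumption~\ref{ass:one-pair} (no terminal pair other than $\{s,t\}$ is resolved by $(A_s,B_t)$ or $(A_t,B_s)$), since $\Atr \subseteq A_s$ avoids the terminals placed in $B_t$, and similarly for $\Btr$. Once this is noted, the lemma is a direct arithmetic consequence of posimodularity combined with the maximality of $(\Az,\Bz)$.
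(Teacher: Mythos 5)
Your proof is correct and follows essentially the same posimodularity-plus-bounds argument as the paper: apply~\eqref{eq:posimodularity-err} to $A_s,B_s$, subtract $d(\Az)+d(\Bz)$ to pass to excesses, and close with~\eqref{eq:truncs} together with the established bounds $0 \le \Delta(A_s),\Delta(B_s) \le 1$. One small wording slip in your closing remark: terminal-freeness of $\Atr\setminus\Az$ comes from $s\in A_s\cap B_s$ (so $s\notin\Atr$) combined with Assumption~\ref{ass:one-pair} forcing $A_s\setminus\Az$ to contain only $s$ among the terminals of $\terms'$, not from $\Atr$ ``avoiding the terminals placed in $B_t$''~--- but this is exactly what the paper establishes when deriving~\eqref{eq:truncs}, so the conclusion stands.
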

\begin{proof}
By applying posimodularity of cuts to the sets $A_s$ and $B_s$, we obtain:
\begin{equation}\label{eq:posi}
d(A_s)+d(B_s)=d(\Atr)+d(\Btr)+2|E(A_s\cap B_s,R)|\geq d(\Az)+d(\Bz)+2|E(A_s\cap B_s,R)|.
\end{equation}
On the other hand, we have that $d(A_s)\leq d(\Az)+1$ and $d(B_s)\leq d(\Bz)+1$. Hence we have that $|E(A_s\cap B_s,R)| \leq 1$ and the claimed case distinction follows from~\eqref{eq:truncs} and~\eqref{eq:posi}.
\end{proof}

\subsubsection{Decomposing sets of excess $2$}

Finally, we make a useful observation that will show a generic setting when Lemma~\ref{lem:ex2-red} can be applied.

\begin{lemma}\label{lem:cool}
Suppose $\Delta(A_s)=1$ and $A_s \neq \Az \cup \{s\}$. Then $A_s\setminus \{s\}\supsetneq \Az$ is a terminal-free excess-2 set, and $(A_s\setminus \Az)\setminus \{s\}$ has a decomposition $\{d,c_1,c_2,\ldots,c_r\}$ given by Lemma~\ref{lem:ex2-red}.
Moreover, $d=s'$ is the unique neighbor of $s$ in $G$.
\end{lemma}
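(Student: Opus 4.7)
The plan is to apply Lemma~\ref{lem:ex2-red} to $A := A_s \setminus \{s\}$ and then pin down the ``heavy'' vertex $d$ in the resulting decomposition as $s'$.

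First I would show that $A$ is a terminal-free $\Az$-extension of excess exactly $2$, strictly containing $\Az$. Terminal-freeness is immediate from Assumption~\ref{ass:one-pair}: apart from $\{s,t\}$ no other pair of $\terms'$ is resolved in $(A_s,B_t)$, so $A_s\setminus\Az$ contains only the terminal $s$. The strict containment $A\supsetneq\Az$ follows from the hypothesis $A_s\neq\Az\cup\{s\}$. For the excess, since the Lonely Terminal Reduction does not apply, $s$ has degree exactly one with some unique neighbor $s'$. By Lemma~\ref{lem:AsConn}, $G[A_s\setminus\Az]$ is connected; since $A_s\setminus\Az$ properly contains $\{s\}$ and $s$ has no edge outside $\{s'\}$, we must have $s'\in A_s$. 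Removing $s$ from $A_s$ then converts the internal edge $ss'$ into a boundary edge and loses nothing, yielding $d(A)=d(A_s)+1=d(\Az)+2$, i.e.\ $\Delta(A)=2$.

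Next I would invoke Lemma~\ref{lem:ex2-red} on $A$. The resulting decomposition of $A\setminus\Az$ is either the non-compact form $\{c_1,c_2\}$ or the compact form $\{d,c_1,\ldots,c_r\}$ with $r\geq 0$. To rule out the former, note that then $A_s\setminus\Az=\{s,c_1,c_2\}$. As above, $s'$ cannot lie in $\Az$ (otherwise $s$ would be isolated in the connected graph $G[A_s\setminus\Az]$), so $s'\in\{c_1,c_2\}$; say $s'=c_1$. But then $c_2$ is non-adjacent to $c_1$ (Lemma~\ref{lem:ex2-red}) and to $s$, hence $c_2$ is isolated in $G[A_s\setminus\Az]$, contradicting Lemma~\ref{lem:AsConn}. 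So we are in the compact case.

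Finally, I would identify $d=s'$. Again $s'\notin\Az$ by the same isolation argument, so $s'\in\{d,c_1,\ldots,c_r\}$. Suppose toward a contradiction that $s'=c_i$ for some $i$. Using Lemma~\ref{lem:ex2-red}: the vertex $c_i$ has $p_i+x_i$ edges to $\Az$, $p_i$ edges to $d$, and $x_i+1$ edges to $V(G)\setminus(A\cup\Bz)$, one of which is $sc_i$. Passing from $\Az$ to $\Az\cup\{s,c_i\}$ turns the $p_i+x_i$ edges $\Az$--$c_i$ and the edge $sc_i$ into internal edges; meanwhile $c_i$ contributes $p_i+x_i$ new boundary edges (to $d$ and to $V(G)\setminus(A\cup\Bz)\setminus\{s\}$), and $s$ contributes none. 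Thus $d(\Az\cup\{s,c_i\})=d(\Az)-(p_i+x_i)+(p_i+x_i)=d(\Az)$, so $\Delta(\Az\cup\{s,c_i\})=0<1=\Delta(A_s)$. But $\Az\cup\{s,c_i\}$ satisfies the hypothesis of Lemma~\ref{lem:lb}, which forces $\Delta(\Az\cup\{s,c_i\})\geq\Delta(A_s)$, a contradiction. Hence $s'=d$. The main obstacle is this last edge-count, which crucially uses the fine-grained description of $c_i$ from Lemma~\ref{lem:ex2-red}; the rest is a fairly mechanical combination of Lemmas~\ref{lem:AsConn},~\ref{lem:lb}, and~\ref{lem:ex2-red} together with the inapplicability of reductions.
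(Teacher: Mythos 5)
Your proof is correct and follows essentially the same strategy as the paper's: establish that $A_s\setminus\{s\}$ is a terminal-free $\Az$-extension of excess~$2$ using the degree-one property of $s$ and connectivity of $G[A_s\setminus\Az]$, apply Lemma~\ref{lem:ex2-red}, and rule out $s'=c_i$ by showing that this would make $\Az\cup\{s,c_i\}$ an excess-$0$ set contradicting the optimality of $(A_s,B_t)$. The paper phrases that last step more briefly (since $\Az\cup\{c_i\}$ has excess $1$ and $s$'s only edge goes to $c_i$, adding $s$ drops the excess to $0$) and does not need a separate case for the $\{c_1,c_2\}$ decomposition, since the same optimality contradiction applies uniformly; your detailed edge count and connectivity argument for the non-compact case both land in the right place but are more elaborate than necessary.
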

\begin{proof}
The fact that $A_s\setminus \{s\}$ is an excess-2 set follows from the assumption that $s$ has degree exactly $1$ (due to the inapplicability of the Lonely Terminal Reduction), and its unique neighbor $s'$ does not belong to $\Az\cup \Bz$
and does belong to $A_s$ (because $G[A_s\setminus \Az]$ is connected by Lemma~\ref{lem:AsConn}).
Since $(A_s\setminus \Az)\setminus \{s\}$ is nonempty and terminal-free (by Assumption~\ref{ass:one-pair}),
it follows from Lemma~\ref{lem:ex2-red} that it has a decomposition of the form $\{c_1,c_2\}$ or $\{d,c_1,c_2,\ldots,c_r\}$, where $c_i$-s are pairwise nonadjacent and $\Az\cup \{c_i\}$ are excess-1 sets. Suppose $s'=c_i$ for some $i$. Then since $\Az\cup \{c_i\}$ is an excess-1 set, we would have that $\Az\cup \{c_i,s\}$ is an excess-0 set, and hence $(\Az\cup \{c_i,s\},B_t)$ would be an extension of $(\Az\cup \{s\},\Bz\cup \{t\})$ of strictly smaller cost than $(A_s,B_t)$, contradicting the definition of $(A_s,B_t)$. Hence $(A_s\setminus \Az)\setminus \{s\}$ has a decomposition of the form $\{d,c_1,c_2,\ldots,c_r\}$ and $s'=d$.
\end{proof}

We will need one more lemma that resolves corner cases when we apply Lemma~\ref{lem:cool}.

\begin{lemma}\label{lem:no-overlap}
Suppose $s$ satisfies the conditions of Lemma~\ref{lem:cool}, and let $\{d=s',c_1,c_2,\ldots,c_r\}$ be the obtained decomposition of $(A_s\setminus \Az)\setminus \{s\}$. Let $t'$ be the unique neighbor of $t$. Then $s'\neq t'$, and if $t'=c_i$ for some $i\in \{1,2,\ldots,r\}$, then there exists an optimum integral terminal separation $(\Aopt,\Bopt)$ that extends $(\Az,\Bz)$ and has $s\in \Bopt$ and $t\in \Aopt$.
\end{lemma}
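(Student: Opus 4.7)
The first claim, $s' \neq t'$, is immediate: if $s' = t'$, the terminals $s$ and $t$ would share a common neighbor, contradicting the inapplicability of the Common Neighbor Reduction. So the work lies in the conditional statement, where we assume $t' = c_i$ for some $i \in \{1, \ldots, r\}$.

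My plan for the second part is to apply Lemma~\ref{lem:ex2} to the excess-$2$ terminal-free $\Az$-extension $A := A_s \setminus \{s\}$, whose existence and decomposition $A \setminus \Az = \{d, c_1, \ldots, c_r\}$ (with $d = s'$) are granted by Lemma~\ref{lem:cool}. This lemma produces a minimum cost integral terminal separation $(\Aopt, \Bopt)$ extending $(\Az, \Bz)$ whose intersection with $A \setminus \Az$ is constrained to be exactly one of three patterns: empty, a single singleton $\{c_j\}$, or all of $A \setminus \Az$.

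The key observation is then that, because $s$ and $t$ have degree one with unique neighbors $s' = d$ and $t' = c_i$, and because they lie outside $\Az \cup \Bz$ (as $\{s,t\}$ is unresolved), their placement inside $(\Aopt, \Bopt)$ is unconstrained beyond being on opposite sides, and only affects the cost through the two edges $sd$ and $tc_i$. I would then do a small case analysis on where $d$ and $c_i$ sit in $(\Aopt,\Bopt)$. Crucially, the ``bad'' configuration $d \in \Aopt$ and $c_i \in \Bopt$ is precisely ruled out by Lemma~\ref{lem:ex2}: this would require $A^* \cap (A \setminus \Az)$ to contain $d$ but not $c_i$, which matches none of the three allowed patterns. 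In each of the three remaining configurations (both in $\Bopt$; $d \in \Bopt$ with $c_i \in \Aopt$; both in $\Aopt$), either the two placements $s \in \Bopt, t \in \Aopt$ and $s \in \Aopt, t \in \Bopt$ contribute equally to the cost on the edges $sd$ and $tc_i$, or the former is strictly better (this happens exactly in the middle configuration, where it contributes $0$ versus $2$). Therefore we can always fix an optimal separation with $s \in \Bopt$ and $t \in \Aopt$, as required.

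The main (and essentially only) conceptual hurdle is recognising that the forbidden case $d \in \Aopt, c_i \in \Bopt$ is precisely the one where swapping $s$ and $t$ would strictly hurt the cost, so that Lemma~\ref{lem:ex2} exactly fits our needs; everything else is a direct check on the four degree-one edges $sd$ and $tc_i$.
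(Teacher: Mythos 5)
Your argument is correct but follows a genuinely different route than the paper. The paper proceeds by a direct local count: in any optimum with $s\in\Aopt$, $t\in\Bopt$, one may assume (swapping $s$ and $t$ otherwise) that $s'\in\Aopt$ and $c_i\in\Bopt$; the edge counts of Lemma~\ref{lem:ex2-red} then give $c_i$ a strict majority of its incident edges landing in $\Aopt$ (it has $p_i\geq 1$ edges to $s'$ and $p_i+x_i$ to $\Az$, against only $x_i+1$ elsewhere), so moving $c_i$ to $\Aopt$ strictly reduces the cut, contradicting optimality. You instead invoke Lemma~\ref{lem:ex2}: the optimum it furnishes cannot have $\Aopt\cap(A\setminus\Az)$ contain $s'=d$ but miss $c_i$, since such an intersection matches none of the three allowed patterns, and in the three remaining configurations of $d,c_i$ swapping $s$ and $t$ costs nothing on the edges $ss'$ and $tc_i$. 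Both are sound. The paper's proof re-derives the decisive exclusion by a concrete degree count, whereas yours imports it wholesale from the structural lemma (which is itself established via submodularity); yours is a cleaner reuse of the excess-$2$ machinery, the paper's is a bit more self-contained. You are also right to anchor the three-pattern guarantee to the specific optimum produced by Lemma~\ref{lem:ex2} rather than to an arbitrary optimum, and only then apply the swap.
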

\begin{proof}
The fact that $s'\neq t'$ follows from the inapplicability of the Common Neighbour Reduction. Suppose then that $t'=c_i$. From Lemma~\ref{lem:ex2-red} it follows that for some $p_i\geq 1$ and $x_i\geq 0$, there are $p_i$ edges between $s'$ and $c_i$, $p_i+x_i$ edges between $c_i$ and $\Az$, and $x_i+1$ edges between $c_i$ and $V(G)\setminus A_s$; one of these $x_i+1$ edges connects $c_i=t'$ with $t$.

Take any optimum integral separation $(\Aopt,\Bopt)$ extending $(\Az,\Bz)$ and suppose that $s\in \Aopt$ and $t\in \Bopt$. We can further assume that $s'\in \Aopt$ and $t'=c_i\in \Bopt$, because otherwise switching the sides of $s$ and $t$ would result in an integral separation of not larger cost that already fulfills the property we aim for. Recall that $c_i$ has $p_i$ edges to $s'$ (which is assigned to $\Aopt$), $p_i+x_i$ edges to $\Az$, and $x_i+1$ edges to other vertices of the graph. Since $p_i\geq 1$, we see that a strict majority of neighbors of $c_i$ are in $\Aopt$. Hence switching the side of $c_i$ from $\Aopt$ to $\Bopt$ strictly decreases the cost of the separation, a contradiction.
\end{proof}

Lemma~\ref{lem:no-overlap} enables us to perform a reduction step whenever a corner case appears in the analysis of vertices close to $s$ and $t$. We choose not to perform this reduction exhaustively, but rather to execute it on demand when such a case appears during branching.

%\noindent \textbf{Cool Observation:} If $A_s$ is of excess $1$, then $A_s \setminus \{s\}$
%is of excess $2$ unless $A_s = \Az \cup \{s\}$. If the latter case does not happen (this in particular happens if $A_s \setminus B_s \neq \Az$),
%decompose $A_s \setminus \{s\}$ using Lemma~\ref{lem:ex2}, $A_s \setminus \Az = \{s,d,c_1,\ldots,c_r\}$.
%Observe that $s$ is incident to $d$, as otherwise can reduce $A_s$ to single bruzdka and $s$.

\subsubsection{Fixing an edge $ss'$ or $tt'$}

In a few cases, we consider an improved branching set,
when in one branch we fix $\{s',s\}$ to belong to the left part and $t$ to belong to the right part, whereas in the second branch we fix vice versa. More precisely, we consider branches $(A_{ss'\to A},B_{ss'\to A})$ and $(A_{ss'\to B},B_{ss'\to B})$ that are minimum-cost terminal separations extending $(\Az\cup \{s,s'\},\Bz\cup \{t\})$ and $(\Az\cup \{t\},\Bz\cup \{s,s'\})$, computed using Theorem~\ref{thm:ptime}. Observe that there is some optimum solution that extends one of these branches: If in some optimum solution the vertices $s'$ and $s$ were assigned to different sides, then we could modify this solution by swapping the sides of $s$ and $t$. After this modification then solution has no larger cost due to $t$ having degree one, whereas the edge $ss'$ ceases to be cut by the solution. This justifies the correctness of this branching step; we shall henceforth call it {\em{branching on $\{s,t\}$ with fixing the edge $ss'$}}.
Symmetrically, we can define branching on $\{s,t\}$ with fixing the edge $tt'$.

\subsection{Case $(\Delta(A_s),\Delta(B_s))=(0,0)$}

We show that this case in fact never happens. From Lemma~\ref{lem:posi} we infer that $E(A_s \cap B_s, R) = \emptyset$, $\Atr=\Az$, and $\Btr=\Bz$. Hence, $A_s\setminus \Az=B_s\setminus \Bz=A_s\cap B_s$. As we argued earlier, we can assume that $A_s\cap B_s=\{s\}$ or $A_s\cap B_s=\{s,s'\}$ for $s'$ being the only neighbor of $s'$. 

In the first case, since the degree of $s$ is at most one, from $E(A_s \cap B_s, R) = \emptyset$ and $\Delta(A_s)=\Delta(B_s)=0$ we can infer that $s$ is an isolated terminal, which should have been removed by the Lonely Terminal Reduction. This contradicts the assumptions that no reduction rule is applicable.

In the second case, by $E(A_s \cap B_s, R) = \emptyset$ and $\Delta(A_s)=\Delta(B_s)=0$, we infer that $|E(s',\Az)|=|E(s',\Bz)|=x$ for some $x\geq 0$. If $x=0$, then $s'$ should have been reduced by the Pendant Reduction. On the other hand, if $x>0$ then the Boundary Reduction would have been triggered on $s'$. In both cases this is a contradiction.

\begin{figure}[H]
	\centering
	\begin{tikzpicture}

\draw[Bs] plot [smooth,tension=1.3] coordinates {(-2,2)  (0,0.4) (2,2)} to (-2,2);
\draw[As] plot [smooth,tension=1.3] coordinates {(-2,0)  (0,1.6) (2,0)} to (-2,0);

\Azero;
\begin{scope}[shift={(0,2)}] \Bzero; \node at (1,0.17) {$\Bz$}; \end{scope}
\node at (1,-0.2) {$\Az$};
%nodes
\node[ABT,label=30:$\mathbf{s}$] (s) at (0.1,1) {};
\node[T, label=10:$\mathbf{t}$] (t) at (2.4,1) {};
\draw[term] (s) to[term,bend right=10] (t);

\begin{scope}[shift={(7,0)}]

\draw[Bs] plot [smooth,tension=1.9] coordinates {(-2,2)  (0,0.2) (2,2)} to (-2,2);
\draw[As] plot [smooth,tension=1.9] coordinates {(-2,0)  (0,1.8) (2,0)} to (-2,0);

\Azero;
\begin{scope}[shift={(0,2)}] \Bzero; \node at (1,0.17) {$\Bz$}; \end{scope}
\node at (1,-0.2) {$\Az$};
%nodes
\node[ABT,label=30:$\mathbf{s}$] (s) at (0.6,1) {};
\node[AB, label=180:$\mathbf{s'}$] (sp) at (-0.6,1) {};
\node[T, label=10:$\mathbf{t}$] (t) at (2.4,0.9) {};
%edges s--
\draw[term] (s) to[term,bend right=10] (t);
\draw (s) to (sp);
%edges sp--
\draw (sp) to (-0.33,2.15);
\draw (sp) to (-0.55,2.2);
\draw (sp) to (-0.77,2.15);
\node at (-0.23,1.6) {$p$};
%edges a--
\draw (sp) to (-0.33,-0.15);
\draw (sp) to (-0.55,-0.2);
\draw (sp) to (-0.77,-0.15);
\node at (-0.2,0.4) {$p$};

\end{scope}
\end{tikzpicture}
\figspace
	\caption{Case  $(\Delta(A_s),\Delta(B_s))=(0,0)$: a reduction is always immediately applicable. Terminal nodes are squares, paired with zig-zags. Extensions $A_s$ and $B_s$ are highlighted with light blue and red, respectively.}
\label{fig:case-00}
\end{figure}
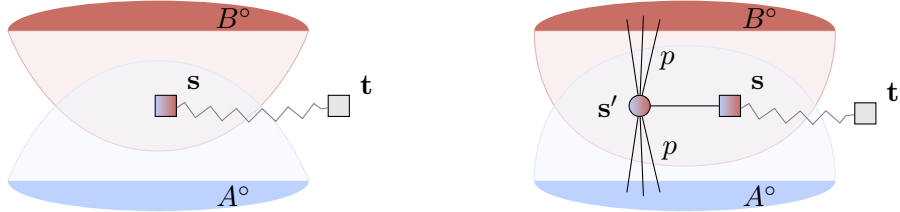

%Argue that either Boundary Reduction, Pendant Reduction, or Lonely Terminal Reduction is applicable.

\subsection{Case $(\Delta(A_s),\Delta(B_s))=(1,0)$}

From Lemma~\ref{lem:posi} we infer that $E(A_s \cap B_s, R) = \emptyset$ and $\Delta(\Atr)+\Delta(\Btr)=1$. We have two subcases: either (a) $(\Delta(\Atr),\Delta(\Btr))=(1,0)$, or (b) $(\Delta(\Atr),\Delta(\Btr))=(0,1)$.

\subsubsection{Subcase (a): $(\Delta(\Atr),\Delta(\Btr))=(1,0)$}

By the equality condition in~\eqref{eq:truncs} we have that $\Btr=\Bz$, while $\Atr\supsetneq \Az$ is a terminal-free set of excess $1$. By the inapplicability of the Excess-1 Reduction, we infer that $\Atr=\Az\cup \{a\}$ for some nonterminal vertex $a$.

%$A_s \setminus B_s \supset \Az$ of excess $1$, 
%$B_s \setminus A_s = \Bz$.

%Excess-1 reduction: $A_s \setminus B_s = \Az \cup \{a\}$.

Set $A_s$ satisfies the conditions of Lemma~\ref{lem:cool}, so we can decompose $(A_s\setminus \Az)\setminus \{s\}$ into $\{d,c_1,c_2,\ldots,c_r\}$, where $d=s'$ is the unique neighbor of $s$. Since $\Delta(\Atr)=1$, we have that $B_s\supsetneq \Bz\cup \{s\}$ and hence by Lemma~\ref{lem:AsConn} it follows that $s'\in B_s$. By Assumption~\ref{ass:small-intersections} we infer that $A_s\cap B_s=\{s,s'\}$ and thus $\{a\}=\Atr\setminus \Az=\{c_1,c_2,\ldots,c_r\}$. Therefore $r=1$ and $c_1=a$.

Since $\Btr=\Bz$, $B_s\setminus \Bz = B_s \cap A_s = \{s,s'\}$.

By Lemma~\ref{lem:ex2-red} we have that $a$ has: $p$ edges to $s'$, $x+1$ edges to $V(G)\setminus (A_s \cup \Bz)$, $p+x$ edges to $\Az$ and no other edges, for some $p\geq 1, x\geq 0$. 
Since $B_s=\Bz\cup \{s',s\}$ is an excess-0 set and $E(s',R)=\emptyset$, we have that $|E(s',\Bz)|=p+|E(s',\Az)|$. In particular $|E(s',\Bz)|>0$, so since Boundary Reductions do not apply to $s'$, we have $E(s',\Az)=\emptyset$ and hence $|E(s',\Bz)|=p$.

\begin{figure}[H]
	\centering
	\begin{tikzpicture}[scale=1.1]

\draw[Bs] plot [smooth,tension=1.4] coordinates {(-2,2.2) (-0.1,1.05) (2,2.2)} to (-2,2.2);
\draw[As] plot [smooth,tension=0.5] coordinates {(-2,0) (-1.5,0.4) (-1.3,1.4) (-0.8,2) (0.8,2) (1.3,1.4) (1.5,0.4) (2,0)} to (-2,0);

\Azero;
\begin{scope}[shift={(0,2.2)}] \Bzero; \node at (1,0.17) {$\Bz$}; \end{scope}
\node at (1,-0.2) {$\Az$};
%nodes
\node[ABT,label=30:$\mathbf{s}$] (s) at (0.6,1.4) {};
\node[AB, label=170:$\mathbf{s'}$] (sp) at (-0.6,1.4) {};
\node[A, label=0:$\mathbf{a}$] (a) at (-0.6,0.6) {};
\node (t) at (2,1.4) {};
%edges s--
\draw[term] (s) to[term] (t);
\draw (s) to (sp);
%edges sp--
\draw (sp) to [bend left=15] (a);
\draw (sp) to [bend right=15] (a);
\node at (-0.37,1) {$p$};

\draw (sp) to (-0.33,2.3);
\draw (sp) to (-0.55,2.3);
\node at (-0.28,1.8) {$p$};
%edges a--
\draw (a) to (-0.11,-0.1);
\draw (a) to (-0.33,-0.15);
\draw (a) to (-0.55,-0.15);
\draw (a) to (-0.77,-0.1);
\node at (-0.44,-0.25) {$p+x$};

\draw (a) to (-1.7,0.5);
\draw (a) to (-1.7,0.65);
\draw (a) to (-1.6,0.8);
\node at (-1.85,0.95) {$x+1$};

\end{tikzpicture}
\figspace
	\caption{Case (1,0)(a): $(\Delta(A_s),\Delta(B_s))=(\Delta(\Atr),\Delta(\Btr))=(1,0)$. Extensions $A_s, B_s$ are highlighted.}
\label{fig:case-10-a}
\end{figure}
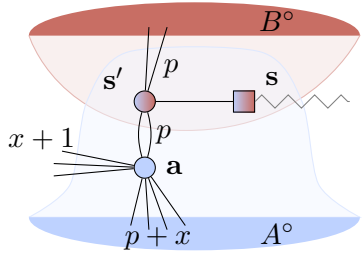

Consider now case $x=0$. Then $a$ has a unique edge $aa'$ with $a'\in R$. Consider first the case when $a'$ is a terminal, so in particular $aa'$ is the only edge incident to $a'$. If $a'=t$, then it is easy to see that $(\Az\cup \{a,t\},\Bz\cup \{s',s\})$ would be an extension of $(\Az,\Bz)$ of the same cost, which contradicts the maximality of $(\Az,\Bz)$. However, if $a'$ belonged to some other pair $\{a',a''\}\in \terms'$, then terminal separation $(A_s\cup \{a'\},B_t\cup \{a''\})$ would have the same cost as $(A_s,B_t)$, which contradicts the maximality of $(A_s,B_t)$. In either case we obtain a contradiction, which means that $a'$ is a nonterminal.

We claim that it is a safe reduction to contract the edge $aa'$; to prove this claim, it suffices to show that there exists an optimum integral terminal separation extending $(\Az,\Bz)$ where $a$ and $a'$ belong to the same side. Take any such integral terminal separation $(\Aopt,\Bopt)$, and assume that $a$ and $a'$ are on opposite sides. Clearly it cannot happen that $a\in \Bopt$ and $a'\in \Aopt$, because then moving $a$ from $\Bopt$ to $\Aopt$ would decrease the cost of the separation. Hence $a\in \Aopt$ and $a'\in \Bopt$. If $s'\in \Bopt$, then moving $a$ from $\Aopt$ to $\Bopt$ would decrease the cost of the separation, so also $s'\in \Aopt$. Construct a new integral separation $(\Aopt_m,\Bopt_m)$ from $(\Aopt,\Bopt)$ by moving $\{a,s'\}$ from $\Aopt$ to $\Bopt$. Then the cost of $(\Aopt_m,\Bopt_m)$ is not larger than that of $(\Aopt,\Bopt)$ (we could have broken the edge $s's$ instead of $aa'$), while both endpoints of $aa'$ belong to $\Aopt_m$.

This reasoning proves the correctness of the following step.
\begin{reductionstep}
Suppose $x=0$ and let $a'$ be the unique neighbor of $a$ in $R$; then $a'$ is a non-terminal. Merge $a$ with $a'$ and restart.
\end{reductionstep}

%Cool Observation: decompose $A_s \setminus \Az = \{d,c_1,\ldots,c_r\}$. $a$ to bruzdka, co wiecej jedyna.
%$A_s \cup B_s = \{d,s\}$.

%Lemma~\ref{lem:ex2-reduced}: $p \geq 1$ edges $da$, $p+x$ edges $a$ to $\Az$, $x+1$ edges $a$ to $V(G) \setminus A_s$.

%Te ostatnie $x+1$ edges nie sa to $\Bz$, bo inaczej Boundary Reduction na $a$.

%$B_s$ jest excess 0, czyli jest co najmniej $p$ krawedzi $d$ do $\Bz$. Czyli nie ma krawedzi $d$ do $\Az$.
%Czyli jest dokladnie $p$ krawedzi z $d$ do $\Bz$.

%If $x=0$, argue that this single edge counted in $x+1$ edges from $a$ is contractable.
%(If leads to a terminal, we catch two terminals.) Thus $x > 0$.

Henceforth we assume that $x>0$. We claim that now branching on the membership of $a$ leads to a good branch. More precisely, we perform the following branching.

\begin{branching}
If $x\geq 1$, recurse into two branches $(A_{a\to A},B_{a\to A})$ and $(A_{a\to B},B_{a\to B})$ that are minimum-cost maximal terminal separations extending $(\Az\cup \{a\},\Bz)$ and $(\Az,\Bz\cup \{a\})$, respectively.
\end{branching}

Of course, $(A_{a\to A},B_{a\to A})$ and $(A_{a\to B},B_{a\to B})$ are computed using the algorithm of Theorem~\ref{thm:ptime} in time $\Oh(k^{\Oh(1)} m)$. We are left with proving that after applying all the immediate reductions in each branch, we arrive at a good branching vector. For $X\in \{A,B\}$, let $t_{a\to X},\nu_{a\to X},k_{a\to X}$ be the changes of the components of the potential in respective branches, as we denote them in branching vectors.

Consider first the branch $(A_{a\to A},B_{a\to A})$. Then $p$ Boundary Reductions are triggered on vertex $s'$ (regardless of whether it is added or not to one of the sets $A_{a\to A},B_{a\to A}$). Hence $k_{a\to A}\geq p$. Moreover, the terminal pair $\{s,t\}$ either is already resolved by $(A_{a\to A},B_{a\to A})$ or gets reduced by the Lonely Terminal Reduction after applying the Boundary Reductions. Hence $t_{a\to A}\geq 1$. Finally, since $(\Az,\Bz)$ was maximal, we have that $\nu_{a\to A}\geq 1$. So the part of the branching vector corresponding to the branch $(A_{a\to A},B_{a\to A})$ is $[1,1,p]$, or better.

Consider now the second branch $(A_{a\to B},B_{a\to B})$. Then at least $|E(a,\Az)|=p+x$ Boundary Reductions are triggered, hence $k_{a\to B}\geq p+x$. Since $p\geq 1$ and $t$ is of degree $1$, $s'\in B_{a\to B}$ and without loss of generality we can assume $s\in B_{a\to B}$ and $t\in A_{a\to B}$. Hence $t_{a\to B}\geq 1$. If actually $t_{a\to A}\geq 2$ or $t_{a\to B}\geq 2$, then we arrive at a branching vector $[1,1,p;2,1,p]$ or better, which is good, so assume that $t_{a\to A}=t_{a\to B}=1$, that is, only the pair $\{s,t\}$ gets resolved.

We now claim that $\Delta(A_{a\to B})\geq 1$ and $\Delta(B_{a\to B})\geq 1$. The latter claim follows from Assumption~\ref{ass:pushing}, since then $B_{a\to B}$ contains only $s$ among the terminals (due to $t_{a\to B}=1$) and $a\in B_{a\to B}\setminus B_s$. For the former claim, suppose for the sake of contradiction that $d(A_{a\to B})=d(\Az)$. Recall that also $d(B_s)=d(\Bz)$, which means that $d(A_{a\to B})+d(B_s)=c(\Az,\Bz)$. From the posimodularity of cuts it now follows that one of the terminal separations $(A_{a\to B}\setminus B_s,B_s)$ and $(A_{a\to B},B_s \setminus A_{a\to B})$ has cost not larger than $(\Az,\Bz)$, while both of them resolve the terminal pair $\{s,t\}$. This is a contradiction with the maximality of $(\Az,\Bz)$. Hence we infer that $\Delta(A_{a\to B})\geq 1$ and $\Delta(B_{a\to B})\geq 1$, and so $\nu_{a\to B}\geq 2$.

Thus, branching into separations $(A_{a\to A},B_{a\to A})$ and $(A_{a\to B},B_{a\to B})$ leads to a branching vector $[1,1,p;1,2,p+x]$ or better. Recalling that $p,x>0$, observe that this branching vector can be not good only if $p=x=1$ and $\Delta(B_{a\to B})=1$. Hence, from now on let us analyze this case.

Since $\Delta(B_{a\to B})=1$, we have that $B_{a\to B}\setminus \{s\}$ is a terminal-free set of excess $2$, and hence we can apply Lemma~\ref{lem:ex2-red} to it: We have that $B_{a\to B}\setminus \{s\}$ has a decomposition of the form $\{c_1,c_2\}$ or $\{d,c_1,\ldots,c_r\}$. Note that $\Bz\cup \{s'\}$ is an excess-1 set, so $s'=c_i$ for some $i$. As $a\in B_{a\to B}$, $a$ is adjacent to $s'$, and $c_i$-s are pairwise non-adjacent, we must have that $a=d$ and we are dealing with a decomposition of the form $\{d,c_1,\ldots,c_r\}$. Observe that $\Bz\cup \{a,s'\}$ is a $\Bz$-extension of excess at least $1+x+1=3$; hence $B_{a\to B}\supsetneq \Bz \cup \{a,s',s\}$, and in particular $r>1$. Hence there exists some vertex $c_j\neq c_i=s'$. By Lemma~\ref{lem:ex2-red} we have that $c_j$ is adjacent both to $\Bz$ and to $a$. Hence, in the branch $(A_{a\to A},B_{a\to A})$ at least one Boundary Reduction is applied to $c_j$, regardless whether $c_j$ is assigned to $A_{a\to A}$, or $B_{a\to A}$, or neither of these sets. We did not include this Boundary Reduction in the previous calculations; this shows that we in fact pursue a branch with a branching vector $[1,1,2;1,2,2]$ or better, which is a good branching vector.

%==================

%We claim that branching on $a$ is good.
%$A$-side: $A$-boundary by one, reduce the terminal, $p$ Boundary Reductions.
%$B$-side: $B$-boundary by one, second terminal by one, $p+x$ Boundary Reductions.
%In total $[1,1,p;1,2,p+x]$. Good unless $p=1$, $x=2$, and in the $B$-side
%case we really only increase boundary by one.

%Let $\Bz_2$ be the set in the second case, after Theorem~\ref{thm:ptime}
%applied. Let $B = \Bz_2 \setminus \{s\}$. Excess-2 $\Bz$ extension.

%Decompose. $u$ to bruzdka. $D = \{a\}$, bo sasiaduje z bruzdka.
%Excess $\{a,u\} \cup \Bz$ za duzy, jest jeszcze co najmniej jedna bruzdka.
%Ale ona daje brakujaca Boundary Reduction.

\subsubsection{Subcase (b): $(\Delta(\Atr),\Delta(\Btr))=(0,1)$}

By the equality condition in~\eqref{eq:truncs} we have that $\Atr=\Az$, while $\Btr\supsetneq \Bz$ is a terminal-free set of excess $1$. By the inapplicability of the Excess-1 Reduction, we infer that $\Btr=\Bz\cup \{b\}$ for some nonterminal vertex $b$. In particular $B_s\supsetneq\Bz\cup \{s\}$, so by Lemma~\ref{lem:AsConn} the unique neighbor $s'$ of $s$ belongs to $B_s$. Since $\Delta(B_s)=0$, we have that $B_s\setminus \{s\}$ is a terminal-free set of excess $1$, so it consists of a single vertex. However, this set already contains $b$. Hence we infer that $b=s'$ is the unique neighbor of $s$, $\Btr=\{s'\}\cup \Bz$, $B_s=\{s,s'\}\cup \Bz$. In particular $s'\notin A_s$, so by Lemma~\ref{lem:AsConn} it follows that $A_s=\Az \cup \{s\}$.

Let $x=|E(s',\Bz)|$. Since $\Delta(B_s)=0$, we also have $x=|E(s',V(G)\setminus B_s)|$. If $x=0$ then $s'$ would be only adjacent to $s$ and thus reducible by the Pendant Reduction. Hence, $x>0$. In particular, we infer that $E(s',\Az)=\emptyset$, since otherwise the Boundary Reduction could be applied to $s'$.

% $A_s \setminus B_s = \Az$, $B_s \setminus A_s \supset \Bz$ of excess $1$.

% $B_s \setminus \Bz$ is not only $s$, because $B_s \setminus A_s \supset \Bz$.
%Thus $B_s \setminus \{s\}$ excess-1, thus $B_s \setminus \{s\} = \{b\} \cup \Bz$.

%$b$ has $x$ edges to $\Bz$, and $x$ edges to $V(G) \setminus B_s$ (due to excess-1). $x \geq 1$, as otherwise Pendant Reduction.

%Thus, no edge from $b$ to $\Az$, as otherwise Boundary Reduction.

%If $b \in A_s$, then $A_s \cap B_s = \{b,s\}$ (we reduced larger $A_s \cap B_s$). Then $A_s = \{b,s\} \cup \Az$. 
%But then $b$ has an edge to $\Az$, as otherwise $A_s' := \{s\} \cup \Az$ better. Contradiction.

%Thus $A_s = \{s\} \cup \Az$. Mamy \emph{glupka}.

Let us now examine two possible branching steps. Firstly, consider just branching into two branches $(A_s,B_t)$ and $(A_t,B_s)$. In both cases, only one terminal pair $\{s,t\}$ gets resolved. In branch $(A_t,B_s)$, when $s$ is assigned to $B$, we pessimistically have no Boundary Reduction and no increase in the cost of the separation. In branch $(A_s,B_t)$, however, when $s$ is assigned to $A$, we have that $\Delta(A_s)=1$ and one Boundary Reduction is triggered on vertex $s'$ due to having both an edge to $s$ and to $\Bz$.

We now investigate the components of the branching vector when branching on $\{s,t\}$ with fixing $ss'$. If in one of the branches at least one more terminal pair gets resolved, then as argued in the beginning of this section we can just pursue the branching step, because it leads to a good branching vector. Hence, assume from now on that in both branches only the pair $\{s,t\}$ gets resolved. Since $A_s=\{s\}\cup \Az$ and $A_{ss'\to A}$ contains only $s$ among the terminals of $\terms'$, by Assumption~\ref{ass:pushing} we have that $\Delta(A_{ss'\to A})\geq 2$. Also, at least one Boundary Reduction is triggered on an edge between $s'$ and $\Bz$. In branch $(A_{ss'\to B},B_{ss'\to B})$, again we pessimistically have no Boundary Reduction and no increase in the cost of the separation.

\begin{figure}[H]
	\centering
	\begin{tikzpicture}

\draw[Bs] plot [smooth,tension=0.6] coordinates {(-2,2) (-1.5,1.6) (-1.1,0.7) (-0.5,0.2) (0.5,0.2) (1.2,0.7) (1.5,1.6) (2,2)} to (-2,2);
\draw[As] plot [smooth,tension=1.3] coordinates {(-2,0) (-0.1,0.95) (2,0)} to (-2,0);

\Azero;
\node at (1,-0.2) {$\Az$};
\begin{scope}[shift={(0,2)}] \Bzero; \node at (1,0.17) {$\Bz$}; \end{scope}

%nodes
\node[ABT,label=180:$\mathbf{s}$] (s) at (0.6,0.5) {};
\node[B, label=0:$\mathbf{s'=b}$] (sp) at (-0.6,1.3) {};
\node (t) at (2,0.6) {};
%edges s--
\draw[term] (s) to[term] (t);
\draw (s) to (sp);
%edges sp--
\draw (sp) to (-0.33,2.1);
\draw (sp) to (-0.55,2.1);
\draw (sp) to (-0.77,2.1);
\node at (-0.45,2.2) {$x$};

\draw (sp) to (-1.7,0.65);
\draw (sp) to (-1.75,0.8);
\draw (sp) to (-1.7,0.99);
\node at (-1.9,0.8) {$x$};

\begin{scope}[shift={(6,0)}]

\Azero;
\node at (1,-0.2) {$\Az$};

\node[T,label=180:$\mathbf{s}$] (s) at (0.6,1.5) {};
\node[V, label=0:$\mathbf{s'}$] (sp) at (-0.6,0.7) {};
\node (t) at (2,1.4) {};
\draw[term] (s) to[term] (t);
\draw (s) to (sp);

\draw (sp) to (-0.3,-0.1);
\draw (sp) to (-0.66,-0.1);
\node at (-0.45,-0.25) {$x$};

\draw (sp) to (-1.7,0.65);
\draw (sp) to (-1.7,0.99);
\node at (-1.9,0.8) {$x$};

\end{scope}

\end{tikzpicture}
\figspace
	\caption{Case (1,0)(b): $(\Delta(\Atr),\Delta(\Btr))=(0,1)$. This gives rise to an \emph{antenna}, which has to be analyzed together with the other terminal.
	The right side shows an antenna with natural side $\Az$; note the definition does not mention $A_s, B_s$, it only relies on the behaviour of extensions containing $s$ or $s'$.}
\label{fig:case-10-b-antenna}
\end{figure}
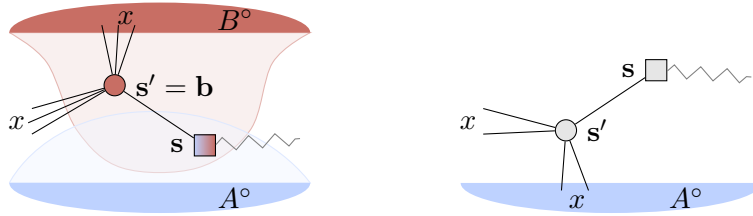

A terminal $s$ with the behaviour as described above will be actually the most problematic case for our branching algorithm. Let us define this setting formally.
\begin{definition}
A terminal $s$ is called an {\em{antenna}} if the following conditions hold:
\begin{itemize}
\item The only neighbor $s'$ of $s$ is a nonterminal, has $x>0$ edges to one of the sets $\Az$ or $\Bz$, no edge to the second one, and $x$ edges to $V(G)\setminus (\Az\cup\Bz\cup \{s\})$. The side $S\in \{\Az,\Bz\}$ to which $s'$ is adjacent is called the {\em{natural}} side of $s$, and the second one is called the {\em{unnatural}} side of $s$.
\item Let $S$ and $\overline{S}$ be the natural and unnatural side of $s$, respectively. Then 
\begin{itemize}
\item For any $X$ with $S\cup \{s,s'\} \subsetneq X \subseteq V(G)\setminus \overline{S}$ that contains only $s$ among the terminals from $\terms'$, it holds that $\Delta(X)\geq 1$.
\item For any $Y$ with $\overline{S}\cup \{s\}\subseteq Y\subseteq V(G)\setminus S$ that contains only $s$ among the terminals from $\terms'$, it holds that $\Delta(Y)\geq 1$. If moreover $Y$ contains at least one more vertex than $\overline{S}\cup \{s\}$, then $\Delta(Y)\geq 2$.
\end{itemize}
\end{itemize}
\end{definition}

The discussion above together with Lemmas~\ref{lem:lb} and Assumption~\ref{ass:pushing} shows that in this case $s$ is an antenna with natural side $\Bz$. Obviously, in the symmetric subcase when $(\Delta(A_s),\Delta(B_s))=(0,1)$ and $(\Delta(\Atr),\Delta(\Btr))=(1,0)$ we obtain that $s$ is an antenna with natural side $\Az$. 

The idea now is {\em{not}} to perform any branching step on an antenna, but rather to branch on the situation around the second terminal $t$, i.e., swap the roles of $t$ and $s$ and restart the analysis.
In other words, we will show that if the analysis of the second terminal $t$ does not reveal that it is an antenna (it conforms to cases $(0,0)$, $(1,0)a$, or $(1,1)$), then a branching step leading to a good branching vector can be found on that side. We will be thus left with the case when both $s$ and $t$ are antennas, which we aim to resolve now by exposing a branching strategy leading to a good branching vector.

Therefore, assume that $s$ and $t$ are both antennas, and let $s'$ and $t'$ be their unique neighbors, respectively. By the inapplicability of the Common Neighbor Reduction, $s' \neq t'$. First, suppose that $s$ and $t$ have different natural sides, say $s$ has natural side $\Az$ and $t$ has natural side $\Bz$. However, then $(\Az\cup \{s,s'\},\Bz\cup \{t,t'\})$ would be a terminal separation that has the same cost as $(\Az,\Bz)$, which contradicts the maximality of $(\Az,\Bz)$.

Hence, assume that $s$ and $t$ have the same natural side. W.l.o.g. suppose that it is $\Bz$. Let $x=|E(s',\Bz)|$ and $y=|E(t',\Bz)|$; recall that $x,y\geq 1$. Consider two possible branching steps: we can branch on $\{s,t\}$ with fixing $ss'$ or with fixing $tt'$. Consider first fixing edge $ss'$, and let $(A_{ss'\to A},B_{ss'\to A})$ and $(A_{ss'\to B},B_{ss'\to B})$ be the branches. By the definition of the antenna we have that $\Delta(A_{ss'\to A})\geq 2$ and $\Delta(A_{ss'\to B})\geq 1$. Also, in branch $(A_{ss'\to A},B_{ss'\to A})$ we have at least $x$ Boundary Reductions triggered on edges incident to $s'$, whereas in branch $(A_{ss'\to B},B_{ss'\to B})$ we have at least one Boundary Reduction triggered edges incident to $t'$. Thus we obtain a branching vector $[1,2,x;1,1,1]$ or better, and a symmetric reasoning for fixing $tt'$ leads to branching vector $[1,1,1;1,2,y]$, or better. Note that one of these vectors is good if $\max(x,y)\geq 3$.
Furthermore, such a branching also leads to a good vector if $s'$ or $t'$ is adjacent to some terminal other than $s$ or $t$, respectively,
  as then in at least one branch a second terminal pair would be resolved.
Hence, if this is the case, we pursue the respective branching step.

\begin{branching}
If $\max(x,y)\geq 3$, or there is a terminal in $\terms'$ different than $s$ or $t$ adjacent to $s'$ or $t'$,
then pursue branching on $\{s,t\}$ with fixing the respective edge $ss'$ or $tt'$.
\end{branching}

From now on we assume that $x,y\leq 2$ and that no other terminal than $s$ and $t$ is adjacent to $s'$ nor $t'$.

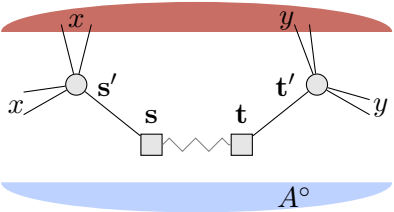
\begin{figure}[H]
	\centering
	\begin{tikzpicture}

\begin{scope}[xscale=1.3]
	\Azero; \node at (1,-0.2) {$\Az$};
	\begin{scope}[shift={(0,2)}]
		\Bzero; 
	\end{scope}
\end{scope}

%nodes
\node[T,label=90:$\mathbf{s}$] (s) at (-0.6,0.5) {};
\node[V, label=0:$\mathbf{s'}$] (sp) at (-1.6,1.3) {};
\node[T,label=90:$\mathbf{t}$] (t) at (0.6,0.5) {};
\node[V, label=180:$\mathbf{t'}$] (tp) at (1.6,1.3) {};

\draw[term] (s) to[term] (t);
\draw (s) to (sp);
\draw (t) to (tp);

\draw (sp) to (-1.4,2.1);
\draw (sp) to (-1.8,2.1);
\node at (-1.6,2.15) {$x$};
\draw (sp) to (-2.3,0.9);
\draw (sp) to (-2.3,1.2);
\node at (-2.4,1) {$x$};

\draw (tp) to (1.3,2.1);
\draw (tp) to (1.5,2.1);
\node at (1.2,2.15) {$y$};
\draw (tp) to (2.3,1.1);
\draw (tp) to (2.3,0.9);
\node at (2.45,1) {$y$};

\end{tikzpicture}
\figspace
	\caption{Case (1,0)(b) on both $s$ and $t$, where furthermore the antennas have the same natural side.}
\label{fig:case-10-b-two-antennas}
\end{figure}

Consider now the case when there is a vertex $a$ such that all edges of $E(s',V(G)\setminus (\Bz\cup \{s\}))$ have $a$ as the endpoint different than $s'$. This encompasses the cases when $x=1$ and when $x=2$ but the considered edges connecting $s'$ with $V(G)\setminus (\Az\cup \{s\})$ have the same second endpoint. We claim that then it is a safe reduction to merge $a$ and $s'$. To prove this claim, we need to show that there exists an optimum integral terminal separation $(\Aopt,\Bopt)$ extending $(\Az,\Bz)$ where $a$ and $s'$ are on the same side. Take any such integral terminal separation $(\Aopt,\Bopt)$, and assume that $a$ and $s'$ are on opposite sides. Clearly it cannot happen that $a\in \Bopt$ and $s'\in \Aopt$, because then moving $s'$ from $\Aopt$ to $\Bopt$ would decrease the cost of the separation. Hence $a\in \Aopt$ and $s'\in \Bopt$. This implies that $s\in\Bopt$, since otherwise we could improve the cost of the separation by moving $s'$ from $\Bopt$ to $\Aopt$. Therefore $t\in \Aopt$. Consider modifying $(\Aopt,\Bopt)$ into $(\Aopt_m,\Bopt_m)$ by 
\begin{itemize}
\item moving $s'$ and $s$ from $\Bopt$ to $\Aopt$, and
\item moving $t$ and $t'$ from $\Aopt$ to $\Bopt$, provided $t'$ was not already included in $\Bopt$.
\end{itemize}
It is easy to see that $(\Aopt_m,\Bopt_m)$ is still an integral terminal separation extending $(\Az,\Bz)$ and its cost is no larger than that of $(\Aopt,\Bopt)$. Hence, it is optimum as well. However, in $(\Aopt_m,\Bopt_m)$ it holds that $s'$ and $a$ are on the same side.

This reasoning and its symmetric version for $t'$ imply the correctness of the following reduction step.
Note that $a$ is not a terminal, as we have already excluded this case in the previous branching step.

\begin{reductionstep}
If $|N(s')\setminus (\Bz\cup \{s\})|=1$, then merge $s'$ with its unique neighbor in $V(G)\setminus (\Bz\cup \{s\})$ and restart. If $|N(t')\setminus (\Bz\cup \{t\})|=1$, then merge $t'$ with its unique neighbor in $V(G)\setminus (\Bz\cup \{t\})$ and restart.
\end{reductionstep}

We are left with the case when $x=y=2$ and both $s'$ and $t'$ have two neighbors outside $\Bz\cup \{s,t\}$; these neighbors will be called {\em{external}}. We claim that then just pursuing branching on $\{s,t\}$ with fixed $ss'$ leads to a good branching vector.

\begin{branching}
If $x=y=2$ and $|N(s')\setminus (\Bz\cup \{s\})|=|N(t')\setminus (\Bz\cup \{t\})|=2$, then pursue branching on $\{s,t\}$ with fixing $ss'$.
\end{branching}

Let the branches be $(A_{ss'\to A},B_{ss'\to A})$ and $(A_{ss'\to B},B_{ss'\to B})$. Recall that $\Delta(A_{ss'\to A})\geq 2$. If actually $\Delta(A_{ss'\to A})\geq 3$, then we would already have a good branching vector $[1,3,2;1,1,1]$ or better, so assume henceforth that $\Delta(A_{ss'\to A})=2$. Consider now set $A'=A_{ss'\to A}\setminus \{s,s'\}$. If $A'$ did not contain both external neighbors of $s'$, then a simple edge count shows that $A'$ would be a terminal-free set of excess at most $-2$ (if it contains no external neighbor of $s'$) or $0$ (if it contains one external neighbor of $s'$). In both cases this is a contradiction with the maximality of $(\Az,\Bz)$. Hence, $A'$ contains both external neighbors of $s'$, and $A'$ is a terminal-free set of excess $2$. By Lemma~\ref{lem:ex2-red}, we can decompose $A'$ as $\{c_1,c_2\}$ or $\{d,c_1,\ldots,c_r\}$. Since $s'$ has two different neighbors in $A'$, at least one of them is $c_i$ for some $i$. However, by Lemma~\ref{lem:ex2-red} each $c_i$ is adjacent to $\Az$, and hence in branch $(A_{ss'\to B},B_{ss'\to B})$ at least one Boundary Reduction is triggered on vertex $c_i$ (regardless whether this vertex is assigned to $A_{ss'\to B}$, or to $B_{ss'\to B}$, or to neither of these sets). In our earlier calculations we did not account for this Boundary Reduction, so in fact we obtain branching vector $[1,2,2;1,1,2]$ or better, which is a good branching vector.

%Branch na $s$ daje $[0,0;1,1]$. Branch
%na usztywnionej krawedzi $bs$ daje $[0,0;2,x] \geq [0,0;2,1]$.
%Zostaje na pozniej, poza analiza dwoch glupkow naprzeciwko siebie.

%Glupki po przeciwnych stronach: zachlannie wchlaniamy.

%Glupki po tej samej stronie: niech $x$ sie nazywa $y$ przy terminalu $t$.
%Brancz na usztywnionej krawedzi przy $s$ to $[1,1;2;x]$, podobnie
%symetrycznie mozna dostac $[1,1;2,y]$. To jest dobrze unless $x,y \leq 2$.

%Zalozmy ze wszystkie $x$ krawedzi od $b$ przy $s$ w swiat ida tak naprawde
%do tego samego $u$. Twierdzimy, ze $bu$ jest sciagalna. Jedyna szansa
%to $u \in A$, $b \in B$. Ale warto przekolorwac $b$ i $s$ na $A$,
%to nic nie kosztuje, a moze tylko pomoc przy $t$.

%Zostaje $x=y=2$ i te konce sa rozne. Poprzedni brancz sie nie uda tylko,
%jak w $A$-side urosnie granica tylko o $2$. Niech $\Az_1$ urosniety zbior,
%$A = \Az_1 \setminus \{b,s\}$. To Excess 2, rozbieramy.
%Jak $b$ ma krawedz do bruzdki, to w $B$-side jest dodatkowa redukcja, zwyciestwo.
%Jesli nie, to ma obie krawdzie do $D$, czyli do tego samego wierzcholka,
%poprzedni przypadek.

\subsection{Case $(\Delta(A_s),\Delta(B_s))=(1,1)$}

By Lemma~\ref{lem:posi}, we have three non-symmetric subcases:
\begin{itemize}
\item[(a)] $|E(A_s \cap B_s, R)|=1$, $\Atr=\Az$, $\Btr=\Bz$;
\item[(b)] $E(A_s \cap B_s, R) = \emptyset$, $\Delta(\Atr)=\Delta(\Btr)=1$;
\item[(c)] $E(A_s \cap B_s, R) = \emptyset$, $\Delta(\Atr)=0$, $\Delta(\Btr)=2$.
\end{itemize}
The case when $E(A_s \cap B_s, R) = \emptyset$, $\Delta(\Atr)=2$, $\Delta(\Btr)=0$, is symmetric to case (c).

The algorithm proceeds as follows: It investigates every terminal pair $\{s,t\}\in \terms'$, and investigates the case given by this terminal pair when considered as $\{s,t\}$ (i.e., looking from the side of $s$), and when considered as $\{t,s\}$ (i.e., looking from the side of $t$). If in any of these checks, for any terminal pair, case $(0,0)$ or $(1,0)(a)$ is discovered, the algorithm pursues the respective Reduction Step or Branching Step, as described in the previous sections. Otherwise, we can assume the following:
\begin{assumption}\label{ass:cases-left}
Every terminal of $\terms'$ is either an antenna, or investigating the basic branch of the respective terminal pair from its side yields case $(1,1)$ (has {\em{type (1,1)}}).
\end{assumption}
In the following we will use this property heavily in order to be able to reason about the total increase in the cost of the separation, also on the side of the second terminal from the pair we are currently investigating. 

\subsubsection{Case (a): $|E(A_s \cap B_s, R)|=1$, $\Atr=\Az$, $\Btr=\Bz$}

%$E(A_s \cap B_s, R) \neq \emptyset$, $A_s \setminus B_s = \Az$, 
%$B_s \setminus A_s = \Bz$.

Let $Z=A_s\cap B_s=A_s\setminus \Az=B_s\setminus \Bz$. By Assumption~\ref{ass:small-intersections}, we have that $Z=\{s\}$ or $Z=\{s,s'\}$, where $s'$ is the unique neighbor of $s$. 

Suppose first that $Z=\{s,s'\}$. Let $x=|E(s',\Az)|$ and $y=|E(s',\Bz)|$. Since $|E(s',R)|=|E(Z,R)|=1$ and both $A_s$ and $B_s$ are excess-1 sets, we infer that $x=y$. Consequently it must hold that $x=y=0$, because otherwise the Boundary Reduction would apply to $s'$. Thus, $s'$ is a vertex of degree $2$ with one neighbor $r$ in $R$ and the second being $s$. Then the Pendant Reduction would apply to $X = \{s'\}$, a contradiction.

Therefore, we have that $Z=\{s\}$. Let $s'$ be the unique neighbor of $s$. We pursue branching on the pair $\{s,t\}$ with fixing edge $ss'$, i.e., branch into two subcases $(A_{ss'\to A},B_{ss'\to A})$ and $(A_{ss'\to B},B_{ss'\to B})$ that are minimum-cost terminal separations extending $(\Az\cup \{s,s'\},\Bz\cup \{t\})$ and $(\Az\cup \{t\},\Bz\cup \{s,s'\})$, respectively.

\begin{branching}
Pursue branching on $\{s,t\}$ with fixing $ss'$.
\end{branching}

Obviously, as explained in the beginning of this section, if any of the resulting branches resolves one more terminal pair, then the branching vector is good. Therefore, suppose that in both branches only the pair $\{s,t\}$ gets resolved. By Assumption~\ref{ass:pushing}, we have that $\Delta(A_{ss'\to A})\geq 2$ and $\Delta(B_{ss'\to B})\geq 2$. By Assumption~\ref{ass:cases-left}, terminal $t$ is either of type $(1,1)$ or is an antenna. In the former case, by Lemma~\ref{lem:lb} we have that $\Delta(B_{ss'\to A})\geq 1$ and $\Delta(A_{ss'\to B})\geq 1$. Hence we arrive at branching vector $[1,3,0;1,3,0]$ or better, which is a good branching vector. In the latter case, by the definition of an antenna we have that $\Delta(B_{ss'\to A})\geq 1$ or $\Delta(A_{ss'\to B})\geq 1$, depending on whether $\Az$ or $\Bz$ is natural for $s$. Also, in the same branch where respective inequality holds, one Boundary Reduction gets applied on the unique neighbor of $t$. Thus we arrive at branching vector $[1,2,0;1,3,1]$, or $[1,3,1;1,2,0]$ or better (depending on which side is natural for $t$), which is a good branching vector.

%No edges between $A_s \cap B_s$ and $\Az \cup \Bz$, as otherwise the same
%number in both sides, and Boundary Reduction.

%Usztywnic edge incident to $s$. Dobry branch $[1,3,0;1,3,0]$ gdy
%z drugiej strony $(1,1)$ i dobry branch
%$[1,2,0;1,3,1]$ gdy z drugiej strony nieusztywniony glupek.

\subsubsection{Case (b): $E(A_s \cap B_s, R) = \emptyset$, $\Delta(\Atr)=\Delta(\Btr)=1$}

%$E(A_s \cap B_s, R) = \emptyset$, $A_s \setminus B_s \supset \Az$ of excess $1$,
%$B_s \setminus A_s \supset \Bz$ of excess $1$.

Since $\Atr$ and $\Btr$ are terminal-free sets of excess $1$, by the inapplicability of the Excess-1 Reduction we infer that $\Atr = \Az \cup \{a\}$ and $\Btr = \Bz \cup \{b\}$ for some distinct nonterminal vertices $a,b$. In particular, both $A_s\setminus \Az$ and $B_s\setminus \Bz$ contain at least one more vertex than $s$. Hence, by Lemma~\ref{lem:AsConn} we infer that if $s'$ is the unique neighbor of $s$, then $s'\in A_s$ and $s'\in B_s$. From Assumption~\ref{ass:small-intersections} it follows that $A_s\cap B_s=\{s,s'\}$. Hence $A_s \setminus \Az=\{a,s,s'\}$ and $B_s \setminus \Bz=\{b,s,s'\}$.

Since $\Delta(A_s)=1$ and $A_s\neq \Az\cup \{s\}$, we can apply Lemma~\ref{lem:cool} to it and infer that $A_s\setminus \{s\}$ has a decomposition $\{d,c_1\}$ with $s'=d$ and $c_1=a$. Consequently, by Lemma~\ref{lem:ex2-red} we infer that for some $p\geq 1$ and $x\geq 0$, we have $|E(s',a)|=p$, $|E(a,\Az)|=p+x$, and $|E(a,V(G)\setminus A_s)|=x+1$. Also, there is no edge between $a$ and $\Bz$, because then the Boundary Reduction would be applicable to $a$. A symmetric reasoning shows that for some $q\geq 1$ and $y\geq 0$, we have $|E(s',b)|=q$, $|E(b,\Bz)|=q+y$, $|E(b,V(G)\setminus B_s)|=y+1$, and there is no edge between $b$ and $\Az$.

Vertex $s'$ cannot be connected both to $\Az$ and to $\Bz$, because then the Boundary Reduction would be applicable to it. Hence, w.l.o.g. assume that $E(s',\Az)=\emptyset$.
Let $q'=|E(s',\Bz)|$. Since $E(A_s \cap B_s, R) = \emptyset$ and both $A_s$ and $\Az\cup \{a\}$ are sets of excess $1$, we infer that $p=q+q'$.

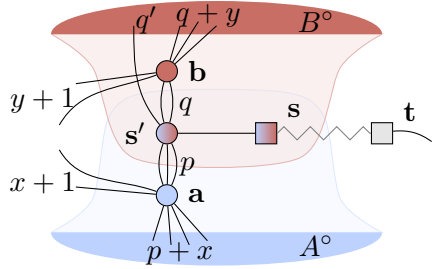
\begin{figure}[H]
	\centering
	\begin{tikzpicture}[scale=1.1]

\draw[As] plot [smooth,tension=0.5] coordinates {(-2,0) (-1.5,0.4) (-1.3,1.4) (-0.8,1.7) (0.8,1.7) (1.3,1.4) (1.5,0.4) (2,0)} to (-2,0);
\draw[Bs] plot [smooth,tension=0.5] coordinates {(-2,2.4) (-1.5,2) (-1,0.9) (1,0.9) (1.5,2) (2,2.4)} to (-2,2.4);

\Azero;
\begin{scope}[shift={(0,2.4)}] \Bzero; \node at (1.2,0.15) {$\Bz$}; \end{scope}
\node at (1.2,-0.16) {$\Az$};
%nodes
\node[ABT,label=30:$\mathbf{s}$] (s) at (0.6,1.2) {};
\node[AB, label=180:$\mathbf{s'}$] (sp) at (-0.6,1.2) {};
\node[A, label=0:$\mathbf{a}$] (a) at (-0.6,0.45) {};
\node[B, label=0:$\mathbf{b}$] (b) at (-0.6,1.95) {};
\node[T, label=10:$\mathbf{t}$] (t) at (2,1.2) {};
%edges s--
\draw[term] (s) to[term] (t);
\draw (s) to (sp);
\draw (t) to[bend left=20] (2.6,1.1) {};
%edges sp--
\draw (sp) to [bend left=2] (a);
\draw (sp) to [bend right=20] (a);
\draw (sp) to [bend left=25] (a);
\node at (-0.35,0.8) {$p$};

\draw (sp) to[bend left=15] (b);
\draw (sp) to[bend right=15] (b);
\node at (-0.37,1.52) {$q$};
\draw (sp) to[out=130,in=-90] (-1,2.5);
\node at (-0.85,2.57) {$q'$};

%edges a--
\draw (a) to (-0.11,-0.1);
\draw (a) to (-0.33,-0.15);
\draw (a) to (-0.55,-0.15);
\draw (a) to (-0.77,-0.1);
\node at (-0.44,-0.25) {$p+x$};

\draw (a) to (-1.7,0.55);
\draw (a) to[in=-60,out=160] (-1.9,1);
\node at (-2.1,0.6) {$x+1$};
%edges b--
\draw (b) to (-0.44,2.5);
\draw (b) to (-0.22,2.55);
\draw (b) to (-0.00,2.5);
\node at (-0.1,2.63) {$q+y$};

\draw (b) to (-1.7,1.8);
\draw (b) to[in=60,out=-160] (-1.9,1.3);
\node at (-2.1,1.6) {$y+1$};
\end{tikzpicture}
\figspace
	\caption{Case (1,1)(b): $\Delta(B_s)=\Delta(A_s)=\Delta(\Atr)=\Delta(\Btr)=1$. Note the edges counted in $x+1$ and $y+1$ may both include a common edge between $a$ and $b$.}
\label{fig:case-11-b}
\end{figure}

For the sake of further argumentation, we now resolve the case when $t'=a$ or $t'=b$, where $t'$ is the unique neighbor of $t$. Then, Lemma~\ref{lem:no-overlap} and its symmetric variant imply that the pair $\{s,t\}$ can be assigned greedily. More precisely, the following reduction step is correct.

\begin{reductionstep}\label{red:(1,1)b-corner}
If $t'=a$ then assign $s$ to the $B$-side and $t$ to the $A$-side, i.e., proceed with instance $(A_t,B_s)$. If $t'=b$ then assign $s$ to the $A$-side and $t$ to the $B$-side, i.e., proceed with instance $(A_s,B_t)$.
\end{reductionstep}

%Excess-1 reduction: $A_s \setminus B_s = \Az \cup \{a\}$, $B_s \setminus A_s = \Az \cup \{b\}$.

%Uzywamy Cool Observation zarowno do $A_s$ jak i $B_s$. Otrzymujemy, w szczegolnosci,
%ze sasiad $s$, nazwijmy go $u$, jest zarowno w $A_s$ jak i $B_s$ (pelni role $D$
%w $A_s \setminus \{s\}$ i w $B_s \setminus \{s\}$). Czyli $A_s \cap B_s = \{u,s\}$.

%No edge from $u$ to $\Az$ or from $u$ to $\Bz$. By symmetry, let no edge from $u$
%to $\Az$.

Henceforth we assume that $t'\neq a$ and $t'\neq b$. Since $A_s=\{a,s,s'\}$ and $B_s=\{b,s,s'\}$, by the inpplicability of Common Neighbor Reduction we infer that $t'\notin A_s$ and $t'\notin B_s$.

The crucial observation now is that we can fix both edges $ss'$ and $tt'$ at the same time.

\begin{lemma}\label{lem:double-fixing}
There exists an optimum integral terminal separation $(\Aopt,\Bopt)$ where either $\{s,s'\}\subseteq \Aopt$ and $\{t,t'\}\subseteq \Bopt$, or $\{s,s'\}\subseteq \Bopt$ and $\{t,t'\}\subseteq \Aopt$.
\end{lemma}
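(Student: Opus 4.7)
My plan is to take an arbitrary minimum-cost integral terminal separation $(\Aopt,\Bopt)$ extending $(\Az,\Bz)$ and modify it by two successive uncrossing moves to align $\{s,s'\}$ and $\{t,t'\}$ on opposite sides.

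Stage~1 uses the $A_s$-trick: assuming $s\in\Aopt$ (else start from an optimum with $s\in\Bopt$ and use the symmetric $B_s$-trick), replace $(\Aopt,\Bopt)$ with $(\Aopt\cup A_s,\Bopt\setminus A_s)$. Since $A_s$ contains $s$ but no terminal of any other unresolved pair, this remains an integral terminal separation extending $(\Az,\Bz)$. By submodularity, $d(\Aopt\cup A_s)+d(\Aopt\cap A_s)\leq d(\Aopt)+d(A_s)$, and since $\Aopt\cap A_s$ is an $\Az$-extension containing $s$ but no other unresolved terminal, the minimality guarantee for $A_s$ from Lemma~\ref{lem:pushing} gives $d(\Aopt\cap A_s)\geq d(A_s)$, so $d(\Aopt\cup A_s)\leq d(\Aopt)$ and the result remains optimum. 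Now $\{s,s'\}\subseteq A_s\subseteq\Aopt$.

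For Stage~2, let $\hat A$ and $\hat B$ denote the maximum minimum-cost $\Az$- and $\Bz$-extensions containing $t$ and no other unresolved terminal (existence and uniqueness by the symmetric version of Lemma~\ref{lem:pushing}). By Assumption~\ref{ass:cases-left}, $t$ is either of type $(1,1)$ or an antenna. If $t$ is of type $(1,1)$, a symmetric version of the analysis of Case~(1,1)(b) gives $\{t,t'\}\subseteq\hat B$; if $t$ is an antenna with natural side $\Bz$, a direct edge count yields $d(\Bz\cup\{t,t'\})=d(\Bz)$, so again $\{t,t'\}\subseteq\hat B$; if $t$ is an antenna with natural side $\Az$, symmetrically $d(\Az\cup\{t,t'\})=d(\Az)$, so $\{t,t'\}\subseteq\hat A$. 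In the first two situations I would aim for option~1 of the lemma (starting from an optimum with $s\in\Aopt$); in the third, I would aim for option~2 (starting from an optimum with $s\in\Bopt$, $t\in\Aopt$, whose existence follows because the natural placement $t,t'\in\Aopt$ strictly dominates, in terms of the $t$-side contribution, any alternative placement of $\{t,t'\}$).

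The main technical obstacle is ensuring that the Stage~2 move does not undo Stage~1 by dragging $s'$ to the wrong side. I would resolve this by a posimodular uncrossing: in the option~1 case, define $\tilde B:=\hat B\setminus A_s$. Since $t,t'\notin A_s$ (using $t'\neq a,b,s'$, established after Reduction Step~\ref{red:(1,1)b-corner}), $\tilde B$ still contains $\Bz\cup\{t,t'\}$. Posimodularity gives $d(A_s\setminus\hat B)+d(\hat B\setminus A_s)\leq d(A_s)+d(\hat B)$, and since $A_s\setminus\hat B$ remains an $\Az$-extension containing $s$ and no other unresolved terminal, the minimality of $A_s$ forces $d(A_s\setminus\hat B)\geq d(A_s)$, hence $d(\tilde B)\leq d(\hat B)$. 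Thus $\tilde B$ is still a minimum-cost $\Bz$-extension containing $\{t,t'\}$, and disjoint from $\{s,s'\}\subseteq A_s$ by construction. A submodularity argument symmetric to Stage~1 then shows that replacing $(\Aopt,\Bopt)$ with $(\Aopt\setminus\tilde B,\Bopt\cup\tilde B)$ preserves optimality, yielding the desired optimum with $\{s,s'\}\subseteq\Aopt$ and $\{t,t'\}\subseteq\Bopt$. The symmetric uncrossing $\tilde A:=\hat A\setminus B_s$ handles option~2.
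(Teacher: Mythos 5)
Your Stage~1 is essentially the same submodular uncrossing that the paper uses, though you cite the wrong tool: the bound $d(\Aopt\cap A_s)\geq d(A_s)$ does not follow from Lemma~\ref{lem:pushing} (which only controls sets with a vertex \emph{outside} $A_s^{\max}$, whereas $\Aopt\cap A_s\subseteq A_s$); the correct justification is Lemma~\ref{lem:lb}, or, as the paper does, simply observing that $(\Aopt\cap A_s,B_t)$ extends $(\Az\cup\{s\},\Bz\cup\{t\})$ and invoking minimality of $(A_s,B_t)$.

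The real gap is in Stage~2, in the dispatch that decides which option of the lemma to aim for. You claim that when $t$ is of type $(1,1)$ one has $\{t,t'\}\subseteq\hat B$ ``by a symmetric version of the analysis of Case~(1,1)(b).'' But Assumption~\ref{ass:cases-left} only tells you that $t$ is an antenna or of type $(1,1)$; it says nothing about which subcase of $(1,1)$ holds when $\{s,t\}$ is examined from $t$'s side. If $t$ conforms to subcase $(1,1)$(a), the paper's own analysis (the paragraph establishing $Z=\{s\}$) shows that $A_t\cap B_t=\{t\}$ and $\Btr_t=\Bz$, so $\hat B=B_t^{\max}=\Bz\cup\{t\}$, and $t'\notin\hat B$. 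Your Stage~2 then does not apply. Likewise, in the ``$t$ is an antenna with natural side $\Az$'' branch, the claim ``whose existence follows because the natural placement $t,t'\in\Aopt$ strictly dominates\ldots'' is precisely the assertion the lemma is trying to establish, not a prior fact you may lean on.

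The paper avoids all of this with one initial observation that your proof is missing: because $s$ and $t$ are degree-one terminals, swapping their sides in $(\Aopt,\Bopt)$ changes the cut only on the edges $ss'$ and $tt'$, flipping both. If both were cut, swapping would strictly improve; if neither is cut, the lemma's conclusion already holds. So if the conclusion fails, exactly one of $ss',tt'$ is cut, and after a harmless swap one may assume $tt'$ is \emph{not} cut, i.e.\ $s\in\Aopt$ and $s',t,t'\in\Bopt$. Then the single uncrossing $A:=\Aopt\cup A_s$ suffices: it is an optimum integral separation by the same submodularity argument you use, it places $s,s'\in A_s\subseteq A$, and because $t'\notin A_s$ (and $t\notin A_s$) both $t$ and $t'$ remain on the $B$-side. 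Your two-stage scheme is an attempt to reconstruct this without the swap, but without that preliminary normalization the Stage~2 case analysis cannot be closed.
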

\begin{proof}
Let us take any optimum integral terminal separation $(\Aopt,\Bopt)$. If the condition of the lemma is not satisfied, then swapping the sides of $s$ and $t$ does not change the cost of the separation. Let us then assume that the edge $tt'$ is not cut in the solution. Hence, without loss of generality we assume that $s',t,t'\in \Bopt$ and $s\in \Aopt$; the rest of the reasoning will be independent of the choice we made earlier that there are no edges between $s$ and $\Az$, so we are indeed not losing generality here.

Consider $A=\Aopt\cup A_s$. By the submodularity of cuts we have that
$$d(\Aopt\cap A_s)+d(A)\leq d(\Aopt)+d(A_s).$$
However, we have that $d(A_s)\leq d(\Aopt\cap A_s)$ because otherwise we would be able to replace $A_s$ with $\Aopt\cap A_s$ in separation $(A_s,B_t)$ thus decreasing its cost while preserving the fact that it extends $(\Az\cup \{s\},\Bz\cup \{t\})$. Hence, we infer that $d(\Aopt)\geq d(A)$. Observe that since $A_s\setminus (\Az\cup \{s\})$ is terminal-free, then $(A,V(G)\setminus A)$ is also an integral terminal separation, and its cost is $d(A)\leq d(\Aopt)=c(\Aopt,\Bopt)$. Hence, $(A,V(G)\setminus A)$ is also an optimum integral terminal separation. Since $s'\in A_s$ and $t'\notin A_s$, we infer that edges $ss'$ and $tt'$ are not cut in $(A,V(G)\setminus A)$, as was requested.
\end{proof}

Lemma~\ref{lem:double-fixing} justifies the correctness of branching on $\{s,t\}$ with both $ss'$ and $tt'$ fixed. More precisely, we branch into separations $(A_{ss'\to A},B_{ss'\to A})$ and $(A_{ss'\to B},B_{ss'\to B})$ that are minimum-cost terminal separations extending $(\Az\cup \{s,s'\},\Bz\cup \{t,t\})$ and $(\Az\cup \{t,t'\},\Bz\cup \{s,s'\})$, computed using Theorem~\ref{thm:ptime}. 

\begin{branching}
Pursue branching on $\{s,t\}$ with fixing both $ss'$ and $tt'$.
\end{branching}

As we argued at the beginning of this section, if in any of these branches at least one more terminal pair gets resolved, then we arrive at a good branching vector; hence assume that this is not the case.

By Lemma~\ref{lem:lb} we have that $\Delta(A_{ss'\to A})\geq 1$ and $\Delta(B_{ss'\to B})\geq 1$. Also, in both branches the Boundary Reduction will be applied at least $p$ times: either $p$ times on $a$ (provided $s'$ is assigned to the $B$-side), or $q$ times on $b$ and $q'$ times on edges between $s'$ and $\Bz$ (provided $s'$ is assigned to the $A$-side).

We now calculate the branching vectors when performing this branching.

Suppose first that $t$ is an antenna, then by the definition of the antenna we have that $\Delta(A_{ss'\to B})\geq 2$ or $\Delta(B_{ss'\to A})\geq 2$, depending whether $\Bz$ or $\Az$ is the natural side of $t$. Moreover, in the same branch, one Boundary Reduction is triggered on an edge between $t'$ and the natural side of $t$ in the branch. By $t'\notin A_s\cup B_s$, we know that this Boundary Reduction was not accounted for in the previous calculations. Hence, we obtain branching vector $[1,1,p;1,3,p+1]$, or $[1,3,p+1;1,1,p]$, or better, depending on the natural side of $t$. Since $p\geq 1$, these branching vectors are good.

Suppose now that $t$ is of type $(1,1)$, and moreover that investigation of its situation also leads to the same case (b). Then we have that $\Delta(B_{ss'\to A})\geq 1$ and $\Delta(A_{ss'\to B})\geq 1$ by Lemma~\ref{lem:lb}. Moreover, in both of the branches, at least one Boundary Reduction is triggered that reduces some edge incident to $t'$. It is easy to see that the applicability of this Boundary Reduction could not be spoiled by the application of the $p$ Boundary Reductions on the side of $s$, because $t'\notin A_s\cup B_s$ and $s'$ and $t'$ are assigned to different sides. Thus, we arrive at branching vector $[1,2,p+1;1,2,p+1]$, which is $[1,2,2;1,2,2]$ or better, and hence good.

Finally, we are left with the case when $t$ is of type $(1,1)$, and the investigation of its situation also leads to case (c). Similarly as in the previous paragraph, we have that $\Delta(B_{ss'\to A})\geq 1$ and $\Delta(A_{ss'\to B})\geq 1$. Moreover, as we shall see in the next section, in at least one branch, one additional Boundary Reduction will be triggered that will reduce an edge incident to $t'$. Moreover, the applicability of this Boundary Reduction will not be spoiled by the application of the previous $p$ Boundary Reductions on the side of $s$, for the same reason as in the previous paragraph; that is, $t'\notin A_s\cup B_s$ and $s'$ and $t'$ are assigned to different sides. Hence we arrive at branching vector $[1,2,p+1;1,2,p]$, or $[1,2,p;1,2,p+1]$, or better. All these vectors are good for $p>0$.

\subsubsection{Case (c): $E(A_s \cap B_s, R) = \emptyset$, $\Atr=\Az$, $\Delta(\Btr)=2$}

%$E(A_s \cap B_s, R) = \emptyset$, $A_s \setminus B_s = \Az$, $B_s \setminus A_s \supset \Bz$ of excess $2$.

Since $\Delta(B_s)=1$, we have that $B_s\setminus \{s\}$ is a terminal-free extension of $\Bz$ of excess $2$ and we can apply Lemma~\ref{lem:cool} to decompose it as $\{d,c_1,c_2,\ldots,c_r\}$, where $d=s'$ is the unique neighbor of $s$. Let $p_i=|E(c_i,s')|$, for $i=1,2,\ldots,r$. Recalling Lemma~\ref{lem:ex2-Bside}, let $\sigma=|E(s',\Bz)|+\sum_{i=1}^r p_i=|E(s',\{c_1,\ldots,c_r\}\cup \Bz)|$ be the number of Boundary Reductions that are immediately triggered within $B_s\setminus \{s\}$ in any branch when $s'$ is assigned to the $A$-side. By Lemma~\ref{lem:ex2-Bside} we have that $\sigma>0$. This justifies the claim that was left in our analysis of Case $(1,1)b$, where we argued for the applicability of one additional Boundary Reduction.

Before we proceed, let us exclude the corner case when $t'=c_i$ for some $i\in \{1,2,\ldots,r\}$, where $t'$ is the unique neighbor of $t$. Lemma~\ref{lem:no-overlap} justifies the correctness of the following reduction step.

\begin{reductionstep}
If $t'=c_i$ for some $i\in \{1,2,\ldots,r\}$, then assign $s$ to the $A$-side and $t$ to the $B$-side, i.e., proceed with instance $(A_s,B_t)$.
\end{reductionstep}

Since $t'\neq s'$ by the inapplicability of the Common Neighbor Reduction, henceforth we can assume that $t'\notin B_s$. 

Since $\Atr=\Az$, by Assumption~\ref{ass:small-intersections} we have two cases: either $A_s \setminus \Az=\{s,s'\}$ or $A_s \setminus \Az=\{s\}$.

\paragraph*{Subcase (c.i): $A_s\setminus \Az=\{s,s'\}$.} 

Let $p=|E(s',\Az)|$. Since $E(A_s \cap B_s, R) = \emptyset$ and $A_s$ has excess $1$, we infer that there are $p+1$ edges from $s'$ to $\Btr=\{c_1,c_2,\ldots,c_r\}\cup \Bz$, and hence $\sigma=p+1$. Observe that $p\geq 1$, because if $p=0$ the $s'$ would be adjacent only to $s$ and to a vertex in $\Btr$, and hence the Pendant Reduction would be applicable to $\{s'\}$. Hence in this case $\sigma\geq 2$.

\begin{figure}[H]
	\centering
	\begin{tikzpicture}

\draw[Bs] plot [smooth,tension=0.6] coordinates {(-2,2) (-1.5,1.6) (-1.1,0.7) (-0.5,0.2) (0.5,0.2) (1.2,0.7) (1.5,1.6) (2,2)} to (-2,2);
\draw[As] plot [smooth,tension=1.3] coordinates {(-2,0) (-0.1,0.95) (2,0)} to (-2,0);

\Azero;
\node at (1,-0.2) {$\Az$};
\begin{scope}[shift={(0,2)}] \Bzero; \end{scope}

%nodes
\node[ABT,label=30:$\mathbf{s}$] (s) at (0.6,0.5) {};
\node[AB, label=180:$\mathbf{s'}$] (sp) at (-0.6,0.5) {};
\node[B, label={right:$\mathbf{c_1}$}] (c1) at (-0.8,1.5) {};
%\node[B, label={[label distance=-3]left:$\mathbf{c_2}$}] (c2) at (-0.1,1.5) {};
\node[B, label=0:$\mathbf{c_2}$] (c3) at (0.5,1.5) {};
\node (t) at (2,0.6) {};
%edges s--
\draw[term] (s) to[term] (t);
\draw (s) to (sp);
%edges sp--
\draw (sp) to[bend right=2] (c1);
\draw (sp) to[bend left=15] (c1);
\draw (sp) to[bend right=18] (c1);
%\draw (sp) to (c2);
\draw (sp) to (c3);
%\draw (sp) to[out=130,in=-80] (-1.5,2.1);

\draw (sp) to (-0.33,-0.1);
\draw (sp) to (-0.55,-0.1);
\draw (sp) to (-0.77,-0.1);
\node at (-0.55,-0.25) {$p$};

%edges ci--
\draw (c1) to (-1.0,2.1);
\draw (c1) to (-0.8,2.1);
\draw (c1) to (-0.6,2.1);
\draw (c1) to[bend left] (-2,1);

%\draw (c2) to (-0.1,2.1);
%\draw (c2) to[out=-30,in=180] (2,1);

\draw (c3) to (0.6,2.1);
\draw (c3) to (0.8,2.1);

\draw (c3) to[out=-40,in=180] (1.9,1.2);
\draw (c3) to[out=-30,in=180] (1.9,1.3);
%\draw (sp) to (-1.7,0.65);
%\draw (sp) to (-1.75,0.8);
%\draw (sp) to (-1.7,0.99);
%\node at (-1.9,0.8) {$x$};

\end{tikzpicture}
\figspace
	\caption{Case (1,1)(c.i): $\Atr=\Az, \Delta(\Btr)=1, A_s\cap B_s = \{s,s'\}$. A careful reader might notice that since the excess of $B_s$ is 1, an edge count implies $r=2$.}
\label{fig:case-11-ci}
\end{figure}
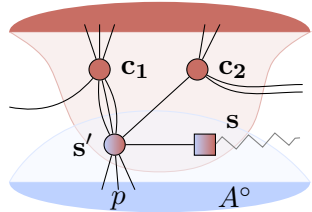

Having this structure, it is natural to make the following branching.

\begin{branching}
If $A_s\setminus \Az=\{s,s'\}$, then pursue branching on $\{s,t\}$ with fixing $ss'$.
\end{branching}

Let $(A_{ss'\to A},B_{ss'\to A})$ and $(A_{ss'\to B},B_{ss'\to B})$ be the respective branches, i.e., minimum-cost maximal terminal separations extending $(\Az\cup \{s,s'\},\Bz\cup \{t\})$ and $(\Az\cup \{t\},\Bz\cup \{s,s'\})$, respectively. Of course, if in any of these branches one more terminal pair got resolved, then we have a good branching vector. Assume therefore that this is not the case. In branch $(A_{ss'\to A},B_{ss'\to A})$ from Lemma~\ref{lem:lb} we have that $\Delta(\Az\cup \{s,s'\})\geq 1$ and $\sigma\geq 2$ Boundary Reductions are triggered within $B_s\setminus \{s\}$. In branch $(A_{ss'\to B},B_{ss'\to B})$ we again have that $\Delta(\Bz\cup \{s,s'\})\geq 1$, and $p\geq 1$ Boundary Reduction are triggered for edges between $s'$ and $\Az$.

We now calculate the obtained branching vector depending on whether $t$ is an antenna or is of type $(1,1)$.

If $t$ is an antenna with natural side $\Az$, then in branch $(A_{ss'\to A},B_{ss'\to A})$ we have $\Delta(B_{ss'\to A})\geq 1$ and one Boundary Reduction is triggered on edges incident to $t'$. Since $t'\notin B_s$, it is easy to see that the execution of the $\sigma$ previous Boundary Reductions on the side of $s$ could not spoil the applicability of this Boundary Reduction. In branch $(A_{ss'\to B},B_{ss'\to B})$, we do not account for any gain on the side of $t$. Thus we arrive at branching vector $[1,2,1+\sigma;1,1,p]$, which is $[1,2,3;1,1,1]$ or better, and hence good.

If $t$ is an antenna with natural side $\Bz$, then in branch $(A_{ss'\to A},B_{ss'\to A})$ we do not account for any gain on the side of $t$. However, in branch $(A_{ss'\to B},B_{ss'\to B})$ we have that $\Delta(A_{ss'\to B})\geq 1$ and one Boundary Reduction is triggered on edges incident to $t'$. Since $t'\neq s'$, again it is easy to see that the execution of the $p$ previous Boundary Reductions on the side of $s$ could not spoil the applicability of this Boundary Reduction. Thus we arrive at branching vector $[1,2,\sigma;1,1,1+p]$, which is $[1,2,2;1,1,2]$ or better, and hence good.

Finally, suppose $t$ is of type $(1,1)$. Then by Lemma~\ref{lem:lb} we infer that $\Delta(B_{ss'\to A})\geq 1$ and $\Delta(A_{ss'\to B})\geq 1$. Hence we have a branching vector $[1,2,\sigma;1,2,p]$ or better, which is good because $\sigma\geq 2$ and $p\geq 1$.

%Cool Observation: decompose $B_s \setminus \{s\}$ as $\Bz$ and $d,c_1,c_2,\ldots,c_r$.

%Let $\sigma$ be the number of immediate Boundary Reductions if $d$ is assigned to $A$-side. $\sigma \geq 1$.

%Consider two cases, depending on whether $d \in A_s$ or not.

%Assume first $d \in A_s$.
%Then $A_s = \Az \cup \{d,s\}$, so there are $p \geq 1$ edges 
%from $d$ to $\Az$. By excess of $A_s$, there are $p+1 \geq 2$
%edges from $d$ to $B_s \setminus \{d,s\}$. By Lemma~\ref{lem:ex2-reduced},
%$\sigma = p+1 \geq 2$ then.

%Branch on usztywniona $ds$: $A$-side: granica $A$ o $1$, $\sigma \geq 2$
%redukcje. $B$-side: granica $B$ o $1$, $p \geq 1$ redukcji.
%W sumie $[1,1;1,2]$ co jest OK zarowno z $[1,0;1,0]$ standard
%jak i $[0,0;1,1]$ glupek.

\paragraph*{Subcase (c.ii): $A_s\setminus\Az=\{s\}$.}

%Thus remains $d \notin A_s$, i.e., $A_s = \{s\} \cup \Az$.

We again investigate the branch on $\{s,t\}$ with fixing $ss'$; for now we do not state that we indeed perform it, because its execution will take place only if the progress will be large enough. Let $(A_{ss'\to A},B_{ss'\to A})$ and $(A_{ss'\to B},B_{ss'\to B})$ be the respective branches, i.e., minimum-cost maximal terminal separations extending $(\Az\cup \{s,s'\},\Bz\cup \{t\})$ and $(\Az\cup \{t\},\Bz\cup \{s,s'\})$, respectively. Of course, if in any of these branches an additional terminal pair gets resolved, then we already have a good branching vector, so assume henceforth that this is not the case. Since $A_s=\Az\cup \{s\}$, by Assumption~\ref{ass:pushing} we infer that $\Delta(A_{ss'\to A})>\Delta(A_s)=1$, because in $A_{ss'\to A}$ at least one more vertex (namely $s'$) is assigned to the $A$-side. As before, in branch $(A_{ss'\to A},B_{ss'\to A})$ we have that $\sigma\geq 1$ Boundary Reductions are triggered inside $B_s$. In branch $(A_{ss'\to B},B_{ss'\to B})$, by Lemma~\ref{lem:lb} we have that $\Delta(B_{ss'\to B})\geq 1$, and we do not account for any Boundary Reductions.

\begin{figure}[H]
	\centering
	\begin{tikzpicture}

\draw[Bs] plot [smooth,tension=0.6] coordinates {(-2,2) (-1.5,1.6) (-1.1,0.7) (-0.5,0.2) (0.5,0.2) (1.2,0.7) (1.5,1.6) (2,2)} to (-2,2);
\draw[As] plot [smooth,tension=1.3] coordinates {(-2,0) (-0.1,0.7) (2,0)} to (-2,0);

\Azero;
\node at (1,-0.2) {$\Az$};
\begin{scope}[shift={(0,2)}] \Bzero; \end{scope}

%nodes
\node[ABT,label=30:$\mathbf{s}$] (s) at (0.3,0.4) {};
\node[B, label=180:$\mathbf{s'}$] (sp) at (-0.3,1) {};
\node[B, label={[label distance=3]left:$\mathbf{c_1}$}] (c1) at (-1.05,1.5) {};
\node[B, label={[label distance=-3]left:$\mathbf{c_2}$}] (c2) at (-0.1,1.5) {};
\node[B, label=0:$\mathbf{c_3}$] (c3) at (0.7,1.5) {};
\node (t) at (2,0.6) {};
%edges s--
\draw[term] (s) to[term] (t);
\draw (s) to (sp);
%edges sp--
\draw (sp) to[bend left=10] (c1);
\draw (sp) to[bend right=10] (c1);
\draw (sp) to (c2);
\draw (sp) to[bend left=10] (c3);
\draw (sp) to[bend right=10] (c3);

\draw (sp) to[out=-155,in=0] (-1.8,0.6);

\draw (sp) to (-0.33,-0.1);
\draw (sp) to (-0.55,-0.1);
%\node at (-0.55,-0.25) {$p$};

%edges ci--
\draw (c1) to (-1.2,2.1);
\draw (c1) to (-1.0,2.1);
\draw (c1) to[bend left] (-2,1);

\draw (c2) to (-0.1,2.1);
\draw (c2) to[out=-30,in=180] (2,1);

\draw (c3) to (0.55,2.1);
\draw (c3) to (0.77,2.1);
\draw (c3) to (0.99,2.1);

\draw (c3) to[out=-40,in=180] (1.9,1.2);
\draw (c3) to[out=-30,in=180] (1.9,1.3);
%\draw (sp) to (-1.7,0.65);
%\draw (sp) to (-1.75,0.8);
%\draw (sp) to (-1.7,0.99);
%\node at (-1.9,0.8) {$x$};

\end{tikzpicture}
\figspace
	\caption{Case (1,1)(c.ii): $\Atr=\Az, \Delta(\Btr)=1, A_s\cap B_s = \{s\}$.}
\label{fig:case-11-cii}
\end{figure}

Let us now investigate what happens in respective branches on the side of terminal $t$, depending on the type of $t$. Suppose first that $t$ is of type $(1,1)$. Then, by Lemma~\ref{lem:lb} it follows that $\Delta(B_{ss'\to A})\geq 1$ and $\Delta(A_{ss'\to B})\geq 1$, and hence together with the account of the progress on the side of $s$, we obtain a branching vector $[1,3,\sigma;1,2,0]$ or better, which is good because $\sigma\geq 1$.

We are left with the case when $t$ is an antenna, where it can be easily verified that the reasoning as above does not lead to a good branching vector without any deeper analysis. We distinguish two subsubcases, depending on the natural side of $t$.

%Try again branch on usztywniona $ds$. Roznica jest taka, ze
%teraz w $A$-side rosnie granica o co najmniej $2$.
%Mamy wiec $[2,\sigma;1,0]$, co jest dobre z $[1,0,1,0]$ standard,
%ale nie jest dobre z glupkiem zarowno z same side, jak i opposide side.

%We have then two cases: either glupek on the same side, or opposide side.

\paragraph*{Subsubcase (c.ii.A): the natural side of $t$ is $\Az$}

Let $t'$ be the unique neighbor of $t$. Since $t$ is an antenna, there are $x$ edges from $t'$ to $\Az$ and $x$ edges from $t'$ to $V(G)$, for some $x\geq 1$.

\newcommand{\Ant}{A_{\text{nt}}}
\newcommand{\Bnt}{B_{\text{nt}}}
\newcommand{\Aunt}{A_{\text{unt}}}
\newcommand{\Bunt}{B_{\text{unt}}}
\newcommand{\Auntext}{A_{\text{unt}}^{\text{ext}}}
\newcommand{\Buntext}{B_{\text{unt}}^{\text{ext}}}

We now introduce a new type of a branching step that we shall call {\em{skewed branching}}. Namely, we will branch into separations $(\Ant,\Bnt)$ and $(\Aunt,\Bunt)$ that are minimum-cost terminal separations extending $(\Az\cup \{t\},\Bz\cup \{s\})$ and $(\Az\cup \{s,s'\},\Bz\cup \{t,t'\})$, respectively. It is easy to see that this branching step is correct, because there is always an optimum integral separation $(\Aopt,\Bopt)$ extending $(\Az,\Bz)$ where (1) $s\in \Bopt$ and $t\in \Aopt$, or (2) $\{s,s'\}\in \Aopt$ and $\{t,t'\}\in \Bopt$. Namely, if neither the first nor the second property is satisfied, then swapping the sides of $s$ and $t$ does not increase the cost of the separation (because the second property is not satisfied), but it makes the first property satisfied.

The reader should think of the skewed branching in the following way. For terminal $t$, the side $\Az$ is the natural side to be assigned to, whereas for $s$ it is $\Bz$ that is more natural. More precisely, in the branch where we have such assignment, we are not able to reason about any Boundary Reductions being triggered. We do, however, hope for a large decrease in the potential in the opposite branch, where both terminals are assigned to their unnatural sides. Therefore, in this unnatural branch we fix both edges $ss'$ and $tt'$ to maximize the progress measured in the potential function, while in the natural branch we do not fix anything, because this would not lead to any profit in the analysis.

Let us now calculate the branching vector that we obtain when we perform the described skewed branching; of course we assume that no other terminal pair gets resolved in either of the branches, because then we immediately obtain a good branching vector. In branch $(\Ant,\Bnt)$, by Lemma~\ref{lem:lb} we have that $\Delta(\Ant)\geq 0$ and $\Delta(\Bnt)\geq 1$, and we do not account for any applications of the Boundary Reduction. In branch $(\Aunt,\Bunt)$, however, we have $\Delta(\Aunt)\geq 2$ by Assumption~\ref{ass:pushing}, because $A_s=\Az\cup \{s\}$, and $\Delta(\Bunt)\geq 2$, by the definition of an antenna and the fact that $\Bz$ is the unnatural side of $t$. Moreover, in this branch $x\geq 1$ Boundary Reductions are applicable to the edges between $t'$ and $\Az$ and $\sigma\geq 1$ Boundary Reductions are applicable within $B_s$. Since $t'\notin B_s$, these applications do not interfere with each other. Thus, we arrive at a branching vector $[1,1,0;1,4,x+\sigma]$, or better. This branching vector is good unless $x=\sigma=1$. Also, even if $x=\sigma=1$ but $\Delta(\Bunt)\geq 3$, then this leads to branching vector $[1,1,0;1,5,2]$ or better, which is good. Thus, we can state the following branching step.

\begin{branching}
Unless $x=\sigma=1$ and $\Delta(\Bunt)=2$, pursue skewed branching into separations $(\Ant,\Bnt)$ and $(\Aunt,\Bunt)$.
\end{branching}

%Niech $t'$ sasiad $t$. Ma $x$ krawedzi do $\Az$ i $x$ do $R$, $x \geq 1$.

%Skewed branch: albo naturalnie koloruj $s,t$, albo przeciwnie
%z oboma usztywnionymi krawedziami.

%Naturalnie: granica o $1$ i nic sie nie wycisnie.

%Przeciwnie: przy $t$ granica o $2$, i $x$ Boundary Reductions.
%Przy $s$ granica o $2$ i $\sigma$ Boundary Reductions. 
%(Te Boundary Reductions sa rozlaczne, bo te przy $t'$ sa incydentne z $\Az$, a zadna krawedz z tych $\sigma$ nie jest.)
%Good unless
%$\sigma = x = 1$ i rzeczywiscie przy $t$ granica urosla o tylko $2$.

Henceforth we assume that $x=\sigma=1$ and $\Delta(\Bunt)=2$. Therefore, the degree of $t'$ in $G$ is equal to $3$, and it is adjacent to $t$, one vertex in $\Az$, and one vertex in $V(G)\setminus (\Az\cup \Bz)$ that shall be whence called $v$.

Consider now the branch $(\Aunt,\Bunt)$, and suppose there is some optimum terminal separation $(\Aopt,\Bopt)$ extending $(\Az,\Bz)$ that conforms to this branch, i.e., it also extends $(\Aunt,\Bunt)$. Suppose that $v\in \Aopt$. Then this is clearly a contradiction with the optimality of $(\Aopt,\Bopt)$, because $t'$ has $2$ neighbors in $\Aopt$ and $1$ in $\Bopt$, so moving it from $\Bopt$ to $\Aopt$ would decrease the cost of the separation. Hence we can assign $v$ greedily to the $B$-side. More precisely, instead of $(\Aunt,\Bunt)$ we will from now on consider terminal separation $(\Auntext,\Buntext)$ defined as the minimum-cost terminal separation extending $(\Az\cup \{s,s'\},\Bz\cup \{t,t',v\})$. In case $s'=v$, the reasoning above shows that the branch where $\{s,s'\}$ is assigned to the $A$-side and $\{t,t'\}$ is assigned to the $B$-side cannot lead to an optimum solution, so we can greedily pursue the branch where $s$ and $t$ are assigned to respective natural sides.

\begin{reductionstep}
If $v=s'$, then recurse into terminal separation $(A_t,B_s)$.
\end{reductionstep}

Hence, from now on we assume that $v\neq s'$ and we branch into $(\Ant,\Bnt)$ and $(\Auntext,\Buntext)$, where the latter is defined as above. As usual, we assume that $(\Auntext,\Buntext)$ does not resolve any new terminal pair, because then we would have a good branching vector. The same reasoning as for $(\Aunt,\Bunt)$ shows that $\Delta(\Auntext)\geq 2$ and $\Delta(\Buntext)\geq 2$. As before, if we had that $\Delta(\Buntext)\geq 3$, then branching into $(\Ant,\Bnt)$ and $(\Auntext,\Buntext)$ would lead to a branching vector $[1,1,0;1,5,2]$ or better. Hence, we can again assume that $\Delta(\Buntext)=2$.

Therefore, a straightforward edge count shows that $B_q=\Buntext\setminus \{t,t'\}$ is a terminal-free $\Bz$-extension of excess $2$. Hence, we can apply Lemma~\ref{lem:ex2-red} to decompose it. By the inapplicability of the Excess-2 Reduction, we have that $B_q\setminus \Bz$ has a decomposition of the form $\{c_1,c_2\}$ or $\{d,c_1,\ldots,c_r\}$ (from now on we drop the earlier notation for the decomposition of $B_s\setminus (\Bz\cup \{s\})$, and use the notation $d,c_1,\ldots,c_r$ for the decomposition of $B_q\setminus \Bz$). Suppose first that $v=c_i$ for some $i$. Then this is a contradiction with the optimality of $(\Auntext,\Buntext)$, because then $\{t,t',c_i\}$ would be a set of excess $1$, so replacing $\Buntext$ with it would decrease the cost of separation $(\Auntext,\Buntext)$. Therefore, $B_q$ has a decomposition of the form $\{d,c_1,\ldots,c_r\}$ where $v=d$. 

By Lemma~\ref{lem:ex2-red}, we have that $|E(v,c_i)|=p_i$, $|E(c_i,\Bz)|=p_i+x_i$ and $|E(c_i,V(G)\setminus B_q)|=x_i+1$, for some integers $p_i\geq 1$ and $x_i\geq 0$. Let $\sigma_2=|E(v,\Bz)|+\sum_{i=1}^r p_i$ be the number of Boundary Reductions triggered within $B_q$ when the vertex $v$ is assigned to the $A$-side. By Lemma~\ref{lem:ex2-Bside}, $\sigma_2\geq 1$.

Before we proceed, we need to resolve a corner case when $s'\in B_q$. We claim that then it is safe to greedily assign $t$ to the $A$-side and $s$ to the $B$-side.

\begin{reductionstep}\label{rdstep:corner2}
If $s'\in B_q$, then recurse with terminal separation $(A_t,B_s)$.
\end{reductionstep}

To argue the correctness of this reduction step, we need to prove that there exists an optimum terminal separation extending $(\Az,\Bz)$ where $s$ is assigned to the $B$-side and $t$ is assigned to the $A$-side. Let us take any optimum terminal separation $(\Aopt,\Bopt)$
that satisfies point 1 of Lemma~\ref{lem:ex2}.
Assume $s\in \Aopt$ and $t\in \Bopt$, as otherwise we are done.
We can further assume that $s'\in \Aopt$ and $t'\in \Bopt$, because otherwise switching the sides of $s$ and $t$ would not increase the cost of the separation, however it would make it satisfy the condition we seek.
Suppose first that $s'=v$; then we have an immediate contradiction, because moving $t'$ from $\Aopt$ to $\Bopt$ would decrease the cost.
Suppose then that $s'=c_i$ for some $i=\{1,2,\ldots,r\}$.
Since $(\Aopt,\Bopt)$ satisfies point 1 of Lemma~\ref{lem:ex2},
we infer that $\Bopt\cap B_q=\Bz$ or $\Bopt\cap B_q=\Bz\cup \{c_j\}$ for some $j\neq i$. In particular, $v\in \Aopt$. This is, however, a contradiction, because moving $t'$ from $\Bopt$ to $\Aopt$ would decrease the cost of the separation. This justifies the correctness of Reduction Step~\ref{rdstep:corner2}.

Whence we assume that $s'\notin B_q$.

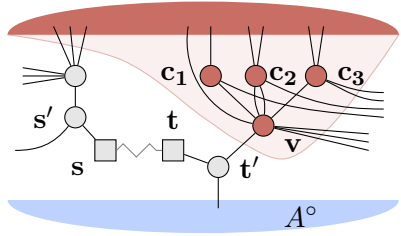
\begin{figure}[H]
	\centering
	\begin{tikzpicture}[yscale=1.1]

\draw[Bs] plot [smooth,tension=0.6] coordinates {(-2.6,2) (-1.1,1.7) (-0,1) (0.5,0.7) (1.1,0.5) (1.7,0.9) (2.3,1.6) (2.6,2)} to (-2.6,2);

\begin{scope}[xscale=1.3]
	\Azero;
	\node at (1,-0.2) {$\Az$};
	\begin{scope}[shift={(0,2)}] \Bzero; \end{scope}
\end{scope}

%nodes
\node[T,label=-170:$\mathbf{s}$] (s) at (-1.3,0.6) {};
\node[V, label=left:$\mathbf{s'}$] (sp) at (-1.7,1) {};
\node[V] (oldc1) at (-1.7,1.5) {};
\node[T, label=above:$\mathbf{t}$] (t) at (-0.4,0.6) {};
\node[V, label=0:$\mathbf{t'}$] (tp) at (0.2,0.4) {};
\node[B, label=-20:$\mathbf{v}$] (v) at (0.8,0.9) {};
\node[B, label=left:$\mathbf{c_1}$] (c1) at (0.1,1.5) {};
\node[B, label={[label distance=-3]right:$\mathbf{c_2}$}] (c2) at (0.7,1.5) {};
\node[B, label=right:$\mathbf{c_3}$] (c3) at (1.5,1.5) {};

%side of s
\draw (s) to (sp);
\draw (sp) to (oldc1);
\draw (sp) to[out=-130,in=0] (-2.5,0.6);

\draw (oldc1) to (-1.55,2.1);
\draw (oldc1) to (-1.77,2.1);
\draw (oldc1) to (-1.99,2.1);

\draw (oldc1) to (-2.4,1.6);
\draw (oldc1) to (-2.4,1.5);
\draw (oldc1) to (-2.4,1.4);

%side of t
\draw[term] (s) to (t);
\draw (t) to (tp);
\draw (tp) to (0.2,-0.1);
\draw (tp) to (v);

\draw (v) to (c1);
\draw (v) to[bend left=15] (c2);
\draw (v) to[bend right=15] (c2);
\draw (v) to (c3);
\draw (v) to[out=170,in=-100] (-0.2,2.1);

\draw (v) to (2.2,0.8);
\draw (v) to (2.2,0.7);
\draw (v) to (2.2,0.6);

\draw (c1) to (0.1,2.1);
\draw (c2) to (0.6,2.1);
\draw (c2) to (0.8,2.1);
\draw (c3) to (1.4,2.1);
\draw (c3) to (1.6,2.1);

\draw (c1) to[out=-25,in=175] (2.4, 1);
\draw (c2) to[out=-25,in=180] (2.4, 1.1);
\draw (c3) to[out=-25,in=180] (2.4, 1.2);
\draw (c3) to[out=-20,in=180] (2.4, 1.3);
%\node at (0.2, -0.25) {$x=1$};

\end{tikzpicture}
\figspace
	\caption{Case (1,1)(c.ii.A), where furthermore $x=\sigma=1$ and $\Delta(\Bunt)=2$. The set $B_q=\Buntext \setminus\{t,t'\}$ of excess 2 is highlighted in red.}
\label{fig:case-11-ciiA}
\end{figure}

We will now branch on vertex $v$. More precisely, we recurse into branches $(A_{v\to A},B_{v\to A})$ and $(A_{v\to B},B_{v\to B})$, defined as minimum-cost terminal separations extending $(\Az\cup \{v\},\Bz)$ and $(\Az,\Bz\cup \{v\})$, respectively.

\begin{branching}
Pursue branching on $v$, that is, recurse into branches $(A_{v\to A},B_{v\to A})$ and $(A_{v\to B},B_{v\to B})$.
\end{branching}

The remainder of the description of this subcase is devoted to proving that the execution of this branching step leads to a good branching vector.

Consider first branch $(A_{v\to A},B_{v\to A})$. By the optimality of $(A_{v\to A},B_{v\to A})$ we infer that $t'\in A_{v\to A}$, because otherwise assigning it to $A_{v\to A}$, or moving from $B_{v\to A}$ to $A_{v\to A}$, would decrease the cost of the separation. Also, we can assume that $t\in A_{v\to A}$ and $s\in B_{v\to A}$ for the following reason. If this terminal pair was not resolved in $(A_{v\to A},B_{v\to A})$, then assigning $t$ to the $A$-side and $s$ to the $B$-side would not increase the cost of the separation while extending $(A_{v\to A},B_{v\to A})$, a contradiction with the maximality of $(A_{v\to A},B_{v\to A})$. However, if $t\in B_{v\to A}$ and $s\in A_{v\to A}$, then we can modify separation $(A_{v\to A},B_{v\to A})$ by switching the sides of $s$ and $t$, and because $t'\in A_{v\to A}$, then this modification does not increase the cost. 

Hence, in branch $(A_{v\to A},B_{v\to A})$ the terminal pair $\{s,t\}$ gets resolved. Since $\{v,t,t'\}\subseteq A_{v\to A}$ and $\{s\}\subseteq B_{v\to A}$, by Lemma~\ref{lem:lb} and the definition of an antenna we obtain that $\Delta(A_{v\to A})\geq 1$ and $\Delta(B_{v\to A})\geq 1$. Notice also that at least $\sigma_2\geq 1$ Boundary Reductions are triggered within $B_q$.

Consider the second branch $(A_{v\to B},B_{v\to B})$. In it, one Boundary Reduction is triggered on the edges incident to $t'$, and the terminal pair $\{s,t\}$ is either resolved by this branch or is immediately removed by the Lonely Terminal Reduction (possibly preceded by the Pendant Reduction that removes $t'$). Hence, $\{s,t\}$ also gets resolved in this branch.

From now on, we assume that neither of the considered branches resolves any terminal pair other than $\{s,t\}$, because then, as argued at the beginning of this section, we would immediately achieve a good branching vector. With this assumption in mind, we now claim that $\Delta(B_{v\to B})\geq 2$. 

\begin{myclaim}\label{cl:even-if-terminal1}
$\Delta(B_{v\to B})\geq 2$.
\end{myclaim}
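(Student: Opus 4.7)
The plan is to show $\Delta(B_{v\to B}) \geq 2$ by case analysis on which terminal of $\{s,t\}$ lies in $B_{v\to B}$. Since we are assuming $\{s,t\}$ is the only terminal pair resolved in this branch (otherwise a good branching vector is already obtained), $B_{v\to B}$ contains exactly one terminal from $\terms'$: either $s$ or $t$, and no others.

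The key preliminary observation, to be used throughout, is that $\Delta(\Bz \cup \{v\}) \geq 2$. This follows by applying Lemma~\ref{lem:ex2-red} to the excess-$2$ $\Bz$-extension $B_q = \Buntext \setminus \{t,t'\}$, noting that its decomposition $B_q \setminus \Bz = \{v, c_1, \ldots, c_r\}$ has $v = d$, and invoking point~4 of that lemma which asserts that $\Bz \cup \{d\}$ has excess strictly greater than one.

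In the first case, $s \in B_{v\to B}$, I would invoke Assumption~\ref{ass:pushing}, noting that $B_{v\to B}$ extends $\Bz$, contains only $s$ among terminals from $\terms'$, and (as I verify) contains the vertex $v \notin B_s$. To check $v \notin B_s$: by the analog of Lemma~\ref{lem:cool} applied to $B_s$ (valid since $\Delta(B_s) = 1$ and $B_s \neq \Bz \cup \{s\}$ in subcase (c.ii)), the set $B_s \setminus (\Bz \cup \{s\}) = \Btr \setminus \Bz$ decomposes as $\{s', \hat c_1, \ldots, \hat c_{\hat r}\}$ with $s' = d$, and each $\hat c_i$ satisfies $\Delta(\Bz \cup \{\hat c_i\}) = 1$ by point~2 of Lemma~\ref{lem:ex2-red}. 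Combined with $v \neq s$ (as $s$ is a terminal) and $v \neq s'$ (excluded by a prior reduction step), and the fact that $v = \hat c_i$ would contradict the key observation $\Delta(\Bz \cup \{v\}) \geq 2$, we conclude $v \notin B_s$. Hence Assumption~\ref{ass:pushing} yields $\Delta(B_{v\to B}) > \Delta(B_s) = 1$, so $\Delta(B_{v\to B}) \geq 2$.

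In the second case, $t \in B_{v\to B}$, the set $B_{v\to B}$ satisfies $\Bz \cup \{t\} \subseteq B_{v\to B} \subseteq V(G)\setminus \Az$, contains only $t$ among terminals from $\terms'$, and strictly contains $\Bz \cup \{t\}$ since $v \in B_{v\to B}$ with $v \neq t$ (by definition $v \in V(G)\setminus (\Az \cup \Bz \cup \{t\})$, and moreover $v$ is a non-terminal, since if $v$ were a terminal then branching on $v$ would resolve an additional pair in branch $(A_{v\to A}, B_{v\to A})$, already giving a good vector). Since $t$ is an antenna with natural side $\Az$, the antenna property applied to $Y = B_{v\to B}$ with $\overline{S} = \Bz$ directly yields $\Delta(B_{v\to B}) \geq 2$. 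The main obstacle to worry about is verifying $v \notin B_s$ in the first case, which requires carefully combining the decomposition of $B_s \setminus \{s\}$ from the symmetric variant of Lemma~\ref{lem:cool} with the key observation about $\Delta(\Bz \cup \{v\})$.
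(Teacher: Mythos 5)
There is a genuine gap at the very first step. You claim that, under the assumption that $\{s,t\}$ is the only terminal pair resolved in the branch, $B_{v\to B}$ must contain exactly one of $s,t$. This does not follow. The phrase ``$\{s,t\}$ gets resolved in this branch'' (as used in the paper just before the claim) includes resolution by the \emph{reduction rules} applied after passing $(A_{v\to B},B_{v\to B})$ to the recursive call: the Boundary Reduction fires on the edges incident to $t'$, possibly followed by a Pendant Reduction removing $t'$ and a Lonely Terminal Reduction removing $\{s,t\}$. None of this requires $s$ or $t$ to lie in $A_{v\to B}\cup B_{v\to B}$. Indeed, since $t'$ has one edge to $\Az$ and one edge to $v\in B_{v\to B}$, assigning $t'$ to either side \emph{increases} the cost of the separation by $1/2$, so the maximal min-cost extension returned by Theorem~\ref{thm:ptime} may well leave $t'$, and hence both $s$ and $t$, unassigned. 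Thus the case ``$B_{v\to B}\setminus\Bz$ is terminal-free'' is a real possibility and must be handled; your proof silently excludes it. Fortunately that case is the easiest: since $v$ is the vertex $d$ in the decomposition of the excess-$2$ terminal-free $\Bz$-extension $B_q$, the $\Bz$-side version of Lemma~\ref{lem:ex2-Aside} applied to the branch with $v\in B_{v\to B}$ immediately gives $\Delta(B_{v\to B})\geq 2$, which is exactly how the paper disposes of it.

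Your treatment of the two cases you do consider is correct, and takes a somewhat different route from the paper's. For $t\in B_{v\to B}$, you invoke the antenna property of $t$ directly (with $Y=B_{v\to B}\supsetneq \Bz\cup\{t\}$), whereas the paper first argues $t'\in B_{v\to B}$ by optimality and then reduces to the terminal-free case via Lemma~\ref{lem:ex2-Aside}; both are sound. For $s\in B_{v\to B}$, you apply Assumption~\ref{ass:pushing} after verifying $v\notin B_s$ (using the decomposition of $B_s\setminus\{s\}$ from Lemma~\ref{lem:cool} together with $\Delta(\Bz\cup\{v\})>1$ from point~4 of Lemma~\ref{lem:ex2-red}); this is shorter and more direct than the paper's argument, which instead supposes $\Delta(B_{v\to B})=1$, shows $B_q\subsetneq B_{v\to B}\setminus\{s\}$ (using connectedness and $s'\notin B_q$), and derives a contradiction from Lemma~\ref{lem:ex2-nested}. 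So once the terminal-free case is added, your proof is a valid and arguably cleaner alternative.
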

\begin{proof}
If $B_{v\to B}$ is a terminal-free extension of $\Bz$, then this follows from Lemma~\ref{lem:ex2-Aside}. We have two cases left to investigate: either $t\in B_{v\to B}$ or $s\in B_{v\to B}$.

In the first case, since $v\in B_{v\to B}$ by the optimality of $(A_{v\to B},B_{v\to B})$ it follows that also $t'\in B_{v\to B}$. But then if $B_{v\to B}$ was an extension of excess at most $1$, then $B_{v\to B}\setminus \{t,t'\}$ would be a terminal-free extension of excess at most $1$, a contradiction with Lemma~\ref{lem:ex2-Aside}.

In the second case, assume for the sake of contradiction that $\Delta(B_{v\to B})=1$ (it cannot happen that $\Delta(B_{v\to B})=0$, because then $(\Az\cup \{t,t'\},B_{v\to B})$ would be an extension of $(\Az,\Bz)$ of the same cost, a contradiction with the maximality of $(\Az,\Bz)$). Then $B_{v\to B}\setminus \{s\}$ is a terminal-free set of excess $2$. Since $v\in B_{v\to B}$, by the optimality of $(A_{v\to B},B_{v\to B})$ we obtain that $c_i\in B_{v\to B}$ for each $i\in \{1,2,\ldots,r\}$, so $B_q\subseteq B_{v\to B}\setminus \{s\}$. 

On the other hand, $s'\in B_{v\to B}$ because $G[B_{v\to B}\setminus \Bz]$ is connected by the same reasoning as in Lemma~\ref{lem:AsConn}. But we are currently working with the assumption that $s'\notin B_q$, so $B_q$ is a strict subset of $B_{v\to B}\setminus \{s\}$. 

Thus, we obtain a contradiction with Lemma~\ref{lem:ex2-nested}. Indeed, from this lemma it follows that $B_q$ consists of two vertices that are adjacent to $\Bz$ and each of them forms an excess-1 extension of $\Bz$, but we know that $v\in B_q$ is the vertex $d$ of the decomposition of $B_q$ and by Lemma~\ref{lem:ex2-red}, $\Delta(\Bz\cup \{v\})>1$.
\end{proof}

We conclude that the considered branching leads to branching vector $[1,2,\sigma_2;1,2,1]$, or better. This vector is good unless $\sigma_2=1$. Also, if in fact $\Delta(A_{v\to A})\geq 3$, then we also arrive at a good branching vector $[1,3,1;1,2,1]$, or better. Hence, from now on assume that $\sigma_2=1$ and $\Delta(A_{v\to A})=2$.

If $\Delta(A_{v\to A})=2$, then $A_{v\to A}\setminus \{t,t'\}$ is a terminal-free extension of $\Az$ of excess $2$. Let us apply Lemma~\ref{lem:ex2-red} to it. Regardless of the form of the decomposition, from Lemma~\ref{lem:ex2-Bside} we infer that in the branch $(A_{v\to B},B_{v\to B})$ at least one Boundary Reduction will be triggered within $A_{v\to A}\setminus \{t,t'\}$. This Boundary Reduction is applied independently of the Boundary Reduction triggered on edges incident to $t'$ that we previously counted in branch $(A_{v\to B},B_{v\to B})$. This gives one additional Boundary Reduction that we did not account for previously, which leads to a good branching vector $[1,2,1;1,2,2]$, or better.

%Niech $v$ to trzeci sasiad $t'$, ten w $R$. W przeciwnym branczu mozna tez wrzucic $v$ to $B$, bo nie ma co kolorowac siebie i tylko jednego sasiada na $B$.
%(To formalnie oznacza, ze probujemy jeszcze raz, z wymuszonym $v$ na $B$ w drugim branczu, i robimy to maksymalizujac strone $B$ w tymze branczu).
%Jesli $v = s'$, to ta argumentacja oznacza, ze przeciwny brancz jest wogole bez sensu, i zachlannie zalewamy naturalnym.

%Niech $\Bz_1$ urosniety zbior $\Bz$ i niech $B = \Bz_1 \setminus \{t',t\}$.
%$B$ to zbior o excess $2$. Rozbieramy.

%Skoro jedyny wymuszony to $v$, to $v = D$ tego rozkladu. Sprobujmy brancza na $v$.
%$A$-side: $\sigma_2$ redukcji, wzrost granicy bialej o $\geq 1$ (wynika z anteny)
%i czarnej przy $s$ o $1$. $B$-side: zalewa sie $B$, wzrost granic czarnej o $2$, jedna Boundary Reduction i wypada Lonely Terminal. W sumie $[2,\sigma_2;2,1]$. Good, unless
%$\sigma_2 = 1$ i granica $\Az$ w $A$-side wzrosla tylko o $1$.

%Niech $\Az_2$ to wzrosniety $\Az$, i $A := \Az_2 \setminus \{t,t'\}$. 
%To zbior o excess $2$, rozbieramy. Jakbysmy nie rozebrali, $v$ bedzie mial polaczenie bezposrednie lub przez wierzcholek do $\Az$,
%co daje dodatkowa redukcje w poprzednim branczu dla $v$ na stronie $B$ (rozna od tej przez $t'$, bo $t' \notin A$).

\paragraph*{Subsubcase (c.ii.B): the natural side of $t$ is $\Bz$}

Recall that we investigated the branch $(A_{ss'\to A},B_{ss'\to A})$ and $(A_{ss'\to B},B_{ss'\to B})$, and we concluded that $\Delta(A_{ss'\to A})\geq 2$, $\Delta(B_{ss'\to B})\geq 1$, and $\sigma\geq 1$ Boundary Reductions are triggered inside $B_s$ in branch $(A_{ss'\to A},B_{ss'\to A})$. From the definition of an antenna we have that $\Delta(A_{ss'\to B})\geq 1$ and one Boundary Reduction is triggered on edges incident to $t'$ in branch $(A_{ss'\to B},B_{ss'\to B})$, when $t$ is assigned to the $A$-side. This gives us branching vector $[1,2,\sigma;1,2,1]$ or better, which is good unless $\sigma=1$. 
Also, note that this branching is good if $s'$ is adjacent to a second terminal different than $s$: due to inapplicability
of the Common Neighbor Reduction, this terminal would belong to a second terminal pair that would get resolved
in the branching.
This justifies the execution of the following branching step.

\begin{branching}
If $\sigma > 1$ or $s'$ is adjacent to a second terminal different than $s$,
then recurse into branches $(A_{ss'\to A},B_{ss'\to A})$ and $(A_{ss'\to B},B_{ss'\to B})$.
\end{branching}

%Glupek generuje $[0,0;1,1]$. It is OK unless $\sigma = 1$.

Recall that from Lemma~\ref{lem:cool} we obtained a decomposition $\{d,c_1,c_2,\ldots,c_r\}$ of $(B_{s}\setminus \{s\})\setminus \Bz$, where $d=s'$ is the unique neighbor of $s$. By Lemma~\ref{lem:ex2-Bside}, if $\sigma=1$ then we have two cases:
\begin{itemize}
\item either $r=0$ and $s'$ has degree $4$: one edge to $\Bz$, one edge to $s$, and two edges to $V(G)\setminus (\Bz\cup \{s\})$;
\item or $r=1$ and $s'$ has degree $3$: one edge to $c_1$, one edge to $s$, and one edge to $V(G)\setminus (\Bz\cup \{s,c_1\})$. Moreover, $c_1$ is incident exactly on the following edges: $1$ edge to $s'$, $x$ edges to $\Bz$ and $x$ edges to $V(G)\setminus \Bz\cup (\{s,s'\})$, for some $x\geq 1$.
\end{itemize}
Let $y=|E(t',\Bz)|$; then $y\geq 1$. We now investigate the cases separately.

%Tu sa dwa podprzypadki: albo $d$ ma po prostu bezposrednia krawedz
%to $\Bz$ i dwie w swiat, albo ma bruzdke $c_1$, pojedyncza krawedz $dc_1$, 
%$x$ krawedzi od $c_1$ do $\Bz$ i $x$ krawedzi z $c_1$ w swiat.

%Niech glupek z drugiej strony ma $u$ sasiada $t$, i $u$ ma $y$ polaczen
%to $\Bz$.

\paragraph*{Subsubsubcase (c.ii.B.1): $r=0$}

We first investigate the possibility of branching on $\{s,t\}$ with fixing $tt'$. That is, we examine branches $(A_{tt'\to B},B_{tt'\to B})$ and $(A_{tt'\to A},B_{tt'\to A})$ that are minimum-cost maximal extensions of $(\Az\cup \{s\},\Bz\cup \{t,t'\})$ and $(\Az\cup \{t,t'\},\Bz\cup \{s\})$, respectively. Of course if any of these branches resolves some terminal pair other than $\{s,t\}$, then we obtain a good branching vector. Hence, assume this is not the case. 

Consider the first branch $(A_{tt'\to B},B_{tt'\to B})$. By Lemma~\ref{lem:lb} we have that $\Delta(A_{tt'\to B})\geq 1$. Also, one Boundary Reduction is triggered on edges incident to $s'$. 

Consider the second branch $(A_{tt'\to A},B_{tt'\to A})$. By the definition of an antenna, we have that $\Delta(A_{tt'\to A})\geq 2$ and by Lemma~\ref{lem:lb} we have that $\Delta(B_{tt'\to A})\geq 1$. Also, $y$ Boundary Reductions are triggered on edges between $t'$ and $\Bz$ in this branch.

This leads to a branching vector $[1,1,1;1,3,y]$ or better, which is good unless $y=1$.
%Also, even if $y=1$ and we had that $\Delta(A_{tt'\to A})\geq 3$, then we would also end up with a branching vector $[1,1,1;1,4,1]$ or better, which is good.\todo{Add these vectors} 
This justifies executing the following step.

\begin{branching}
If $y>1$, then recurse into branches $(A_{tt'\to B},B_{tt'\to B})$ and $(A_{tt'\to A},B_{tt'\to A})$.
\end{branching}

From now on we assume that $y=1$, and, consequently, we have that $t'$ has degree $3$: it neighbors $t$, one vertex in $\Bz$, and one vertex $w$ outside $\Bz\cup \{t\}$. 

We now resolve the corner case when $w=s'$. Then $t'$ is adjacent to $t$, $s'$ and one vertex in $\Bz$, whereas $s'$ is adjacent to $s$, $t'$, one vertex in $\Bz$, and one vertex outside $\Bz\cup \{s,s',t,t'\}$. We claim that we can assign greedily $s$ to the $A$-side and $t$ to the $B$-side. To argue this, take any minimum-cost integral terminal separation $(\Aopt,\Bopt)$ extending $(\Az,\Bz)$, and assume that $s\in \Bopt$ and $t\in \Aopt$. We can further assume that $s'\in \Bopt$ and $t'\in \Aopt$, because otherwise switching the sides of $s$ and $t$ produces an integral terminal separation of no larger cost where $s$ and $t$ are on the sides we aimed for. But then $t'$ has two neighbors in $\Bopt$ and one in $\Aopt$, which is a contradiction with the optimality of $(\Aopt,\Bopt)$ --- moving $t'$ from $\Aopt$ to $\Bopt$ would decrease the cost. This justifies the correctness of the following step.

\begin{figure}[H]
	\centering
	\begin{tikzpicture}[yscale=0.8,xscale=0.9]

%\draw[Bs] plot [smooth,tension=0.6] coordinates {(-2,2) (-1.5,1.6) (-1.1,0.7) (-0.5,0.2) (0.5,0.2) (1.2,0.7) (1.5,1.6) (2,2)} to (-2,2);

\begin{scope}
	\Azero;
	\node at (1,-0.2) {$\Az$};
	\begin{scope}[shift={(0,2)}] \Bzero; \end{scope}
\end{scope}

%nodes
\node[T,label=-170:$\mathbf{s}$] (s) at (-0.7,0.6) {};
\node[V, label=left:$\mathbf{w=s'}$] (sp) at (-1.2,1.4) {};
\node[T, label=-0:$\mathbf{t}$] (t) at (0.7,0.6) {};
\node[V, label=0:$\mathbf{t'}$] (tp) at (1.2,1.4) {};

%side of s
\draw (s) to (sp);
\draw (sp) to (-1.2,2.1);
\draw (sp) to (-2,0.8);

%side of t
\draw[term] (s) to (t);
\draw (t) to (tp);
\draw (tp) to (1.2,2.1);
\draw (tp) to (sp);

\end{tikzpicture}
\figspace
	\caption{Case (1,1)(c.ii.B.1), where furthermore $s'$ is a neighbor of $t'$.}
\label{fig:case-11-ciiB1corner}
\end{figure}
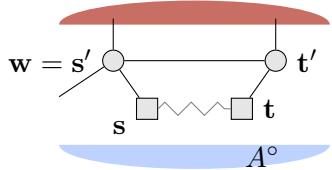

\begin{reductionstep}
If $w=s'$, then recurse into branch $(A_s,B_t)$.
\end{reductionstep}

From now on, we assume that $w\neq s'$.

Suppose now that $|E(w,\Az)|>0$. Then in branch $(A_{tt'\to B},B_{tt'\to B})$ one additional Boundary Reduction is triggered on edges incident to $w$. Since $w\neq s'$, the application of this Boundary Reduction cannot spoil the applicability of the Boundary Reduction on edges incident to $s'$ that we counted in the same branch. This results in branching vector $[1,1,2;1,3,1]$ or better, which is good, so we can do the following.

\begin{branching}
If $|E(w,\Az)|>0$, then recurse into branches $(A_{tt'\to B},B_{tt'\to B})$ and $(A_{tt'\to A},B_{tt'\to A})$.
\end{branching}

Henceforth we assume that $E(w,\Az)=\emptyset$.

\newcommand{\Aext}{A^{\textrm{ext}}_{tt'\to A}}
\newcommand{\Bext}{B^{\textrm{ext}}_{tt'\to A}}

As in Case~(c.ii.A), we argue that in branch $(A_{tt'\to A},B_{tt'\to A})$ we can greedily assign $w$ to the $A$-side. More precisely, instead of $(A_{tt'\to A},B_{tt'\to A})$ we will from now on consider terminal separation $(\Aext,\Bext)$ defined as the minimum-cost terminal separation extending $(\Az\cup \{t,t',w\},\Bz\cup \{s\})$. The argumentation for the correctness of this step is as before: Suppose there is some optimum terminal separation $(\Aopt,\Bopt)$ extending $(\Az,\Bz)$ that conforms to this branch, i.e., it also extends $(A_{tt'\to A},B_{tt'\to A})$. Suppose that $w\in \Bopt$. Then this is clearly a contradiction with the optimality of $(\Aopt,\Bopt)$, because $t'$ has $2$ neighbors in $\Bopt$ and $1$ in $\Aopt$, so moving it from $\Aopt$ to $\Bopt$ would decrease the cost of the separation. 

As usual, $(\Aext,\Bext)$ resolving an additional terminal pair would immediately lead to a good branching vector when branching into $(A_{tt'\to B},B_{tt'\to B})$ and $(\Aext,\Bext)$, so assume this is not the case.

In branch $(\Aext,\Bext)$, by the definition of an antenna, we have that $\Delta(\Aext)\geq 2$ and by Lemma~\ref{lem:lb} we have that $\Delta(\Bext)\geq 1$. Also, a Boundary Reduction is triggered on the edge between $t'$ and $\Bz$ in this branch. Observe that if we in fact had that $\Delta(\Aext)\geq 3$, then branching into $(A_{tt'\to B},B_{tt'\to B})$ and $(\Aext,\Bext)$ results in branching vector $[1,1,1;1,4,1]$ or better, which is good. This justifies the execution of the following.

\begin{branching}
If $\Delta(\Aext)\geq 3$, then recurse into branches $(A_{tt'\to B},B_{tt'\to B})$ and $(\Aext,\Bext)$.
\end{branching}

\begin{figure}[H]
	\centering
	\begin{tikzpicture}[yscale=-1.1]

\draw[As] plot [smooth,tension=0.5] coordinates {(-2.6,2) (-1.1,1.7) (-0.2,1) (0.3,0.7) (1.1,0.5) (1.6,0.7) (2.3,1.6) (2.6,2)} to (-2.6,2);

\begin{scope}[xscale=1.3,yscale=-1,shift={(0,-2)}]
	\Azero;
	\begin{scope}[shift={(0,2)}] \Bzero; \node at (1,0.2) {$\Bz$}; \end{scope}
\end{scope}

%nodes
\node[T,label=-100:$\mathbf{s}$] (s) at (-1.3,0.6) {};
\node[V, label=left:$\mathbf{s'}$] (sp) at (-1.7,0.4) {};
\node[T, label=above:$\mathbf{t}$] (t) at (-0.4,0.6) {};
\node[V, label=0:$\mathbf{t'}$] (tp) at (0.2,0.4) {};
\node[A, label=left:$\mathbf{w}$] (v) at (0.8,0.9) {};
\node[A, label=left:$\mathbf{c_1}$] (c1) at (0.1,1.5) {};
\node[A, label={[label distance=-3]-10:$\mathbf{c_2}$}] (c2) at (0.7,1.5) {};
\node[A, label=-10:$\mathbf{c_3}$] (c3) at (1.5,1.5) {};

%side of s
\draw (s) to (sp);
\draw (sp) to (-1.7,-0.1);
\draw (sp) to (-2.4,0.8);
\draw (sp) to (-2.3,1.1);

%side of t
\draw[term] (s) to (t);
\draw (t) to (tp);
\draw (tp) to (0.2,-0.1);
\draw (tp) to (v);

\draw (v) to (c1);
\draw (v) to[bend left=15] (c2);
\draw (v) to[bend right=15] (c2);
\draw (v) to (c3);

\draw (v) to (2.2,0.5);
\draw (v) to (2.2,0.7);

\draw (c1) to (0.1,2.1);
\draw (c2) to (0.6,2.1);
\draw (c2) to (0.8,2.1);
\draw (c3) to (1.4,2.1);
\draw (c3) to (1.6,2.1);

\draw (c1) to[out=-25,in=175] (2.4, 1);
\draw (c2) to[out=-25,in=180] (2.4, 1.1);
\draw (c3) to[out=-25,in=180] (2.4, 1.2);
\draw (c3) to[out=-20,in=180] (2.4, 1.3);
%\node at (0.2, -0.25) {$x=1$};

\end{tikzpicture}
\figspace
	\caption{Case (1,1)(c.ii.B.1), where furthermore $w\neq s'$ and $\Delta(\Aext)=2$. The set $A_q=\Aext\setminus \{t,t'\}$ of excess 2 is highlighted in blue.}
\label{fig:case-11-ciiB1}
\end{figure}
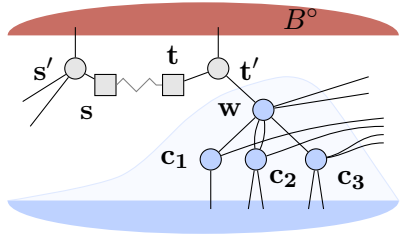

From now on we assume that $\Delta(\Aext)=2$. This means that $A_q=\Aext\setminus \{t,t'\}$ is a terminal-free extension of $\Az$ of excess $2$, and moreover $w\in A_q$. Hence, we can apply Lemma~\ref{lem:ex2-red} to $A_q$ and obtain a decomposition of the form $\{c_1,c_2,\}$ or $\{d,c_1,\ldots,c_r\}$ (we now drop the notation for the decomposition of $B_s$, and use it for the decomposition of $A_q$ instead). By Lemma~\ref{lem:ex2-red}, each $c_i$ is connected with $\Az$ via at least one edge, but we assumed that there is no edge between $w$ and $\Az$. Hence the decomposition has the form $\{d,c_1,\ldots,c_r\}$ and $d=w$. By Lemma~\ref{lem:ex2-Bside}, at least one Boundary Reduction is triggered within $A_q$ in any branch where $w$ is assigned to $B$.

We will pursue branching on vertex $w$.  More precisely, we consider recursing into branches $(A_{w\to A},B_{w\to A})$ and $(A_{w\to B},B_{w\to B})$, defined as minimum-cost terminal separations extending $(\Az\cup \{w\},\Bz)$ and $(\Az,\Bz\cup \{w\})$, respectively.

\begin{branching}
Pursue branching on $w$, that is, recurse into branches $(A_{w\to A},B_{w\to A})$ and $(A_{w\to B},B_{w\to B})$.
\end{branching}

The remainder of this case is devoted to arguing that this branching step leads to a good branching vector.

Consider branch $(A_{w\to A},B_{w\to A})$. Then a Boundary Reduction is triggered on edges incident to $t'$, and consequently the pair $\{s,t\}$ either gets resolved in this branch, or is removed by an application of the Lonely Terminal Reduction (possibly preceded by the Pendant Reduction that removes $t'$). Hence, at least the terminal pair $\{s,t\}$ gets resolved or removed in this branch. 

In branch $(A_{w\to B},B_{w\to B})$ we have that $t'\in B_{w\to B}$ due to the optimality of $(A_{w\to B},B_{w\to B})$. Moreover, we can assume that $t\in B_{w\to B}$ and $s\in A_{w\to B}$ for the following reason. If this terminal pair was not resolved in $(A_{w\to B},B_{w\to B})$, then assigning $t$ to the $B$-side and $s$ to the $A$-side would not increase the cost of the separation while extending $(A_{w\to B},B_{w\to B})$, a contradiction with the maximality of $(A_{w\to B},B_{w\to B})$. However, if $t\in A_{w\to B}$ and $s\in B_{w\to B}$, then we can modify separation $(A_{w\to B},B_{w\to B})$ by switching the sides of $s$ and $t$, and because $t'\in B_{w\to B}$, then this modification does not increase the cost. 

Hence, in both branches, the terminal pair $\{s,t\}$ gets eventually removed. Suppose then that neither $(A_{w\to A},B_{w\to A})$ nor $(A_{w\to B},B_{w\to B})$ resolves another terminal pair, because otherwise we would immediately have a good branching vector. We continue the calculation of the obtained branching vector with this assumption. First, we need an analogue of Claim~\ref{cl:even-if-terminal1}, whose proof is very similar.

\begin{myclaim}\label{cl:even-if-terminal2}
$\Delta(A_{w\to A})\geq 2$.
\end{myclaim}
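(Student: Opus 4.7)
The plan is to mirror the proof of Claim~\ref{cl:even-if-terminal1} with the roles of $\Az$ and $\Bz$ (and correspondingly of $B_q$ and $A_q$) interchanged and $v$ replaced by $w$. I will argue by case analysis on which terminals of $\terms'$ lie in $A_{w\to A}$. Since the branch extends $(\Az\cup\{w\},\Bz)$ we have $w\in A_{w\to A}$, and since the only pair getting resolved in either branch is $\{s,t\}$, the set $A_{w\to A}$ contains at most one of $s,t$.

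First, when $A_{w\to A}\setminus\Az$ is terminal-free, Lemma~\ref{lem:ex2-Aside} applied to the excess-$2$ set $A_q$ with the branch $(A_{w\to A},B_{w\to A})$ (noting $w=d\in A_{w\to A}$) gives $d(A_{w\to A})\geq d(\Az)+2$ directly. For the case $t\in A_{w\to A}$, I will use the structure of Case (c.ii.B.1) with $y=1$: the vertex $t'$ has exactly three neighbors, namely $t$, $w$, and a single vertex of $\Bz$. Two of these lie in $A_{w\to A}$, so optimality of $(A_{w\to A},B_{w\to A})$ forces $t'\in A_{w\to A}$. An elementary edge count then gives $d(A_{w\to A}\setminus\{t,t'\})=d(A_{w\to A})$, so if we had $\Delta(A_{w\to A})\leq 1$, then $A_{w\to A}\setminus\{t,t'\}$ would be a terminal-free $\Az$-extension containing $w=d$ of excess at most $1$, contradicting Lemma~\ref{lem:ex2-Aside}.

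The hard part will be the case $s\in A_{w\to A}$. Assuming for contradiction $\Delta(A_{w\to A})=1$ (the excess-$0$ case is ruled out by the maximality of $(\Az,\Bz)$), I will first force $s'\in A_{w\to A}$ by a connectivity argument in the style of Lemma~\ref{lem:AsConn}: otherwise $A_{w\to A}\setminus\{s\}$ would be a terminal-free $\Az$-extension of excess $0$ properly containing $\Az$ (because of $w$), contradicting Lemma~\ref{lem:ex0}. With $s'\in A_{w\to A}$, removing $s$ increases the cut by $1$, so $A_{w\to A}\setminus\{s\}$ is a terminal-free $\Az$-extension of excess $2$. Optimality of $(A_{w\to A},B_{w\to A})$ then forces each $c_i$ of $A_q$'s decomposition into $A_{w\to A}$, giving $A_q\subseteq A_{w\to A}\setminus\{s\}$. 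To make this inclusion strict I will argue $s'\notin A_q$: the earlier case analysis ensures $w\neq s'$, and in Case (c.ii.B) the vertex $s'$ has no edge to $\Az$ (otherwise the Boundary Reduction would apply together with the edge from $s'$ to $\Bz$), so $s'$ cannot equal any $c_i$, each of which has an edge to $\Az$ by Lemma~\ref{lem:ex2-red}(2). Lemma~\ref{lem:ex2-nested} applied to the strict inclusion $A_q\subsetneq A_{w\to A}\setminus\{s\}$ of excess-$2$ extensions then forces $A_q\setminus\Az$ to equal two vertices $c_i',c_j'$ of the decomposition of $A_{w\to A}\setminus\{s\}$; but $w=d$ of $A_q$'s decomposition satisfies $\Delta(\Az\cup\{w\})\geq 2$ by Lemma~\ref{lem:ex2-red}(4), whereas each $c_i'$ gives an excess-$1$ extension of $\Az$, a contradiction. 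The principal obstacle is assembling these three ingredients — the connectivity argument forcing $s'\in A_{w\to A}$, the optimality-driven containment $A_q\subseteq A_{w\to A}\setminus\{s\}$, and the exclusion $s'\notin A_q$ coming from the antenna structure in Case (c.ii.B) — so that Lemma~\ref{lem:ex2-nested} bites.
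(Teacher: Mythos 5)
Your proposal is correct and follows essentially the same route as the paper's proof: the same three-way case split on whether $A_{w\to A}\setminus\Az$ is terminal-free, contains $t$, or contains $s$, the same use of Lemma~\ref{lem:ex2-Aside} in the first two cases, and the same chain of deductions in the $s\in A_{w\to A}$ case (force $s'\in A_{w\to A}$, force $A_q\subseteq A_{w\to A}\setminus\{s\}$ by optimality, establish $s'\notin A_q$, and invoke Lemma~\ref{lem:ex2-nested}). The only differences are minor alternate justifications for two intermediate steps: you derive $s'\in A_{w\to A}$ directly from an edge count plus Lemma~\ref{lem:ex0} rather than the paper's appeal to the connectivity argument of Lemma~\ref{lem:AsConn}, and you close the final contradiction via $\Delta(\Az\cup\{w\})\geq 2$ from Lemma~\ref{lem:ex2-red}(4) — which mirrors the closing of Claim~\ref{cl:even-if-terminal1} — whereas the paper uses the earlier-established fact $E(w,\Az)=\emptyset$; both are valid, and yours makes the two claims more parallel.
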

\begin{proof}
If $A_{w\to A}$ is a terminal-free extension of $\Az$, then this follows from Lemma~\ref{lem:ex2-Aside}. We have two cases left to investigate: either $t\in A_{w\to A}$ or $s\in A_{w\to A}$.

In the first case, since $w\in A_{w\to A}$ by the optimality of $(A_{w\to A},B_{w\to A})$ it follows that also $t'\in A_{w\to A}$. But then if $A_{w\to A}$ was an extension of excess at most $1$, then $A_{w\to A}\setminus \{t,t'\}$ would be a terminal-free extension of excess at most $1$, a contradiction with Lemma~\ref{lem:ex2-Aside}.

In the second case, assume for the sake of contradiction that $\Delta(A_{w\to A})=1$ (it cannot happen that $\Delta(A_{w\to A})=0$, because then $(A_{w\to A},\Bz\cup \{t,t'\},)$ would be an extension of $(\Az,\Bz)$ of the same cost, a contradiction with the maximality of $(\Az,\Bz)$). Then $A_{w\to A}\setminus \{s\}$ is a terminal-free set of excess $2$ that contains $w$. Since $w\in A_{w\to A}$, by the optimality of $(A_{w\to A},B_{w\to A})$ we obtain that $c_i\in A_{w\to A}$ for each $i\in \{1,2,\ldots,r\}$, so $A_q\subseteq A_{w\to A}\setminus \{s\}$. 

On the other hand, $s'\in A_{w\to A}$ because $G[A_{w\to A}\setminus \Az]$ is connected by the same reasoning as in the proof of Lemma~\ref{lem:AsConn}. However, observe that $s'\notin A_q$. Indeed, we are working with the assumption that $s'\neq w$, and moreover $s'\neq c_i$ for each $i=1,2,\ldots,r$ because otherwise the Boundary Reduction would apply to $s'$. Hence $A_q$ is a strict subset of $A_{w\to A}\setminus \{s\}$. 

Thus, we obtain a contradiction with Lemma~\ref{lem:ex2-nested}. Indeed, from this lemma it follows that $A_q$ consists of two vertices that are adjacent to $\Az$, but we know that $w\in A_q$ and $E(w,\Az)=\emptyset$.
\end{proof}

Observe that in branch $(A_{w\to A},B_{w\to A})$ we also have one Boundary Reduction triggered on the edges incident to $t'$.

On the other hand, in branch $(A_{w\to B},B_{w\to B})$ we have $\Delta(A_{w\to B})\geq 1$ by Lemma~\ref{lem:lb} because $s\in A_{w\to B}$. Also, $\Delta(B_{w\to B})\geq 1$ by the definition of an antenna and the fact that $\{w,t,t'\}\subseteq B_{w\to B}$. Finally, one Boundary Reduction is triggered on edges incident to $s'$ and one Boundary Reduction is triggered within $A_q$. Since $s'\neq w$, the application of one of these reductions cannot spoil the applicability of the other.

Thus we obtain branching vector $[1,2,1;1,2,2]$ or better, which is a good branching vector.

%Podprzypadek 1: nie ma $c_1$. Probujemy brancza na usztywnionej $ut$, maksymalizujac strone na ktora malujemy $ut$.
%$A$-side: $A$-granica o $2$ (wynika z unikalnosci oryginalnego zalania), $B$-granica przy $t$ o jeden, $y$ Boundary Reduction. $B$-side: $A$-granica przy $s$ o $1$, jedna Boundary Reduction przy $s$. It is good unless rzeczywiscie w $A$-side granica rosnie tylko o $2$,
%  oraz $y=1$.

%Niech $w$ dodatkowy sasiad $u$, mozemy zalozyc, ze w $A$-branczu tez go wymuszamy (tj., znow restartujemy, maksymalizujac strone na ktora malujemy $u$ i $t$). Dodatkowo,
%mozna zalozyc, ze $w$ nie ma krawedzi do $\Az$, bo inaczej w branczu $ut$ koloru $B$ daje dodatkowa Boundary Reduction.

%Niech $\Az_1$ zbior po urosnieciu (maksymalny), $A = \Az_1 \setminus \{u,t\}$. Zbior o excess $2$, nietrywialny bo ma $w$,
%rozkladamy. Skoro $w$ nie ma krawedzi do $\Az$, to jest zbiorem $D$ z tego rozkladu, i ma co najmniej jedna bruzdke.

%Branczujemy na $w$, maksymalizujac strone $w$. W $A$-side, granica $A$ rosnie o $2$ (bo $w$ to zbior $D$ w rozkladzie), odpala sie boundary reduction na $u$ i Lonely Terminal.
%W $B$-side, granica $B$ przy $t$ rosnie o co najmniej $1$ (wynika z unikalnosci oryginalnego malowania) oraz bruzdka daje dodatkowa Boundary Reduction, agranica $A$ przy $s$ rosnie o $1$ i jest Boundary %Reduction.
%Czyli jest $[2,1;2,2]$ i jest dobrze.

\paragraph*{Subsubsubcase (c.ii.B.2): $r=1$.}

Recall that $s'$ has degree $3$ and has one edge to $c_1$, one edge to $s$, and one edge to $V(G)\setminus (\Bz\cup \{s,c_1\})$, whereas $c_1$ has $x$ edges to $\Bz$ and $x$ edges to $V(G)\setminus (\Bz\cup \{s,c_1\})$, for some $x\geq 1$. Let $v$ be the neighbor of $s'$ other than $c_1$ and $s$.

First, note that $v$ is not a terminal, as we have already excluded the case 
when $s'$ is adjacent to a second terminal different than $s$.

We claim that there is an optimum integral terminal separation $(\Aopt,\Bopt)$ extending $(\Az,\Bz)$ where vertices $s'$ and $v$ are assigned to the same side. Take any optimum integral separation $(\Aopt,\Bopt)$ and suppose that $s'$ and $v$ are assigned to different sides.

Assume first that $s'\in \Aopt$ and $v\in \Bopt$. Then $s\in \Aopt$ and $c_1\in \Aopt$ because otherwise moving $s'$ from $\Aopt$ to $\Bopt$ would decrease the cost of the separation. Hence $t\in \Bopt$, and again by the optimality of $(\Aopt,\Bopt)$ we have $t'\in \Bopt$. Consider a modified integral terminal separation $(\Aopt_m,\Bopt_m)$ obtained from $(\Aopt,\Bopt)$ by moving $\{c_1,s'\}$ from the $A$-side to the $B$-side. Recall that $c_1$ has $x$ edges to $\Bz$ and $x$ edges going outside of $\Bz\cup \{c_1,s'\}$. Then it is easy to see that the cost of $(\Aopt_m,\Bopt_m)$ is not larger than the cost of $(\Aopt,\Bopt)$, whereas $s'$ and $v$ are both assigned to the $B$-side.

Assume second that $s'\in \Bopt$ and $v\in \Aopt$. Similarly as before, by the optimality of $(\Aopt,\Bopt)$ we have that $s\in \Bopt$ and $c_1\in \Bopt$. Consequently, $t\in \Aopt$. If we had that $t'\in \Bopt$, then moving $t$ from $\Aopt$ to $\Bopt$ and $\{s,s'\}$ from $\Bopt$ to $\Aopt$ would strictly decrease the cost of the separation, a contradiction with the optimality of $(\Aopt,\Bopt)$. Hence we have that $t'\in \Aopt$. Consider a modified integral terminal separation $(\Aopt_m,\Bopt_m)$ obtained from $(\Aopt,\Bopt)$ by moving $\{s,s'\}$ from the $B$-side to the $A$-side, and moving $\{t,t'\}$ from the $A$-side to the $B$-side. Recall that $t'$ has $y$ edges going to $\Bz$ and $y$ edges going outside of $\Bz\cup \{t,t'\}$. Hence it is easy to see that the cost of $(\Aopt_m,\Bopt_m)$ is not larger than the cost of $(\Aopt,\Bopt)$, whereas $s'$ and $v$ are both assigned to the $A$-side.

This justifies the execution of the following reduction in this case.

\begin{reductionstep}
Merge $s'$ and $v$.
\end{reductionstep}

As the case study is exhaustive, this finishes the description of the branching algorithm. We hope that the reader shares the joy of the writer after getting to this line.

%Podprzypadek 2: jest $c_1$. Analiza Excessu $\{d,c_1\}$ daje, ze $d$ 
%ma stopien 3: niech $v$ dodatkowy sasiad $d$.

%Argumentujemy, ze $vd$ jest sciagalna. Jesli $v \in B$ i $d \in A$, to oplaca
%sie przekolorowac na $B$ wierzcholki $d$ i ewentualnie $c_1$.
%Jesli $v \in A$ i $d \in B$, to oplaca sie przekolorowac $d$ i $s$ na $A$,
%bo to nic nie kosztuje z tej strony, a tylko moze pomoc po stronie $t$.

\section{Conclusions}\label{sec:conc}
In this work we have developed an algorithm for \edgebip that has running time $\Oh(1.977^k\cdot nm)$, which is the first one to achieve a dependence on the parameter better than $2^k$. Our result shows that in the case of \edgebip the constant $2$ in the base of the exponent is not the ultimate answer, as is conjectured for {\sc{CNF-SAT}}. Also, it improves some recent works where the FPT algorithm for \edgebip is used as a black-box~\cite{KolayP0S16}. However, our work leaves a number of open questions that we would like to highlight.

\begin{itemize}
\item Reducing the dependence on the parameter from $2^k$ to $1.977^k$ can be only considered a ``proof of concept'' that such an improvement is possible. Even though we believe that it is an important step in understanding the optimal parameterized complexity of graph separation problems, we put forward the question of designing a reasonably simple algorithm with the running time dependence on the parameter substantially better than $2^k$. Last but not least, it is also interesting whether the dependence of the running time on the input size can be improved from $\Oh(nm)$ to linear.
\item There is a simple reduction that reduces back \termsep to \edgebip
without changing the size of the cutset.\footnote{We thank an anonymous reviewer for suggesting this reduction.} Given a \termsep instance:
$(G,\terms,(\Az,\Bz),k)$:
\begin{enumerate}
\item subdivide once every edge of $G$;
\item add a new terminal pair $(a_0,b_0)$, connect $a_0$ to every vertex of $\Az$
with $k+1$ parallel edges, and connect $b_0$ to every vertex of $\Bz$ with $k+1$ parallel edges;
\item for every terminal pair $(s,t)$, including $(a_0,b_0)$, connect $s$ and $t$
with $k+1$ parallel edges.
\end{enumerate}
Recall that our approach can be summarized as follows: having observed that \termsep admits a simple $\Ohstar(2^{|\terms|})$-time algorithm and an $\Ohstar(4^k)$-time algorithm using the CSP-guided technique of Wahlstr\"om~\cite{magnus}, we develop an algorithm for a joint parameterization $(|\terms|,k)$ that for $|\terms|=k+1$ achieves running time $\Ohstar(1.977^k)$. 
The aforementioned reduction yields an $\Ohstar(1.977^k)$-time algorithm for \termsep. The remaining question is: can \termsep be solved in time $\Ohstar(c^{|\terms|})$ for some $c<2$? Maybe one can prove matching lower bounds under the Strong Exponential Time Hypothesis (SETH), or under the Set Cover Conjecture (SeCoCo)~\cite{secoco}?
\item Finally, we would like to reiterate two related open questions. 
\begin{itemize}
\item The currently fastest algorithm for \oct runs in time $\Ohstar(2.3146^k)$~\cite{LP-guided-dl}. It is reasonable to expect that this base of the exponent is an artifact of the technique. Is it possible to design an algorithm with running time simply $\Ohstar(2^k)$?
\item In their work, Cygan et al.~\cite{above-LP} presented an algorithm for {\sc{Node Multiway Cut}} with running time $\Ohstar(2^k)$, based on the idea of LP-guided branching. Is it possible to obtain an algorithm with running time $\Ohstar(c^k)$ for some $c<2$? As shown by Cao et al.~\cite{CaoCF14}, this is indeed the case for the edge deletion variant.
\end{itemize}\end{itemize}
%Reasonable below $2^k$

%Linear time below $2^k$

%Terminal separation with one parameter. Optimality under SETH? 

%OCT in $2^k$

%NMWC below $2^k$

\bibliographystyle{abbrv}
\bibliography{edge-bip}

\end{document}